\title{Cubing for Tuning}
\author{
Haoze Wu$^1$ \and
Clark Barrett$^2$ \and 
Nina Narodytska$^3$
\affiliations
$^1$ Amherst College
$^2$ Stanford University\\
$^3$ VMware by Broadcom\\
\emails
hwu@amherst.edu,
barrett@cs.stanford.edu,
n.narodytska@gmail.com
}
\begin{document}

\maketitle

%

\begin{abstract}

We are exploring the problem of building an automated reasoning procedure that adaptively tunes the high-level solving strategy for a given problem. There are two main distinctive characteristics of our approach: tuning is performed solely online, unlike the common use of tuning as an offline process; and tuning data comes exclusively from the given instance, so we do not rely on the availability of similar benchmarks and can work with unique challenging instances. Our approach builds on top of the divide-and-conquer paradigm that naturally serves partitioned sub-problems for an automated tuning algorithm to obtain a good solving strategy. We demonstrate performance improvement on two classes of important problems\,--\,SAT-solving and neural network verification\,--\,and show that our method can learn unconventional solving strategies in some cases.



\end{abstract}

\section{Introduction}\label{sec:intro}

Algorithmic tuning is a common practice among automated reasoning practitioners seeking to optimize a solver’s performance on a specific problem domain. Traditionally, tuning is regarded as an offline process, where a set of benchmarks is selected and evaluated against a set of candidate solving strategies, with the goal of learning a good solving strategy to be used for solving new problems. However, offline tuning may not always be feasible\,--\,for example, if the solver is used on a single, unique problem instance, or if the practitioner lacks the expertise to navigate the space of possible candidate strategies. Motivated by this observation, we ask the following question: 
\emph{can a solver find a good solving strategy for a given problem using the problem itself as a source of learning?}

Our key high-level idea is to view the given problem as a generator of representative sub-problems and use them to perform strategy tuning. To perform sub-problem generation, we take advantage of the powerful divide-and-conquer paradigm, which is the state-of-the-art approach for solving challenging problems in many domains. For example, in SAT-solving, a technique called cube-and-conquer (\cnc)~\cite{heule2011cube} operates by partitioning a challenging SAT problem into thousands of sub-problems with look-ahead techniques and solving the sub-problems in a parallel manner. We exploit the synergy between algorithmic strategy tuning and the cube-and-conquer approach to answer our main question. First, the formula is partitioned into a set of sub-formulas. Next, small subsets of suitable sub-formulas are selected for tuning and validating the solving strategies. Finally, the learned strategy is used for solving the remaining sub-problems. 

We present our approach, termed \sys (Tuning Algorithm Configuration Online), as a general solving procedure. We instantiate \sys for the solving of SAT formulas and neural network verification queries, and show that our approach consistently boosts the performance of a \cnc-based solving procedure on a range of challenging benchmarks. Additionally, on those benchmarks, our prototype compares favorably against state-of-the-art solvers~\cite{saoudi2024pl,wu2024marabou} in respective domains, contributing several unique solutions. A preview of our results on a single benchmark is shown in Figure~\ref{fig:cherry}. Each point corresponds to a single sub-problem (i.e., cube). We see that \sys discovered a parameter setting that reduces the number of conflicts on a small set of selected cubes (left figure); the conflict-reducing effect of \sys generalizes to the remaining unsolved sub-problems (middle figure); and this reduction in conflicts translates to a reduction in runtime (right figure). 

To summarize, the contribution of this paper includes:
\begin{itemize}
\item the proposal to learn solving strategies online solely from the given problem; 
\item a general methodology, \sys, that leverages formula partitioning algorithm to efficiently customize solver strategy for a given query on the fly;
\item instantiation of the methods in two automated reasoning settings\,--\,SAT-solving and neural network verification;
\item experiments that show the practical benefit of \sys; and
\item discussions of unexpected instance-specific strategies discovered by \sys.
\end{itemize}

\begin{figure*}[t]
    \centering
    \includegraphics[width=0.9\linewidth]{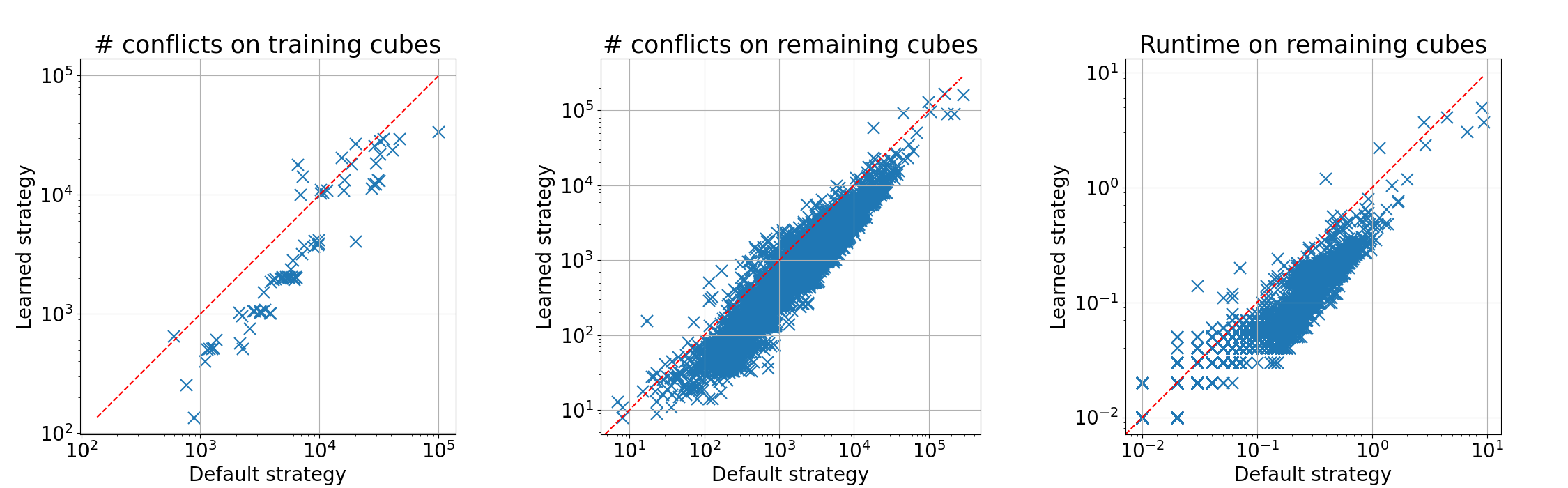}
    \vspace{-0.2cm}
    \caption{\sys learned a new strategy on a challenging SAT benchmark (\benchmark{eq.atree.braun.13}). 
    \cnc with \sys solved the benchmark in 139 seconds, while it took plain \cnc, \painless, and sequential \kissat 242, 1088, 5279 seconds, respectively. The benchmark deals with hardware multiplier equivalence checking and was found especially suitable to be tackled by \cnc~ \protect\cite{heule2011cube}.
    \label{fig:cherry}}
    \vspace{-0.3cm}
\end{figure*}

\section{Methodology}\label{sec:methodology}
This section describes our self-driven strategy learning approach.
Before we start, we briefly recall the \emph{cube-and-conquer} (\cnc) technique, as \sys is based on it. \cnc partitions the input formula \formula into a set of sub-formulas of the the form $\formula \land \cube$, where $\cube$, referred to as a \emph{cube}, is a conjunction of \emph{literals}. The goal of \cnc is to create a number of easier sub-problems that can be potentially processed in parallel.  For example, if \formula is a SAT formula and $p_1$, $p_2$ are propositional variables, then a cube \cube is $(p_1 \land p_2)$ and the corresponding sub-formula is $\set{\formula \land (p_1 \land p_2)}$. The set of obtained sub-formulas have the important property that their disjunction is equi-satisfiable with the original formula, i.e., \formula is satisfiable iff $\left(\bigvee_{\cube\in \cubes}\formula\land\cube\right)$ is satisfiable. In the example above, a valid partition over $p_1$ and $p_2$ would contain four cubes and the corresponding sub-formulas are: $\set{\formula \land (p_1 \land p_2), \formula \land (p_1 \land \neg p_2), \formula \land (\neg p_1 \land p_2), \formula \land (\neg p_1 \land \neg p_2)}$.
Partitioning strategies for different logical theories have been studied in the past in the context of divide-and-conquer-based SAT/SMT-solving~\cite{heule2011cube,hyvarinen2015search}. 
We use \cubes to denote the set of cubes produced by the partitioning strategy. 

\subsection{Overview}

Figure~\ref{fig:overview} provides a high-level overview of the proposed workflow, which consists of three main stages: cubing, strategy learning, and solving. The cubing and solving stages are standard,
while the strategy learning stage is new.

\paragraph{\textbf{Cubing.}} In the cubing stage, the input formula \formula is partitioned into a set of sub-formulas. For SAT-solving, a popular cubing technique is look-ahead~\cite{heule2009look}, which is intended for creating sub-problems that are both easier and \emph{balanced} in terms of difficulty.
Upon generating sub-formulas, a \cnc-based solver would typically proceed to the solving stage, where pre-determined solving strategies are used to solve each of the sub-formulas. In contrast, we observe that cubing opens up the possibilities for tuning the solver on the fly. Based on this observation, we introduce an online strategy learning phase. 

\paragraph{\textbf{Strategy learning.}}  The strategy learning stage consists of two phases: tuning and validation.  The tuning phase optimizes the solving strategy, while the validation phase assesses the performance of a proposed new strategy.

We start by describing the tuning phase (the \textbf{tuning} box in Figure~\ref{fig:overview}). 
To perform strategy learning, first, we need a set of cubes to evaluate different candidate solving strategies. Importantly, the cubes need to be \emph{sufficiently challenging}\,--\,to differentiate solving strategies, but also \emph{sufficiently easy}\,--\,to incur only small overhead when evaluated repeatedly. Cube difficulty is measured by a cost metric (e.g., time or number of conflicts it takes to solve the cube). We propose a procedure \algCollect to obtain such a set of cubes. We describe \algCollect in Sec.~\ref{subsec:collection}. At this stage, the \algCollect procedure outputs a set of cubes \cubesTune and passes them to the tuning procedure \funcTune. The goal of \funcTune is to optimize the solving strategy for \cubesTune by searching over a strategy space \strategyspace with respect to a cost metric. We focus on cases where \strategyspace is formed using a set of configurable parameters for the solving procedure. For example, Tab.~\ref{tab:strategyspace-kissat} shows the parameters that are used to form the strategy space for a SAT solver. As a result of this phase, we obtain an optimized strategy \strategy. We describe \funcTune in Sec.~\ref{subsec:strategy}.

Next, we discuss the validation phase (the \textbf{validation} box in Figure~\ref{fig:overview}). In this phase, we collect a different subset of cubes \cubesVal with \algCollect and use \cubesVal to validate the learned strategy \strategy. Depending on the validation results, \funcValidate either accepts the strategy \strategy found in the tuning phase as the final solving strategy \strategyFinal, or rejects \strategy and sets \strategyFinal to a fallback strategy. We chose to fallback to a sequential portfolio strategy that includes \strategy and the default strategy, but other plausible choices exist, as discussed in Sec.~\ref{subsec:taco}.

\paragraph{\textbf{Solving.}}  In the final solving stage, currently untackled cubes \cubes are solved using the learned strategy \strategyFinal. If all the sub-formulas are unsatisfiable, then the original formula is unsatisfiable; and if one of the sub-formulas is satisfiable, then the original formula is satisfiable.

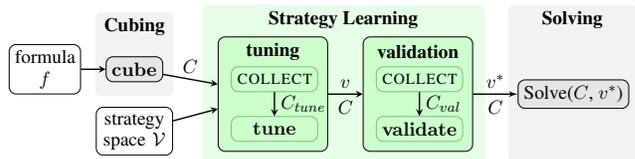
\begin{figure}[t]
\centering
\resizebox{\linewidth}{!}{%
\begin{tikzpicture}[node distance=0.5cm]

\tikzstyle{inputf} = [rectangle, rounded corners, 
text centered, 
draw=black, 
fill=white!10]

\tikzstyle{stepthrough} = [rectangle, rounded corners, 
minimum width=1.1cm, 
text centered, 
draw=black, 
fill=gray!20]

\tikzstyle{stepthroughG} = [rectangle, rounded corners, 
minimum width=2.2cm, 
minimum height=2.3cm,
text centered, 
draw=black,
fill=green!20]

\tikzstyle{stepthroughGSmall} = [rectangle, rounded corners, 
minimum width=1.3cm, 
minimum height=0.5cm,
text centered, 
draw=green!50!black!50, 
fill=green!20]

\tikzstyle{stepthroughBN} = [rectangle, rounded corners, 
minimum width=1cm, 
minimum height=0.7cm,
text centered, 
draw=black, 
fill=blue!20]

\tikzstyle{blockA} = [rectangle,
 rounded corners, 
minimum width=2.6cm, 
minimum height=3.2cm, 
text centered, 
draw=white]

\tikzstyle{blockC} = [rectangle,
 rounded corners, 
minimum width=1.6cm, 
minimum height=1.7cm, 
text centered, 
draw=white]

\tikzstyle{blockB} = [rectangle,
 rounded corners, 
minimum width=5.7cm, 
minimum height=3.2cm, 
text centered, 
text width=4.7cm,
draw=white]

\node (formula) [inputf, font=\large, text width=1.2cm, yshift=0.5cm] {formula \formula};
\node (cubing) [stepthrough, font=\large, right=of formula, xshift=0cm, yshift=0cm] {$\funcCube$};
\node (learn) [stepthroughG, font=\large, right=of cubing, xshift=0.6cm, yshift=-0.5cm, text width=1.9cm] {};
\node (learnCollect) [stepthroughGSmall, font=\large, right=of learn, xshift=-2.45cm, yshift=0.3cm, text width=1.5cm] {\algCollect};
\node (learnTune) [stepthroughGSmall,font=\large, right=of learn, xshift=-2.45cm, yshift=-0.7cm, text width=1.5cm] {\funcTune};
\node (validate) [stepthroughG, font=\large, right=of learn, xshift=0.2cm, yshift=0cm, text width=1.6cm] {};
\node (valCollect) [stepthroughGSmall, font=\large, right=of validate, xshift=-2.45cm, yshift=0.3cm, text width=1.5cm] {\algCollect};
\node (valTune) [stepthroughGSmall,font=\large, right=of validate, xshift=-2.45cm, yshift=-0.7cm, text width=1.5cm] {\funcValidate};

\node (solve) [stepthrough, font=\large,right=of validate, xshift=0.4cm, yshift=0cm] {{Solve(\cubes, \strategyFinal)}};
\node (spaceV) [inputf, font=\large, yshift=0cm,   xshift=0cm, below=of cubing, text width=1.3cm] {strategy space \strategyspace};

\begin{scope}[on background layer]
\node (cubingBG) [blockC,  yshift= -0.2cm, fill=gray!10, above of=cubing]{}; 
\end{scope}
\begin{scope}[on background layer]
\node (learningBG) [blockB, font=\large,  right of=cubingBG,  xshift= 3.7cm, yshift=-0.6cm, fill=green!10]{};
\end{scope}
\begin{scope}[on background layer]
\node (solvingBG) [blockA, font=\large,  right of=cubingBG,  xshift= 8.3cm, yshift=-0.6cm, fill=gray!10]{};
\end{scope}

\draw [arrow] (formula) -- (cubing);
\draw [arrow] (cubing) --  node [above] {\large{\cubes}}  (learn) ;
\draw [arrow] (learn) -- node [above] {\large{\strategy}}  node [below] {\large{\cubes}} (validate);
\draw [arrow] (spaceV) --  (learn);

\draw [arrow] (learnCollect) --  node [right] {\large{\cubesTune}} (learnTune);
\draw [arrow] (valCollect) --  node [right] {\large{\cubesVal}} (valTune);

\tikzstyle{line}     = [draw, -latex']

\path [line, thick] 
            ([yshift=0cm]validate.east) --  node [above] {\large{\strategyFinal}}  node [below] {\large{\cubes}}([yshift=0cm]solve.west);


\node at ($(cubingBG.north) + (0,-0.3)$) {\textbf{\large Cubing}};
\node at ($(learningBG.north) + (0,-0.3)$) {\textbf{\large Strategy Learning}};
\node at ($(solvingBG.north) + (0,-0.3)$) {\textbf{\large Solving}};
\node at ($(learn.north) + (0,-0.3)$) {\large{\textbf{tuning}}};
\node at ($(validate.north) + (0,-0.3)$) {\large{\textbf{validation}}};

\end{tikzpicture}

}
    \vspace{-0.4cm}
\caption{Overview of the \sys-based solving procedure.} \label{fig:overview}
    \vspace{-0.4cm}
\end{figure}

\subsection{Collecting Suitable Cubes}\label{subsec:collection}

Designing a robust method to collect a suitable set of cubes for strategy learning is not straightforward. Simply taking a random subset of the initial partition can run into a number of pitfalls. First, if all the sampled cubes are trivial (e.g., can be solved with very low cost), then they cannot effectively distinguish different solving strategies, as each strategy would have a similar (if not the same) cost. On the other hand, if the sampled cubes contain very challenging cubes, then solving them repeatedly with different solving strategies would incur a large overhead. We argue that an ideal cube collection procedure should take the cube difficulty (as defined by a cost range) into account and \emph{guarantee} to return a given number of cubes of the specified difficulties. Such cubes are informative enough to distinguish between candidate strategies yet efficient to evaluate. Alg.~\ref{alg:createCubes} describes such a cube collection procedure, referred to as \algCollect.

\begin{algorithm}[t]
\small
\begin{algorithmic}[1]
\State {\bfseries Input:} Formula \formula, initial partition \cubes, number of cubes \sampleTarget, set of strategies \strategyspace
\State {\bfseries Parameters:} number of cubes to create during re-partitioning \paramOnlineCubes,  minimal cost \paramMinCost, maximal cost \paramMaxCost
\State {\bfseries Output:} \tuple{\sat/\unsat/\unknown, \cubesCollected}; \cubesCollected  is a set of \sampleTarget cubes if the first output is \unknown, and empty otherwise. 
\Function{\algCollect}{$\formula, \cubes, \sampleTarget, \strategyspace$}
\State $\cubesCollected \assigned \varnothing$ \label{line:init-collect}
\While {$\abs{\cubesCollected} < \sampleTarget$}
\State $\cube \assigned \funcSample(\cubes)$ \label{line:collect-start}
\State $\strategy \assigned \funcSample(\strategyspace)$ \label{line:collect-strategy}
\State $\result \assigned \funcCheck(\formula\land \cube, \strategy, \paramMaxCost)$ \label{line:collect-solve}
\If {$\result = \sat$}  {\textbf{return} \tuple{\sat, \varnothing}} \label{line:collect-sat}
\ElsIf {$\result = \unsat$ and $\funcEval(\formula\land\cube, \strategy) \geq \paramMinCost$} 
\State {$\cubesCollected\assigned \cubesCollected\cup \set{\cube}$} \label{line:collect-good}
\ElsIf {$\result = \unknown$} 
\State {$\cubes \assigned \cubes \cup \set{\cube' \land \cube \,|\, \cube' \in \funcCube(\formula \land \cube, \paramOnlineCubes)}$} \label{line:collect-unknown}
\EndIf
\State {$\cubes.remove(\cube)$}
\If {$\abs{\cubes} = 0$}  {\textbf{return} \tuple{\unsat, \varnothing}} \label{line:collect-solved}
\EndIf
\EndWhile
\State {\textbf{return} \tuple{\unknown, \cubesCollected}}
\EndFunction
\end{algorithmic}
\caption{Identifying a set of suitable cubes}\label{alg:createCubes}
\end{algorithm}

The \algCollect procedure takes as input \tuple{\formula, \cubes, \sampleTarget, \strategyspace}, where \formula is a formula, \cubes is the current set of unsolved cubes, $\sampleTarget\in \nat$ is the target number of cubes, and \strategyspace is a set of strategies. The procedure makes use of the following sub-procedures: 1) $\funcCheck(\formula, \strategy, \maxCost)$ solves the formula \formula with solving strategy \strategy and cost budget \maxCost; the method returns \sat/\unsat if the \formula is solved, and \unknown otherwise; 2) $\funcEval(\formula, \strategy)$ returns the cost of solving the formula \formula with solving strategy \strategy; and 3) $\funcCube(\formula, \onlineCubes)$ partitions the formula \formula into $\onlineCubes$ cubes ($k > 1$). We assume the \funcCube procedure is \emph{sound}, i.e., \formula and $(\bigvee_{\cube \in\funcCube(\formula, k)} \formula\land\cube)$ are equi-satisfiable. The \algCollect procedure is also parameterized by \paramOnlineCubes--the number of cubes to create during re-partitioning ($\paramOnlineCubes > 1$), and \paramMinCost/\paramMaxCost--the minimal/maximal cost.
The procedure gradually grows a set of suitable cubes \cubesCollected (initialized at line \ref{line:init-collect} as an empty set) and terminates either when sufficient cubes that fall in the desired cost range have been collected (i.e., $\abs{\cubesCollected} = \sampleTarget$), or when the original formula is solved in this process.

Concretely, \algCollect repeatedly removes an unsolved cube \cube from \cubes (line~\ref{line:collect-start}), selects a strategy $\strategy\in\strategyspace$ (line~\ref{line:collect-strategy}), and attempts to solve the cube using the strategy with cost budget \paramMaxCost (line~\ref{line:collect-solve}). If the \funcCheck method returns \sat, then the process terminates and returns \tuple{\sat, \varnothing}, concluding that the original formula is satisfiable (line~\ref{line:collect-sat}); if the \funcCheck method returns \unsat and the cost of the strategy \strategy on solving the cube \cube is larger than the minimal cost \paramMinCost, then the cube \cube is added to the collection of suitable cubes \cubesCollected (line~\ref{line:collect-good}); if \cube is not solved within the cost budget, then \cube is further partitioned into smaller cubes using the function \funcCube (line~\ref{line:collect-unknown}) and those cubes are added back to \cubes, which maintains the current set of unsolved cubes. If there is no unsolved cube left, then the procedure returns \tuple{\unsat, \varnothing}, concluding that the original formula is unsatisfiable (line~\ref{line:collect-solved}). Finally, if the procedure exited the while loop without terminating, the procedure returns \tuple{\unknown, \cubesCollected}.

The following theorem characterizes the functionality of Alg.~\ref{alg:createCubes} and holds by construction:
\begin{theorem}\label{prop:functional-collect}
If $\algCollect(\formula, \cubes, \sampleTarget, \strategyspace)$ returns \tuple{\unknown, \cubesCollected}, then $\abs{\cubesCollected} = \sampleTarget$ and for each cube \cube in \cubesCollected, there is some solving strategy $\strategy\in\strategyspace$ such that $\paramMinCost \leq \funcEval(\formula\land\cube, \strategy) \leq \paramMaxCost$.
\end{theorem}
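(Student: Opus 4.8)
The plan is to trace the control flow of Alg.~\ref{alg:createCubes} and establish both conjuncts directly, treating the claim as a pair of loop invariants, exactly as the phrase ``holds by construction'' suggests.

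\emph{Cardinality.} First I would observe that the procedure returns $\tuple{\unknown, \cubesCollected}$ only through the final return statement, which is reached precisely when the \texttt{while} loop terminates without having triggered one of the earlier returns (the \sat\ case, or the $\abs{\cubes}=0$ case). Termination of the loop means its guard $\abs{\cubesCollected} < \sampleTarget$ is false on exit, i.e., $\abs{\cubesCollected} \geq \sampleTarget$. For the reverse inequality I would note that $\cubesCollected$ is initialized to $\varnothing$ (line~\ref{line:init-collect}), that the only statement in the loop body modifying it is line~\ref{line:collect-good}, which inserts a single cube, and that nothing ever removes an element from $\cubesCollected$. Hence $\abs{\cubesCollected}$ increases by at most one per iteration and is non-decreasing; since the guard is tested at the top of each iteration, the loop is never entered once $\abs{\cubesCollected} = \sampleTarget$, so the value cannot overshoot. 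Combining, $\abs{\cubesCollected} = \sampleTarget$ on exit.

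\emph{Cost bounds.} Next I would maintain the invariant that every $\cube \in \cubesCollected$ was inserted at line~\ref{line:collect-good} in an iteration whose sampled strategy $\strategy$ (drawn at line~\ref{line:collect-strategy}, hence $\strategy \in \strategyspace$) satisfied the guard of that branch: $\result = \unsat$, where $\result = \funcCheck(\formula\land\cube, \strategy, \paramMaxCost)$ from line~\ref{line:collect-solve}, together with $\funcEval(\formula\land\cube, \strategy) \geq \paramMinCost$. The second conjunct of the guard gives the lower bound directly. For the upper bound I would invoke the intended specification of $\funcCheck$: a return value of $\unsat$ from $\funcCheck(\formula\land\cube, \strategy, \paramMaxCost)$ means $\formula\land\cube$ was fully solved under $\strategy$ within the cost budget $\paramMaxCost$, so the cost actually incurred is at most $\paramMaxCost$; and under the assumption that $\funcEval(\formula\land\cube, \strategy)$ reports exactly that cost, we get $\funcEval(\formula\land\cube, \strategy) \leq \paramMaxCost$. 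Since $\cubesCollected$ only grows, any $\cube$ present at exit was inserted in this way and its witness $\strategy\in\strategyspace$ still works, yielding $\paramMinCost \leq \funcEval(\formula\land\cube, \strategy) \leq \paramMaxCost$.

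The only step that is not a purely mechanical reading of the pseudocode is the last one: it rests on a consistency assumption linking $\funcCheck$ and $\funcEval$\,---\,namely that the cost consumed by a successful bounded solve coincides with the cost $\funcEval$ attributes to the same (formula, strategy) pair, and in particular that this cost is deterministic (or that $\funcEval$ re-measures the very run just performed). I would state this assumption explicitly alongside the specifications of the two sub-procedures; once it is in place, the theorem follows by construction, and the remainder of the argument is just bookkeeping on the loop.
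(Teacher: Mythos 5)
Your proof is correct and matches the paper's reasoning: the paper offers no separate argument, asserting the theorem ``holds by construction,'' and your loop-invariant bookkeeping (guard forces $\abs{\cubesCollected}=\sampleTarget$; line~\ref{line:collect-good}'s guard gives the lower bound; an \unsat\ return of $\funcCheck$ under budget \paramMaxCost gives the upper bound) is exactly that construction spelled out. Your explicit statement of the consistency assumption between \funcCheck\ and \funcEval\ (deterministic cost, e.g.\ conflict counts as in the paper's implementation) is a reasonable sharpening of what the paper leaves implicit, not a deviation.
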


The following theorem characterizes the soundness of Alg.~\ref{alg:createCubes} and holds because re-partitioning maintains equi-satisfiability:
\begin{theorem}\label{prop:sound-collect}
 Given a set of cubes \cubes such that $(\bigvee_{\cube \in\cubes} \formula\land\cube)$ and $\formula$ are equi-satisfiable, and a sound \funcCube procedure, the formula \formula is satisfiable if $\algCollect(\formula, \cubes, \sampleTarget, \strategyspace)$ returns \tuple{\sat,\varnothing}, and unsatisfiable if it returns \tuple{\unsat,\varnothing}.   
\end{theorem}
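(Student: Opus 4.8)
The plan is to prove a single loop invariant\,--\,that the disjunction $\bigvee_{\cube\in\cubes}(\formula\land\cube)$ over the \emph{current} (mutated) value of \cubes stays equi-satisfiable with \formula throughout the execution of \algCollect\,--\,and then read off both conclusions directly from the two \textbf{return} statements that produce a non-\unknown verdict. Formally, I would take the invariant $I$ to be: at the start of every iteration of the \textbf{while} loop, $\formula$ and $\bigvee_{\cube\in\cubes}(\formula\land\cube)$ are equi-satisfiable. Its base case is exactly the hypothesis of the theorem, applied to the initial \cubes. Throughout I assume, as the pseudocode intends, that \funcSample(\cubes) returns a member of \cubes, that \funcCheck is sound (a \sat/\unsat answer is correct), and that \funcCube is sound in the sense stated before the theorem.

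For the inductive step I would fix an iteration that does not exit the loop, write $\cubes_0$ for the value of \cubes at its start and $\cubes_1$ for its value immediately after the line $\cubes.remove(\cube)$, and show that $\bigvee_{\cube\in\cubes_0}(\formula\land\cube)$ and $\bigvee_{\cube\in\cubes_1}(\formula\land\cube)$ are equi-satisfiable by a case split on \result. If \result is \unsat (in either \unsat branch), then by soundness of \funcCheck the disjunct $\formula\land\cube$ is unsatisfiable, so deleting it from the disjunction preserves satisfiability. If \result is \unknown, then line~\ref{line:collect-unknown} first inserts the cubes $\set{\cube'\land\cube \mid \cube'\in\funcCube(\formula\land\cube,\paramOnlineCubes)}$ and then \cube is removed; by soundness of \funcCube, $\formula\land\cube$ is equi-satisfiable with $\bigvee_{\cube'\in\funcCube(\formula\land\cube,\paramOnlineCubes)}(\formula\land\cube\land\cube')$, i.e.\ with the disjunction of exactly the cubes just inserted, so replacing the disjunct $\formula\land\cube$ by them preserves satisfiability. (If \result is \sat the loop exits and nothing needs checking.) Chaining either case with $I$ at $\cubes_0$ yields $I$ at $\cubes_1$, hence $I$ at the start of the next iteration; the same equivalence also holds at the program point where the test $\abs{\cubes}=0$ on line~\ref{line:collect-solved} is evaluated. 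I would also remark that $I$ is insensitive to \funcEval and to the strategies \funcSample draws, since those affect only \cubesCollected and never \cubes.

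From $I$ the two conclusions follow at once. The verdict \tuple{\sat,\varnothing} is returned only on line~\ref{line:collect-sat}, which fires precisely when $\funcCheck(\formula\land\cube,\strategy,\paramMaxCost)=\sat$; by soundness of \funcCheck, $\formula\land\cube$ is satisfiable, and since \cube is a conjunction of literals every model of $\formula\land\cube$ is a model of \formula, so \formula is satisfiable. The verdict \tuple{\unsat,\varnothing} is returned only on line~\ref{line:collect-solved}, which fires exactly when \cubes has just become empty; applying $I$ at that point, \formula is equi-satisfiable with the empty disjunction, which is unsatisfiable, so \formula is unsatisfiable. This is essentially bookkeeping, so I do not expect a genuine obstacle; the one step that warrants care is the \unknown branch, where \cubes is mutated twice inside a single iteration (an insertion of the child cubes, then the removal of their parent \cube), so the disjunctions must be compared at the right program points and one must invoke soundness of \funcCube to see that \cube is faithfully replaced by its conjunction-extended children. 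A minor secondary point is the convention that the empty disjunction is unsatisfiable, which is what licenses the \unsat conclusion once \cubes is exhausted.
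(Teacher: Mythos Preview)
Your proposal is correct and follows the same idea the paper gestures at: the paper merely asserts that the result ``holds because re-partitioning maintains equi-satisfiability'' without spelling out a proof, and your loop-invariant argument is exactly the natural formalization of that claim. Your treatment is strictly more detailed than the paper's one-line justification, but the underlying approach is identical.
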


To summarize, given a partitioning of the formula \formula described by a set of cubes \cubes, \algCollect either solves the formula or returns a 
given number of cubes of specified difficulties. The former case occurs either when an easy \sat case is encountered or when all cubes are easily unsatisfiable. 

The termination of Alg.~\ref{alg:createCubes} is not always guaranteed, even if all of its sub-routines terminate. For example, if the \funcCheck method keeps returning \unknown on all cubes, then neither will $|\cubesCollected|$ grow, nor will $|\cubes|$ reduce to 0. Conceptually, termination relies on the choices of the strategy space \strategyspace, the cost metric, and the cubing procedure \funcCube. We specify a practical sufficient condition for the termination of Alg.~\ref{alg:createCubes}.
\begin{definition}[Cubing trace]
\label{def:trace}
A \emph{cubing trace} \trace starting from formula $\formula$ with respect to $\funcCube(\cdot,\paramOnlineCubes)$ is defined as a cube sequence $\cube_0,\;\cube_1,\;\dots$ where
$\cube_0 = \top$ and $\cube_{i+1} \in
  \funcCube(\formula \land \cube_{i},\paramOnlineCubes)$.
We say the trace \trace \emph{reaches} cube $\cube_i$ in $i$ steps.
\end{definition}

\begin{restatable}[Termination via bounded cost reduction]{theorem}{terminationalt}\label{prop:termination-alt}
\noindent
Assume that all sub-routines of \algCollect terminate.
And assume that the cubing procedure $\funcCube(\cdot,\paramOnlineCubes)$ satisfies the following property with respect to a set of strategies \strategyspace, and a maximal cost \paramMaxCost:
\begin{description}
\item[(Bounded cost reduction)]%
For any formula \formula, every cubing trace starting from \formula with respect to $\funcCube(\cdot, \paramOnlineCubes)$ will reach, in a finite number of steps, a cube \cube such that
\[\max_{\strategy\in\strategyspace}
\funcEval(\formula\!\land\!\cube,\strategy) <\paramMaxCost.
\]
\end{description}
\noindent
Then $\algCollect(\formula, \cubes, \sampleTarget, \strategyspace)$ terminates.
\end{restatable}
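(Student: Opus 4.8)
The plan is to argue by contradiction: suppose some execution of $\algCollect(\formula,\cubes,\sampleTarget,\strategyspace)$ runs forever. Since all sub-routines terminate by hypothesis, this forces the \textbf{while} loop to perform infinitely many iterations, none reaching a \textbf{return}; in particular $\abs{\cubesCollected}<\sampleTarget$ holds throughout and the guard $\abs{\cubes}=0$ at line~\ref{line:collect-solved} never fires. The first thing to observe is that re-partitioning (line~\ref{line:collect-unknown}) is only ever applied to cubes that are not ``small''. Call a cube $\cube$ \emph{$\formula$-small} if $\max_{\strategy\in\strategyspace}\funcEval(\formula\land\cube,\strategy)<\paramMaxCost$. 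If $\funcCheck(\formula\land\cube,\strategy,\paramMaxCost)=\unknown$ at line~\ref{line:collect-solve}, then solving $\formula\land\cube$ under $\strategy$ exhausted the budget $\paramMaxCost$ without concluding, so $\funcEval(\formula\land\cube,\strategy)\ge\paramMaxCost$ and $\cube$ is not $\formula$-small. Hence a small cube, once sampled, is only removed from $\cubes$ (possibly after being recorded in $\cubesCollected$), never expanded.

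Next I would extract a finite well-founded measure on the working set $\cubes$. View the run as growing a forest whose roots are the cubes of the (finite) initial partition and in which re-partitioning a node $\cube$ attaches the $\paramOnlineCubes$ children $\{\,\cube'\land\cube \mid \cube'\in\funcCube(\formula\land\cube,\paramOnlineCubes)\,\}$ below it. Along any root-to-node path the attached sub-formulas are exactly the sub-formulas arising along some cubing trace starting from the root sub-formula (call it $G$), once we identify --- via associativity and idempotence of $\land$ --- the accumulated cube $\cube'\land\cube$ stored by $\algCollect$ with the corresponding raw output of $\funcCube$ in Definition~\ref{def:trace}; and by the previous paragraph such a path stops as soon as it reaches an $\formula$-small cube. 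The \textbf{bounded cost reduction} hypothesis, instantiated with $G$ in place of the quantified formula, says every cubing trace from $G$ reaches an $\formula$-small cube after finitely many steps, so every branch of this pruned forest is finite; since each $\funcCube$ call returns exactly $\paramOnlineCubes$ cubes the forest is finitely branching, and since there are finitely many roots it is finite by K\"onig's lemma. Writing $w(\cube)$ for the number of nodes of the pruned subtree rooted at $\cube$ --- a positive integer with $w(\cube)=1+\sum_{\cube'\in\funcCube(\formula\land\cube,\paramOnlineCubes)}w(\cube'\land\cube)$ whenever $\cube$ is not $\formula$-small --- set $\Phi\assigned\sum_{\cube\in\cubes}w(\cube)$, a nonnegative integer with a fixed finite initial value.

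It then remains to check that $\Phi$ strictly decreases on every loop iteration. If $\funcCheck$ returns $\unsat$ (whether or not $\cube$ is afterwards recorded in $\cubesCollected$), then $\cube$ is removed from $\cubes$ and nothing is inserted, so $\Phi$ drops by $w(\cube)\ge 1$. If $\funcCheck$ returns $\unknown$, then $\cube$ is removed and replaced by its children $\cube'\land\cube$, each a node of $\cube$'s pruned subtree, so $\Delta\Phi=-w(\cube)+\sum_{\cube'}w(\cube'\land\cube)=-1$ (or less, should some child already lie in $\cubes$). The only remaining outcomes --- $\funcCheck$ returning $\sat$, or $\abs{\cubes}$ dropping to $0$ --- are \textbf{return}s, hence excluded by assumption. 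So each iteration of a nonterminating run would strictly decrease the nonnegative integer $\Phi$, which is impossible; the loop makes only finitely many iterations, and since every sub-routine terminates, $\algCollect$ terminates. The load-bearing step is the middle one --- turning the per-trace ``reaches a small cube eventually'' guarantee of \textbf{bounded cost reduction} into a single well-founded measure on whole working sets, for which the finite-branching/K\"onig argument (equivalently, the Dershowitz--Manna multiset ordering induced by the re-partition relation) is the crux --- the only further subtlety being the routine identification of the cubes stored by $\algCollect$ with the raw $\funcCube$ outputs in Definition~\ref{def:trace}.
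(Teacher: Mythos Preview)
Your proof is correct, but it takes a genuinely different route from the paper's. The paper argues directly: non-termination forces infinitely many \unknown\ outcomes, hence infinitely many \funcCube\ calls; from these it extracts an infinite cubing trace (an implicit K\"onig's-lemma step, stated only as ``we can build an infinite sequence'') on which every cube yields \unknown, contradicting bounded cost reduction. You instead turn bounded cost reduction into a \emph{well-founded measure}: the contrapositive of K\"onig's lemma shows the full re-partitioning forest, pruned at $\formula$-small cubes, is finite, and the subtree-size potential $\Phi$ strictly decreases on every iteration. What your approach buys is rigor and quantitativeness --- the K\"onig step is made explicit, and $\Phi$ gives an actual iteration bound --- at the cost of more machinery; the paper's argument is shorter but leaves the infinite-path extraction unjustified. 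Both proofs rely on the same ``routine identification'' you flag: matching the accumulated cubes $c'\land c$ stored by \algCollect\ to the raw \funcCube\ outputs appearing in Definition~\ref{def:trace}. The paper glosses over this entirely, so your treatment is in fact the more careful one.
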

\begin{proof}
App.~\ref{app:proofs}.
\end{proof}

Intuitively, the bounded cost reduction assumption requires that \funcCube must lead to sub-formulas that are sufficiently easy  to solve by any of the solving strategies in \strategyspace.

\paragraph{\textbf{Efficiently implementing Alg.~\ref{alg:createCubes}}.} There are a few things to note when it comes to implementing Alg.~\ref{alg:createCubes}. First, the procedure is highly parallelizable: the input cubes \cubes can be viewed as a dynamically shrinking and growing work queue to be processed in parallel. Second, eagerly invoking the \funcCube method (line \ref{line:collect-unknown}) can unnecessarily incur time and memory overhead when there are still unexamined cubes in \cubes. Therefore, one could lazily re-partition \unknown cubes only when all the cubes in \cubes have been examined, but not enough cubes have been collected. Finally, to further reduce the number of \funcCube calls, one could consider adaptively increasing (e.g., multiplying by some factor $r$) the value of \paramMaxCost for re-partitioned cubes. Thus, the cost budget for any cube \cube is $\paramMaxCost \cdot r^{n}$, where $n$ is the number of re-splits that leads to \cube ($n=0$ for original cubes). With this approach, Thms.~\ref{prop:sound-collect} and ~\ref{prop:termination-alt} still holds, and Thm.~\ref{prop:functional-collect} holds by changing \paramMaxCost to $\paramMaxCost \cdot r^{n}$.

\subsection{Extending Cube-and-Conquer with Online Strategy Learning}\label{subsec:taco}

Alg.~\ref{alg:sdac} introduces an extension of the traditional \cnc solving procedure with an online learning phase (\algSDAC). The goal is to dynamically optimize solving strategies based on fractions of the original formula.

The \algSDAC procedure takes as input a formula \formula and a strategy space \strategyspace (with a default strategy $\strategy_0$), and returns either \sat or \unsat. Apart from the cube collection procedure \algCollect, \algSDAC also depends on the following sub-procedures: 1) $\funcCheck$ is the same as described in the previous section; we assume that when the cost budget \paramMaxCost is unlimited ($\infty$), \funcCheck returns either \sat or \unsat; 2) $\funcTune(\formulas, \strategyspace)$ optimizes the solving strategy with respect to a set of formulas \formulas over a strategy space \strategyspace; and 3) $\funcValidate(\formulas, \strategy, \strategy_0)$ compares the performance (i.e., total cost) of a new strategy \strategy with that of the default strategy $\strategy_0$ on a set of formulas \formulas and returns the final strategy used for solving the unsolved cubes.

\begin{algorithm}[t]
\small
\begin{algorithmic}[1]
\State {\bfseries Input:} Formula \formula, strategy space \strategyspace (with a default strategy $\strategy_0$)
\State {\bfseries Parameters:} number of cubes to create initially \paramInitialCubes, number of cubes for tuning \paramSampleTarget, number of cubes for validation \paramSampleTargetOracleTraining
\State {\bfseries Output:} \sat or \unsat
\Function{\algSDAC}{$\formula, \strategyspace$}
\State {$\cubes \assigned \funcCube(f, \paramInitialCubes)$} \label{line:sdac-init} \Comment{Initial partition}
\State {$\strategyFinal \assigned \strategy_0$}
\State {$res, \cubesTune \assigned \algCollect(\formula, \cubes, \paramSampleTarget, \strategyspace)$} \tikzmark{phase1top} \label{line:sdac-collect-tune}
\If {$res \in \set{\sat, \unsat}$} {{\textbf{return}} res}
\EndIf
\State {$\strategy \assigned \funcTune(\set{\formula \land \cube \,|\, \cube\in \cubesTune}, \strategyspace)$} \tikzmark{phase1bot} \label{line:sdac-tune}
\If {$\strategy \neq \strategy_0$}
\State {$res, \cubesVal \assigned \algCollect(\formula, \cubes, \paramSampleTargetOracleTraining, \set{\strategy})$} \tikzmark{phase2top}   \label{line:sdac-collect-val}
\If {$res \in \set{\sat, \unsat}$} {{\textbf{return}} res} \quad \tikzmark{phaseleft}
\EndIf
\State {$\strategyFinal \assigned \funcValidate(\set{\formula \land \cube \,|\, \cube\in \cubesVal}, \strategy, \strategy_0)$} \tikzmark{phase2bot} \label{line:sdac-val}  \tikzmark{phase2bot}
\EndIf
\For {each $\cube \in \cubes$} 
\If {$\funcCheck(\formula\land \cube, \strategyFinal, \infty) = \sat$}  {\textbf{return} \sat}
\EndIf
\EndFor
\State {\textbf{return} \unsat}
\EndFunction
\end{algorithmic}
\caption{Cube-and-conquer with self-driven algorithmic configuration}\label{alg:sdac}
\AddNote{phase1top}{phase2bot}{phaseleft}{Strategy Learning}
\end{algorithm}

We now describe the execution of \algSDAC. The procedure first creates an initial partition \cubes like traditional \cnc, and then enters the strategy learning phase. During this phase, \algSDAC first attempts to collect \paramSampleTarget cubes from \cubes for tuning (line~\ref{line:sdac-collect-tune}). Importantly, we assume \cubes is \emph{passed by reference}, meaning that after \algCollect terminates, \cubes contains the latest set of cubes yet to be solved. If the original formula is not solved by \algCollect, the \funcTune method is invoked on the set of formulas
\formulas=\set{\formula \land \cube \,|\, \cube\in \cubesTune} to
 optimize the solving strategy (line~\ref{line:sdac-tune}). If a better strategy for \cubesTune is not found, the strategy learning phase terminates and the remaining cubes are solved with the default strategy $\strategy_0$. Otherwise, a different set of \paramSampleTargetOracleTraining cubes, denoted \cubesVal, is collected to validate the new strategy \strategy (line~\ref{line:sdac-collect-val}). Similarly, the original formula might be solved during this second round of cube collection. If this is not the case, the \funcValidate method is invoked to decide on the final solving strategy \strategyFinal (line~\ref{line:sdac-val}), which is then used to solve the remaining set of cubes. 

\paragraph{\textbf{Design choice: strategy validation.}} We instantiated the \funcValidate method as the following: if the new strategy $\strategy$ performs better on \cubesVal, then return \strategy, otherwise, return $\function{sequentialPortfolio}(\strategy, \paramMaxCost, \strategy_0)$.\footnote{This stands for solving the cube using \strategy with a cost budget of \paramMaxCost, and fallback to the default strategy $\strategy_0$.} Alternatively, one could simply fall back to $\strategy_0$ instead of the sequential portfolio strategy, or even select a different candidate strategy by repeating the tuning phase with a larger \paramSampleTarget. 

\paragraph{\textbf{Design choice: strategies for collection.}} We would like to call attention to a nuance in the choice of the fourth input to \algCollect, which specifies the set of candidate strategies to pick from when examining a cube. For the tuning cubes \cubesTune, \algSDAC uses \strategyspace as this set of strategies (line~\ref{line:sdac-collect-tune}), which means it is \emph{unbiased} towards any strategies and the collected cubes fall in the suitable range for different strategies. Another plausible design choice is to always collect cubes with the default strategy $\set{\strategy_0}$. We did not opt for the latter since \cubesTune might be biased towards cubes that $\strategy_0$ already performs well on, leaving less room for improvement. On the other hand, when collecting the validation cubes \cubesVal, \algSDAC only uses the newly learned strategy $\strategy$ (line~\ref{line:sdac-collect-val}). This means \algSDAC is \emph{optimistic} and only rejects \strategy if the default strategy $\strategy_0$ performs better than \strategy on a set of cubes that \strategy performs well on.

\subsection{Tuning the solving strategy on the cubes}
\label{subsec:strategy}

We now describe our instantiation of the \funcTune procedure  (line~\ref{line:sdac-tune}, Alg.~\ref{alg:sdac}).
Given a set of collected cubes \cubesTune, the goal of \funcTune is to optimize the solving strategy in a given strategy space \strategyspace: 
\[
\underset{\strategy \in \mathcal{\strategyspace}} {\text{minimize}} \quad  \cost(\strategy) := \Sigma_{\cube \in \cubesTune} \funcEval(\formula \land \cube, \strategy).
\]
When \strategyspace is small, one could evaluate all strategies in \strategyspace. 
But this is unaffordable when $|\strategyspace|$ is large, especially in the online strategy learning setting. In this section, we consider the scenario where one stochastically optimizes the solving strategy while being only allowed to examine \paramNumMCMCSamples strategies in the strategy space \strategyspace, with $\paramNumMCMCSamples \ll \abs{\strategyspace}$. One possible approach is Markov-Chain Monte-Carlo (MCMC) sampling, which in our setting can be used to generate a sequence of solving strategies with the desirable property that the sequence converges to strategies with the lowest cost.

A widely-used MCMC method is the Metropolis-Hastings (M-H) Alg.~\cite{mh}, instantiated in the context of \sys as follows: 
\begin{enumerate}
    \item Choose a current strategy \strategy;
    \item Propose to replace the current strategy with a new one $\strategy'$, sampled from a \emph{proposal distribution} $q(\strategy' | \strategy)$;
    \item If $\cost(\strategy') \leq \cost(\strategy)$,
    accept $\strategy'$ as the current strategy;
    \item Otherwise, accept $\strategy'$ as the current strategy with some probability $a(\strategy{\rightarrow}\strategy')$ (e.g., a probability inversely proportional to the increase in the cost);
    \item Go to step 2.
\end{enumerate}
This process is repeated until \paramNumMCMCSamples samples are drawn. Intuitively, under this scheme, a better proposal is always accepted, while a proposal that increases the cost may still be accepted. In other words, the algorithm greedily navigates towards a better strategy whenever possible, but can also overcome local minima. In our implementation, the acceptance probability is computed using a standard method~\cite{mcmc}. First, $\cost(\strategy)$ is transformed into a probability distribution $p(\strategy) \propto \exp(-\beta \cdot \cost(\strategy))$,
where $\beta > 0$ is a configurable parameter. The acceptance probability is then computed as:
\begin{align*}
a(\strategy{\rightarrow}\strategy') &= \min\left(1, \frac{p\left(\strategy'\right)}{p\left(\strategy\right)}\right)  \\
&= \min\left(1, \exp\left(\beta \cdot \left(\cost(\strategy) - \cost(\strategy')\right)\right)\right).
\end{align*}
Under this acceptance probability, the larger the total cost increases going from the current strategy $\strategy$ to the proposed strategy $\strategy'$, the lower the probability of accepting $\strategy'$ as the new current strategy. On the other hand, the larger $\beta$ is, the more reluctant we are to move to a worse proposal.

\paragraph{\textbf{Design choice: proposal distribution.}} To ensure the aforementioned convergence property of MCMC, the proposal distribution must be both \emph{symmetric} and \emph{ergodic}.\footnote{
A proposal distribution $q$ is symmetric if $q(\strategy' | \strategy) = q(\strategy | \strategy')$ for any $\strategy, \strategy'\in\strategyspace$
and is ergodic if there is a non-zero probability of reaching a strategy $\strategy\in\strategyspace$ from any other strategy $\strategy'\in\strategyspace$ in a finite number of steps.
}
For discrete search spaces (which is the setting we consider in this paper), a common proposal distribution is the symmetric random walk, which moves to one of the \emph{neighbors} of the current sample at uniform random. This proposal distribution is both symmetric and ergodic. We define the neighbors of a strategy as all strategies for which $\leq k$ parameter values are different. We use $k=2$ in our implementation.

\paragraph{\textbf{Design choice: MCMC initialization.}} The better (i.e., low-cost) the initial strategy is, the shorter it takes for MCMC-sampling to converge~\cite{roy2020convergence}. While a natural choice of the initial strategy is the default strategy $\strategy_0$, we found that it is empirically advantageous to ``probe'' every $1$-neighbor of $\strategy_0$ and initialize the MCMC-sampling with the best strategy seen in this process.

The sampling scheme presented above is in the same spirit as approaches used in the automated configuration literature~\cite{hoos2021automated}, with emphasis on identifying low-cost strategies in a lightweight manner. Borrowing more insights from that literature or leveraging existing off-line tuning tools such as SMAC~\cite{lindauer-jmlr22a} are interesting directions for future work.
\section{ Instantiations of \sys}\label{sec:impl}

We implemented a prototype of the \sys-extended \cnc procedure in \textsc{Python3}. In addition to parallelizing the cube collection using the strategy described in Sec.~\ref{subsec:collection}, the prototype also parallelizes the tuning, validation, and solving processes, by solving the cubes in parallel. Our prototype takes as input a formula in file format and runs \cnc on the formula with or without \sys. 
The prototype supports solving two types of logical formulas: SAT formulas and neural network verification (NNV) queries. 

\paragraph{SAT-solving.} 
For SAT-solving, the prototype uses \marchcu~\cite{heule2011cube} as the cuber and \kissat~\cite{BiereFallerFazekasFleuryFroleyksPollitt-SAT-Competition-2024-solvers} as the solver, and takes formula in CNF format. \marchcu is a popular cubing tool used in previous \cnc work~\cite{heule2011cube,ozdemir2021sat,heisinger2020distributed}, and \kissat is a state-of-the-art SAT-solver. We use the number of conflicts as the cost metric, which has been used as a proxy for runtime in prior offline tuning work~\cite{beskyd2022domain}. We choose not to use runtime as the cost metric since it makes the execution non-deterministic. The strategy space consists of combinations of possible values of 9 \kissat command-line parameters that we believe have a significant impact on the branching decisions. We allow at most four parameters to deviate from the default value, resulting in a total of 349 unique parameter settings. Tab.~\ref{tab:strategyspace-kissat} shows the parameters and their candidate values. We evaluate the effect of using a different strategy space in Sec.~\ref{subsec:ablation}.

\begin{table}[t]
\centering
\scriptsize
\vspace{-0.2cm}
\caption{Parameters in the strategy space for {\kissat}} 
\vspace{-0.2cm}
\begin{tabular}{ccc|ccc}
\toprule
Parameter & default & alts & Parameter & default & alts\\
\midrule
\param{bump} & 1 & 0 & \param{phase} & 1 & 0\\
\param{bumpreasons} & 1 & 0 & \param{stable} & 1 & 0, 2 \\
\param{chrono} & 1 & 0 & \param{target} & 1 & 0 \\
\param{eliminate} & 1 & 0 & \param{tumble} & 1 & 0 \\
\param{forcephase} & 0 & 1
 \\ 
\bottomrule
\end{tabular}
\label{tab:strategyspace-kissat}
\vspace{-0.5cm}
\end{table}


\paragraph{Neural Network Verification.}  For NNV, the prototype uses the \marabou~\cite{katz2019marabou,wu2024marabou} verifier as both the cuber and the solver, and takes \marabou's input query format. \marabou is a state-of-the-art SMT-based neural network verification tool. We chose NNV with \marabou as the second case study because
\cnc-like approach has been shown to improve \marabou's performance~\cite{katz2019marabou,wu2020parallelization}. Since \marabou performs DPLL(T)-like search without conflict-driven clause learning, we use the number of decisions as the cost metric instead of the number of conflicts. We extended \marabou with a command line option that limits the number of decisions. Two parameters are chosen for tuning: \param{pl-split-freq} and \param{branch}. The former specifies the frequency of performing a case-split on internal neurons when input-splitting is enabled (e.g., 1 means always splitting on internal neurons; 5 means splitting on an internal neuron every 4 input-splittings). The latter determines the heuristics for selecting which internal neuron to branch on. The strategy space is described in Tab.~\ref{tab:strategyspace-marabou}. In total, $\strategyspace_{\marabou}$ contains 12 candidate solving strategies.

\begin{table}[ht]
\scriptsize
\vspace{-0.2cm}
\centering
\caption{Parameters in the strategy space for \marabou} 
\vspace{-0.2cm}
\begin{tabular}{ccc}
\toprule
Parameter & default & alts\\
\midrule
\param{pl-split-freq} & 10 & 1, 2, 5 \\
\param{branch} & pseudo-impact & babsr, polarity \\
\bottomrule
\end{tabular}
\label{tab:strategyspace-marabou}
\vspace{-0.2cm}
\end{table}

We prove in App.~\ref{app:proofs} that the bounded cost reduction assumption that guarantees termination (Thm.~\ref{prop:termination-alt}) can  be satisfied for both applications. 
The prototype can also be extended to handle other logical formulas by providing the corresponding implementations of the \funcCube, \funcCheck, and \funcEval methods. 
App.~\ref{app:impl} describes other implementation details.
\section{Experimental Evaluation}\label{sec:eval-sat}

We conduct a number of experiments in the two aforementioned automated reasoning applications to evaluate the proposed method. The main questions we investigate are:
\begin{enumerate}
\itemindent=-4pt
\item Can \sys learn new strategies that lead to performance gain? [Yes]
\item Do the new strategies offer new problem-solving insights? [In several cases]
\end{enumerate}
In addition, we perform ablation studies to understand the effect of various ingredients of \sys.

\subsection{Case study: SAT-Solving}

In SAT-solving, \cnc is recognized as a method for tackling challenging instances that cannot be efficiently solved by sequential solver or a portfolio strategy. 
To study the effectiveness of \sys, we focus on benchmarks studied in existing \cnc literature, which were known to be challenging for state-of-the-art SAT-solvers: 
\begin{itemize}
\item \cncBench: benchmarks studied in the original \cnc work~\cite{heule2011cube}; 
\item \crux: an arithmetic verification benchmark family~\cite{ritirc2017column}
that was studied in a distributed \cnc work \solver{Paracooba}~\cite{heisinger2020distributed}.
\end{itemize}
In addition, to investigate the effectiveness of \sys on the latest hard problems, we propose to evaluate \cnc-based techniques on a new set of benchmarks that consists of unsolved problems from recent SAT competitions:
\begin{itemize} 
\item \scBench: benchmarks from recent SAT Competitions (22--24) that were not solved by any participating tools.
\end{itemize}

We perform a direct comparison between \cnc with and without \sys. We use \marchcu to partition a given benchmark with maximal cube depth 15. This means at most 32768 cubes are generated, a number on par with what was used in prior work~\cite{heisinger2020distributed}. During the solving phase, both configurations solve the cubes in the same order (modulo cubes that were already solved by \sys); this ensures fairness of runtime comparison, especially when it comes to satisfiable benchmarks. 
The number of cubes collected for tuning (\paramSampleTarget) is 50; the number of cubes collected for validation (\paramSampleTargetOracleTraining) is 25; the cost range (\paramMinCost and \paramMaxCost) of tuning cubes is between 500 and 10000, and that of validation cubes is between 10000 to 50000; for re-partitioning during \algCollect, we use \marchcu to create a maximal number of 64 new cubes. Since re-partitioning is performed lazily, it only occurred on two of the evaluated benchmarks. 
The number of MCMC samples (\paramNumMCMCSamples) is 20. 

In addition, we also compare the performance of our prototype with the state-of-the-art parallel SAT solver \painless, which won the parallel track of SAT Competition 2024~\cite{saoudi2024pl}. 

Experiments were performed on a cluster equipped with Dell PowerEdge R6525 CPU servers featuring 2.6-GHz AMD CPU cores. All three configurations were given 24 cores (spawning 23 workers) and 192GB of memory. For \cncBench and \crux benchmarks, we used a wall-clock timeout of 12 hours. For \scBench benchmarks, we used a wall-clock timeout of 1 hour. 
App.~\ref{app:exp-setup} provides additional details about the experimental setup.

\begin{table}[t!]
\setlength\tabcolsep{1.5pt}
\centering
\caption{Results on the \cncBench and \crux benchmarks. \coltimecubing, \coltimelearning, and \coltimesolving denotes cubing, strategy learning, and solving time; the total runtime is $\coltimecubing + \coltimesolving$ for \cnc, and $\coltimecubing + \coltimesolvinglearning$ for \cncTuneAndVal. \colscorecommon denotes the total conflicts on cubes commonly solved by \cnc and \cncTuneAndVal using different solving strategies. The asterisk ($^*$) near a benchmark means re-partitioning happened; for other benchmarks, \cnc and \cncTuneAndVal solved the same set of cubes. Finally, the last column (\coltimetotal) shows the runtime of \painless.} 
\vspace{-0.2cm}
\scriptsize
\label{tab:cnc-crux}
\begin{tabular}{lcccccccc}
\toprule\vspace{-0.3cm}\\
\cmidrule(lr){3-8}
\cmidrule(lr){9-9}
 & & & \multicolumn{2}{c}{\tabtitle{\cnc}} & \multicolumn{3}{c}{\tabtitle{\cncTuneAndVal}} & \tabtitle{\gray{\painless}} \\
\cmidrule(lr){4-5}\cmidrule(lr){6-8}\cmidrule(lr){9-9}
\tabtitle{Benchmark} & \tabtitle{Res.} & \tabtitle{\coltimecubing} & \tabtitle{\coltimesolving} & \tabtitle{\colscorecommon} & \tabtitle{\coltimesolvinglearning} & \tabtitle{\colscorecommon} & \tabtitle{\coltimelearning} & \tabtitle{\gray{\coltimetotal}} \\
9dlx\_vliw\_at\_b\_iq8 & UNS & 406.5 & 20601 & 1.157e+9 & \best{17409} & \best{4.412e+8} & 781 & \gray{182} \\
9dlx\_vliw\_at\_b\_iq9 & UNS & 627.0 & 26135 & 1.424e+9 & \best{20457} & \best{4.983e+8} & 978 & \gray{275} \\
AProVE07-25 & UNS & 3.9 & 213 & 6.954e+7 & \best{159} & \best{5.805e+7} & 24  & \gray{486} \\
dated-5-19-u* & UNS & 75.0 & 268 & 4.881e+6 & \best{192} & \best{3.297e+6} & 55 & \gray{522} \\
eq.atree.braun.12 & UNS & 1.0 & 76 & 4.245e+7 & \best{60} & \best{1.590e+7} & 60 & \gray{282} \\
eq.atree.braun.13 & UNS & 1.1 & 242 & 1.553e+8 & \best{139} & \best{6.145e+7} & 52  & \gray{1088} \\
gss-24-s100 & SAT & 14.6 & 1652 & 1.978e+8 & \best{1314} & \best{1.657e+8} & 98 & \gray{272} \\
gss-26-s100 & SAT & 8.4 & 11933 & 2.230e+9 & \best{8585} & \best{1.815e+9} & 185 & \gray{578} \\
gus-md5-14 & UNS & 157.8 & 9607 & 1.879e+8 & \best{6970} & \best{1.724e+8} & 408 & \gray{41486} \\
ndhf\_xits\_09\_UNS & UNS & 16.5 & 535 & 1.094e+8 & \best{501} & \best{9.528e+7} & 43 & \gray{820} \\
rbcl\_xits\_09\_UNK & UNS & 6.0 & 276 & 1.087e+8 & \best{268} & \best{9.724e+7} & 26 & \gray{684} \\
rpoc\_xits\_09\_UNS & UNS & 6.9 & \best{334} & 1.032e+8 & 385 & \best{9.827e+7} & 33 & \gray{273} \\
total-10-17-u* & UNS & 166.0 & 488 & 2.800e+5 & \best{238} & \best{2.687e+5} & 158 & \gray{268} \\
total-5-15-u & UNS & 101.6 & 4683 & 2.294e+8 & \best{2303} & \best{1.739e+8} & 56 & \gray{2075} \\
\midrule
cruxmiter32seed0 & UNS & 0.2 & 114 & 1.355e+8 & \best{100} & \best{1.297e+8} & 9 & \gray{329} \\
cruxmiter32seed1 & UNS & 0.2 & 113 & 1.355e+8 & \best{101} & \best{1.297e+8} & 10 & \gray{299} \\
cruxmiter32seed2 & UNS & 0.2 & 171 & 1.769e+8 & \best{161} & \best{1.751e+8} & 10 & \gray{190} \\
cruxmiter32seed3 & UNS & 0.2 & 463 & 3.689e+8 & \best{283} & \best{3.562e+8} & 12 & \gray{603} \\
cruxmiter32seed4 & UNS & 0.2 & 746 & 5.048e+8 & \best{510} & \best{4.527e+8} & 12 & \gray{1149} \\
cruxmiter32seed5 & UNS & 0.2 & 173 & 2.096e+8 & \best{142} & \best{1.913e+8} & 11 & \gray{360} \\
cruxmiter32seed6 & UNS & 0.2 & 204 & 2.870e+8 & \best{161} & \best{2.691e+8} & 12 & \gray{453} \\
cruxmiter32seed7 & UNS & 0.2 & 291 & 3.230e+8 & \best{211} & \best{2.975e+8} & 11 & \gray{138} \\
cruxmiter32seed8 & UNS & 0.2 & 297 & 4.192e+8 & \best{288} & \best{4.191e+8} & 12 & \gray{854} \\
cruxmiter32seed9 & UNS & 0.2 & 530 & 5.119e+8 & \best{337} & \best{4.726e+8} & 10 & \gray{2549} \\
\bottomrule
\end{tabular}
\end{table}

\subsubsection{Results on the \cncBench and \crux benchmarks.} 

Tab.~\ref{tab:cnc-crux} compares the performance of \cnc and \cncTuneAndVal on the first two benchmark sets. We omitted one satisfiable \cncBench benchmark on which both \cnc and \cncTuneAndVal timed out. We report the time to create the initial cubes (\coltimecubing). For \cnc, we report the time to solve the cubes (\coltimesolving); for \cncTuneAndVal, we report the total time of the strategy learning phase and the solving phase (\coltimesolvinglearning). We also report the total number of conflicts on cubes that were commonly solved by \cnc and \cncTuneAndVal using different solving strategies (\colscorecommon). We report \painless's runtime in the last column.

Impressively, \sys was able to find a new solving strategy that reduces the total number of conflicts on all \cnc and \crux benchmarks. Moreover, the reduction in total cost consistently translates to a reduction in runtime on all but one case. The reduction is over 20\% in 15 out of the 24 benchmarks and in several cases exceeds 40\% (e.g., \benchmark{eq.atree.braun.13}).  Upon closer examination, on all but two instances, the new strategy found by tuning was validated and used to solve the remaining cubes. For the two remaining cases (\benchmark{rbcl\_xits\_09\_UNK} and \benchmark{rpoc\_xits\_09\_UNS}), \cncTuneAndVal fell back to a sequential portfolio strategy (as described in Sec.~\ref{subsec:strategy}). 

We notice that while the strategy-learning phase (\coltimelearning) typically constitutes a small fraction of the total solving time, this is not always the case. In particular, for \benchmark{eq.atree.braun.12}, \cncTuneAndVal solved all the cubes during the strategy learning phase. Upon closer examination, we found that \sys successfully identified a new solving strategy, and in the process of collecting validation cubes, all the cubes were solved using this new strategy before sufficient validation cubes were collected. This highlights the fact that the \algCollect method does contribute to solving the cubes, and its overhead mainly comes from unsolved cubes (line \ref{line:collect-unknown}, Alg.~\ref{alg:createCubes}). Usually, we observe that only a small fraction of cubes fall into this category.

Another interesting result is that \cncTuneAndVal and \painless exhibit complementary performance. \cncTuneAndVal outperformed \painless  on 7 out of the 14 \cncBench benchmarks and dominated \painless on all \crux benchmarks. This suggests \cnc is competitive to state-of-the-art solver \painless on benchmark sets considered in \cnc literature.


App.~\ref{app:strategy} lists the solving strategies found by \sys on all benchmarks. We highlight a few interesting observations:
\begin{itemize}
    \item For 7 of the 12 unsatisfiable \cnc benchmarks, \sys learns \param{stable=0}. This coincides with \kissat's pre-defined \param{unsat} mode, which is intended for unsatisfiable instances. This suggests that \sys can independently arrive at solving strategies consistent with expectation. 
 \item However, there are cases where \sys uncovers unexpectedly useful strategies. One interesting example comes from the two benchmarks with the prefix \benchmark{eq.atree.braun}, where \sys learns the configuration \param{bump=0}, \param{tumble=0}, and \param{stable=2}. Upon closer examination, \param{bump=0} and \param{tumble=0} combined contributes most significantly to the performance gain, reducing the runtime on \benchmark{eq.atree.braun.13} by more than 40\%. Interestingly, setting only \param{tumble=0} yields a moderate runtime reduction of approximately 15\%, whereas setting only \param{bump=0}  worsens the runtime by more than $5\times$.
Upon closer examination, \param{bump=0} results in a completely static branching order, while \param{tumble=0} keeps \kissat's indexing of variables as defined in the original encoding (instead of heuristically re-ordering them). Setting both \param{bump=0} and \param{tumble=0} results in a static branching order that splits on the first variables in the original encoding—essentially mimicking a BDD (Binary Decision Diagram) ordering. This represents a novel insight into solving this benchmark, discovered by \sys.
\item Finally, on the \crux benchmarks, \cncTuneAndVal consistently learns subsets of \param{stable=2}, \param{target=0}, and \param{phase=0}. This suggests \cncTuneAndVal can robustly arrive at similar strategies  for similar benchmarks. 
\end{itemize}

\begin{table}[t]
\setlength\tabcolsep{1.5pt}
\centering

\caption{Results on unsolved benchmarks (name abbreviated) from SC'22-24 that were solved by either \cnc or \cncTuneAndVal with 24 cores and a one-hour wall-clock timeout. } 
\vspace{-0.2cm}
\scriptsize
\label{tab:sc}

\begin{tabular}{lcccccccc}
\toprule\vspace{-0.3cm}\\
\cmidrule(lr){3-8}
\cmidrule(lr){9-9}
 & & & \multicolumn{2}{c}{\tabtitle{\cnc}} & \multicolumn{3}{c}{\tabtitle{\cncTuneAndVal}} & \tabtitle{\gray{\painless}} \\
\cmidrule(lr){4-5}\cmidrule(lr){6-8}\cmidrule(lr){9-9}
\tabtitle{Benchmark} & \tabtitle{Res.} & \tabtitle{\coltimecubing} & \tabtitle{\coltimesolving} & \tabtitle{\colscorecommon} & \tabtitle{\coltimesolvinglearning} & \tabtitle{\colscorecommon} & \tabtitle{\coltimelearning} & \tabtitle{\gray{\coltimetotal}} \\
W4.2.90.10.1SS & UNS & 1.3 & 183 & 1.096e+8 & \best{153} & \best{8.083e+7} & 14 & \gray{TO} \\
W4.3.90.10.1DA & UNS & 5.1 & 441 & 1.310e+8 & \best{253} & \best{8.857e+7} & 23 & \gray{TO} \\
W5.3.50.10.2DA & UNS & 9.2 & 2058 & 5.279e+8 & \best{1015} & \best{3.490e+8} & 24 & \gray{TO} \\
grs\_192\_256 & UNS & 178.4 & \best{1390} & 9.760e+7 & 1963 & \best{6.926e+7} & 166 & \gray{TO} \\
multiplier.14.14 & UNS & 0.4 & \best{165} & 2.005e+7 & 183 & \best{1.890e+7} & 59 & \gray{512} \\
php15\_mixed\_15 & UNS & 0.3 & TO & - & \best{1150} & - & 11 & \gray{2852} \\
sin.depth.miter.1 & UNS & 1503.1 & 1117 & 2.396e+7 & 1127 & \best{2.382e+7} & 1127 & \gray{TO} \\
\bottomrule
\end{tabular}
\vspace{-0.2cm}
\end{table}

\subsubsection{Unsolved benchmarks from SC'22-24.}

We now discuss results on \scBench, a total of 73 benchmarks unsolved during recent SAT Competitions. Most of those benchmarks remain out-of-reach for \cnc and \cncTuneAndVal with the allocated computational resources. On 16 benchmarks, our prototype ran out of time or memory during the initial cubing phase. Nonetheless, \cnc and \cncTuneAndVal combined solved 7 out of the 57 remaining benchmarks. The results are shown in Tab.~\ref{tab:sc}. We see that \cncTuneAndVal consistently reduces the total number of conflicts and solves one benchmark which \cnc alone is unable to tackle. Benchmark \benchmark{grs\_192\_256} is an interesting case, where the total number of conflicts is significantly reduced by \sys but its runtime increases. Upon closer examination, we discovered that while \cncTuneAndVal did solve most cubes faster, there is one cube that took the new strategy over 1000 seconds and the default strategy less than 300 seconds. We hypothesize that the solver got to this cube late in the solver process, leading to a slow-down in the solving time. We have not discovered other cases where a single cube dominated the runtime. In the future, it could be interesting to learn not only the solving strategy but also cube-ordering on the fly so that hard cubes could be front-loaded for better load balancing.

Scatter plots like Fig.~\ref{fig:cherry} are shown in App.~\ref{app:figures}. For the completeness of the studies, we also evaluate all SC'24 benchmarks in App.~\ref{app:sc24}. \cncTuneAndVal solved all instances solvable by \cnc and uniquely solved 10 additional instances. 

\subsubsection{Ablation studies.}\label{subsec:ablation}
We conduct ablation studies of \sys on \benchmark{cruxmiter32seed4}, which was closely studied in a distributed-\cnc work~\cite{heisinger2020distributed}. In particular, we stress-test \sys by examining the effect of 1) reducing the number of tuning cubes to just 1 (\texttt{-Cubes}); 2) tuning on the same cube and accepting the new strategy without validation (\texttt{-Cubes-Val}); 3) using a much larger strategy space  (\texttt{+Stra.}); 
and 4) using this expanded strategy space without parameter probing for MCMC initialization (\texttt{+Stra.-Prob.}). 
We also varied the random seed (default is 0) of \cncTuneAndVal from 0 to 9.

The total runtime of \cncTuneAndVal for each configuration-seed pair is shown in Figure~\ref{fig:ablation}. For reference, default \cnc took 746 seconds on this benchmark. Looking at the default configuration of \cncTuneAndVal (the first row), we observe that as we vary the random seeds, \cncTuneAndVal's runtime does vary, due to variations in the learned strategies and in the time it took to tune/validate on the collected cubes. 
Nonetheless, it consistently boosts performance over \cnc. 
On the other hand, when we sample much fewer cubes for tuning (the second row), the chance of over-fitting increases. Indeed, with random seeds 2 and 4, a solving strategy that does not generalize was learned. However, the runtime performance was \emph{not hurt} w.r.t. the default \cnc,
as validation successfully rejected this solving strategy. The third row shows the effect of directly committing to the learned strategy without validation. Unsurprisingly, for seeds 2 and 4, the solving strategy that would have been rejected significantly degrades the runtime performance. Upon closer examination, this unfavorable strategy turned out to be \param{stable=0}, \kissat's \texttt{unsat} mode. This again speaks to the need of per-instance strategy adaptation. The effects of larger strategy space and MCMC initialization are discussed in App.~\ref{app:ablation}.

\begin{figure}[t]
\includegraphics[width=\linewidth]{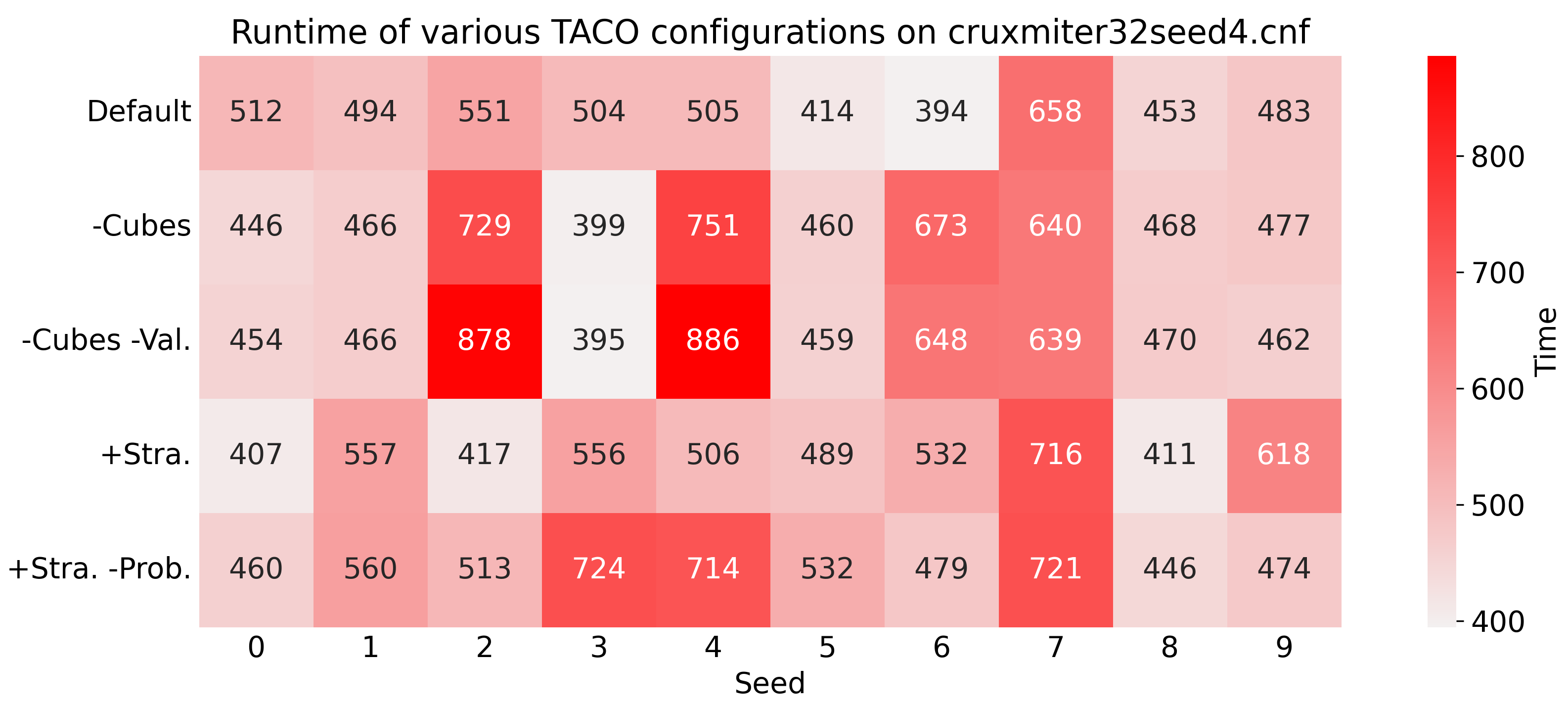}
    \vspace{-0.4cm}
    \caption{Ablation studies of the effect of, from top to bottom, less tuning cubes (\texttt{-Cubes}), no validation (\texttt{-Cubes-Val.}), larger strategy space (\texttt{+Stra.}), and no probing for MCMC initialization (\texttt{+Stra.-Prob.}). Vanilla \cnc took 746 seconds.}
    \label{fig:ablation}
    \vspace{-0.4cm}
\end{figure}


\subsection{Case Study: Neural Network Verification}\label{sec:eval-nnv}

We now shift our attention to a second automated reasoning task: solving neural network verification (NNV) queries. We used two of the four benchmark sets used for evaluating the latest version of \marabou~\cite{wu2024marabou}: 
\begin{itemize}
    \item \nap: 235 verification benchmarks concerned with checking whether certain neuron activation pattern implies certain output classification~\cite{gopinath2019property}; and
    \item \altloop: 259 benchmarks concerned with checking the existence of infinite loops in a robot navigation system~\cite{amir2023verifying}.
\end{itemize}
We omitted the two other benchmark sets. One is already easily solvable; the other involves invoking \marabou repeatedly as a sub-procedure through its Python API, which creates engineering challenges for our prototype as it makes sub-process calls to the \marabou executable.

We use \marabou to create 4096 cubes with its default partitioning strategy. We created fewer cubes in this experiment because we allocated fewer computational resources for each solver run, as described below. The number of cubes collected for tuning (\paramSampleTarget) is 30; the number of cubes collected for validation (\paramSampleTargetOracleTraining) is 15; the cost range (\paramMinCost and \paramMaxCost) of tuning cubes is between 10 and 50, and that of validation cubes is between 50 to 100; for re-partitioning during \algCollect, we use \marabou to create 8 new cubes. Note that the cost ranges are significantly smaller than the ones used in SAT-solving. This is because \marabou performs extensive theory reasoning at each search state~\cite{katz2017reluplex} and expands the search tree much slower than SAT solvers. We believe similar considerations would apply when extending \sys to other lazy-DPLL(T)-based solvers. In addition to \cnc and \cncTuneAndVal, we also ran the parallel solving mode of \marabou~\cite{wu2020parallelization} (\dncmarabou), which performs \cnc-based parallel-solving with dynamic re-partitioning. For \dncmarabou, we create 16 initial sub-problems and leave its other hyper-parameters as default. Each job is assigned 8 cores (thus 7 workers are spawned), 64GB of memory, and a one-hour wall-clock timeout.

\begin{table}[t!]
\setlength\tabcolsep{1.6pt}
\centering
\caption{Effect of \sys on benchmarks used to evaluate Marabou. All configurations were given 8 cores and 1 hour timeout.}
\vspace{-0.2cm}
\scriptsize
\label{tab:nnv}
\begin{tabular}{ccccccccccccccc}
\toprule
& \multicolumn{2}{c}{\tabtitle{\cnc}} 
& \multicolumn{5}{c}{\tabtitle{\cncTuneAndVal}}
& \multicolumn{2}{c}{\tabtitle{\gray{\dncmarabou}}}
 \\
\cmidrule(lr){2-3}\cmidrule(lr){4-8}\cmidrule(lr){9-10}
\tabtitle{Family} (\#) & 
\tabtitle{\colsolved} &
\tabtitle{\coltimetotal} & 
\tabtitle{\colsolved} &
\tabtitle{\coltimetotal} & \tabtitle{\coltimelearning} & 
\tabtitle{\colnumupdated} &
\tabtitle{\colreductiononupdated} &
\gray{\tabtitle{\colsolved}} &
\gray{\tabtitle{\coltimetotal}} & 
\\
\benchmark{NAP} (235) & 163  & 1206538
& \best{216} & \best{33606} & 8955 & 0 & - 
& \gray{216} & \gray{420312}
\\
\benchmark{AltLoop} (259) & 259  & 93764
& 259 & \best{66532} & 2489 & 53 & 44\% 
& \gray{258} & \gray{108459}
\\
\bottomrule
\end{tabular}
\vspace{-0.2cm}
\end{table}

\subsubsection{Evaluation on \nap and \altloop.}

Tab.~\ref{tab:nnv} summarizes the runtime performance of \cnc and \cncTuneAndVal. Since cubing with \marabou took less than 1 second on those benchmarks, we did not split the runtime into cubing time and solving time. We report the total solving time (\coltimetotal), and for \cncTuneAndVal, we also report the time spent on strategy learning (\coltimelearning), the number of benchmarks on which a new solving strategy has been found (\colnumupdated), and the percentage of overall runtime reduction on those benchmarks. Results on individual benchmarks are reported in App.~\ref{app:detailed-nnv}.

\cncTuneAndVal exhibits interesting behaviors on both benchmark sets. On the \nap benchmarks, \cncTuneAndVal solves significantly more benchmarks than \cnc. Upon closer examination, this significant performance gain is \emph{not} due to new strategies learned, but in fact due to satisfiable cubes discovered during the cube collection process. While default \cnc attempts to solve each cube before moving on to the next cube, the \algCollect method creates a \emph{depth-limited} (or more precisely, cost-limited) breadth-first search pattern, where the solver scrolls through and attempts at cubes with a low-cost budget. This turns out to be the preferable approach for finding satisfying assignments on this benchmark set. \dncmarabou also solved 216 \nap benchmarks, though with a higher average solving time. Upon closer examination, \cncTuneAndVal and \dncmarabou together solved 230 (all \sat) out of the 235 benchmarks, while \cnc does not contribute any unique solutions. To our knowledge, this is the first study to show that almost none of the activation patterns in the NAP benchmark set can actually guarantee a fixed prediction, despite their previously reported statistical correlation with specific outputs~\cite{gopinath2019property}.

As for the \altloop benchmarks, both \cnc and \cncTuneAndVal solved all the instances, while \cncTuneAndVal is overall more efficient and results in a 29\% reduction of total solving time ($\frac{93764 - 66532}{93764}$). We found that for the vast majority of the benchmarks, \sys did not find a different solving strategy, suggesting that the default solving strategy is already reasonable. However, there are 53 benchmarks where a new solving strategy is found. On those benchmarks, \cncTuneAndVal reduces the total runtime by 44\% (from 46761 seconds to 26161 seconds). 

Examining the new strategies found by \sys in those 53 cases (App.~\ref{app:detailed-nnv}), we discovered that they always involves performing less input splitting (and more case splits on internal neurons instead). Interestingly, this deviates from the conventional wisdom~\cite{wu2020parallelization,bunel2020branch,jia2021verifying} that input splitting is preferable for neural networks with low input dimensions (networks in \altloop have input dimension $\leq 20$). 
In fact, 45 out of the 53 new strategies completely turn off input splitting (by setting \param{pl-split-freq} to 1). This suggests that even on benchmarks from the ``same distribution,'' there might not be a uniformly optimal solving strategy.

In App.~\ref{app:seq-taco}, we report result of running a sequential \sys configuration. 
We found that compared with the default sequential mode of \marabou, this configuration can solve roughly twice more \nap benchmarks and  speed up the solving of \altloop benchmarks by more than 2x. This suggests \sys has the potential to improve performance in a sequential solving setting as well.
\section{Related Work}\label{sec:related}


\subsubsection{Algorithm tuning and selection.} Our work is directly motivated by the success of offline meta-algorithmic design approaches~\cite{hoos2021automated} such as automated configuration~\cite{hutter2007boosting,hutter2007automatic,khudabukhsh2016satenstein,hutter2014algorithm,ansotegui2015model,leyton2009empirical} and per-instance algorithm selection~\cite{xu2008satzilla,scott2021machsmt,xu2010hydra,singh2009avatarsat}. Automated configuration focuses on optimizing parameter settings of an algorithm for a fixed set of problems, using either local search or performance prediction techniques. Per-instance algorithm selection techniques utilize machine learning to predict the most effective algorithm for a given formula. 
\sys differs from both approaches in that it moves meta-algorithmic design online. In addition, compared to offline per-instance algorithmic selection, \sys does not require feature extraction (i.e., representing a formula as a set of features), which is in itself a complex problem.

\subsubsection{Online learning.} The general idea of online learning is not new in automated reasoning. The \solver{MapleSAT} solver~\cite{DBLP:conf/sat/LiangGPC16,liang2018machine} was among the first to explore such a direction. For example, in \solver{MapleSAT}, branching is formulated as a multi-armed bandit problem and the estimated reward of branching on each variable is updated throughout the solving.  
More broadly, algorithms with adaptive components are common in automated reasoning, such as dynamic local search~\cite{li2007combining}, branching heuristics~\cite{moskewicz2001chaff,cherif2021combining}, and restart strategies~\cite{biere2008adaptive}. 
In contrast, \sys focuses on customizing high-level solving strategies instead of designing adaptive low-level heuristics. There has also been work on choosing solving strategies online when solving a sequence of related problems~\cite{pimpalkhare2021medleysolver,wu2023lightweight}. Our approach conducts online strategy learning when solving a single formula. 

\subsubsection{Partitioning strategies.} Our work leverages partitioning strategies devised in cube-and-conquer-based automated reasoning. Cube-and-conquer was originally designed for tackling challenging SAT problems~\cite{heule2011cube,heule2016solving}, by partitioning a problem into easier and balanced sub-problems using lookahead techniques~\cite{heule2009look}. Various other partitioning strategies for SAT and SMT have been devised~\cite{nejati2020machine,nejati2017propagation,heule2011cube,wilson2023partitioning,hyvarinen2015search,wu2020parallelization,zhao2024distributed,hyvarinen2021lookahead}.

\section{Conclusion and Next Steps}\label{sec:conclusion}

In this paper, we explored the feasibility of customizing the solving strategy for a given problem by learning from the problem itself. We introduced \sys, an online learning methodology that extends a traditional cube-and-conquer solving procedure. Experimental evaluation in two automated reasoning applications showed that \sys can consistently boost the performance of \cnc, uncover effective instance-specific solving strategy, and in many cases outperform state-of-the-art solvers in the respective domains.

Our long-term vision is that \emph{intelligent agents should both efficiently conduct low-level reasoning and adaptively shift their high-level solving strategy}---this work is one step towards that direction.
%
We believe the application of \sys to more automated reasoning tools (e.g., general-purpose SMT solvers) is a promising next step and might reveal that new design choices are required.  It would also be interesting to study the applicability of \sys in the cloud setting, where the efficient sharing of training data and strategies becomes less straightforward. Finally, we believe deep integration of \sys-like approach into the search procedure of modern solvers 
would be a significant next challenge to tackle. 


%
%
\newpage
\bibliographystyle{kr}
\bibliography{references}

\newpage
\onecolumn
\appendix
\section{Proofs}\label{app:proofs}

\terminationalt*  
\begin{proof}
Assume, for contradiction, that
$\algCollect(\formula,\cubes,\sampleTarget,\strategyspace)$
does not terminate, this means the loop executes infinitely many times. Each iteration selects one cube
$\cube\in\cubes$, runs
$\funcCheck(\formula\!\land\!\cube,\strategy,\paramMaxCost)$, and then either terminates immediately (if the result is \sat),  or remove $\cube$ from \cubes (if the result is \unsat), or replaces $\cube$ by $\paramOnlineCubes>1$ sub-cubes (if the result is \unknown). Therefore, an infinite execution must receive the outcome
$\unknown$ in infinitely many iterations; otherwise, $|\cubes|$ would eventually drop to~$0$ (if the procedure does not terminate earlier). Hence, $\funcCube$ must be invoked infinitely many times. 
%
This means we can  build an infinite sequence $\trace_\infty:= \top,\cube_0,\cube_1,\dots$ with the following properties:
1) $\cube_0$ is produced by invoking $\funcCube(\formula \land \cube, \paramOnlineCubes)$ for some cube $\cube$ in the initial~$\cubes$; 2) $\cube_{i+1}$ is produced by $\funcCube(\formula \land \cube_i, \paramOnlineCubes)$ when \funcCheck\ returns \unknown\ on $\cube_i$; and 3) \funcCheck\ returns \unknown\ on \emph{every} $\cube_i$. Sequence $\trace_\infty$ is a \emph{cubing trace} starting from $\formula\land\cube$ (Def.~\ref{def:trace}).

However, by the bounded cost reduction assumption of the theorem, every cubing trace will reach, in a finite number of steps, a cube that can be decided within budget \paramMaxCost by any solving strategy $\strategy\in\strategyspace$. Let $\cube_d$ be that cube for trace $\trace_\infty$. On $\cube_d$, \funcCheck\ can return only \sat\ or \unsat, never \unknown. This contradicts the conclusion that $\funcCheck$ returns \unknown for all cubes in $\trace_\infty$. Therefore the assumption of non-termination is impossible, and the procedure must terminate.
\end{proof}

Next, for SAT-solving and Neural Network Verification respectively, we define conditions on the \funcCube procedure, the solving strategies, and the cost metric that guarantee the bounded cost reduction property, which in turn guarantees the termination of \algCollect.

\begin{lemma}[Bounded cost reduction for SAT solving]
\label{lem:bounded-cost-sat}
Let $\formula$ be a CNF formula. Suppose the partitioning procedure
$\funcCube(\formula,\paramOnlineCubes)$ satisfies:
\begin{itemize}
\item every returned cube $\cube$ is a conjunction of unit clauses whose variables occur in $\formula$;
\item whenever possible, $\cube$ contains a unit clause that does \emph{not} yet occur in $\formula$.
\end{itemize}
Let $\strategyspace$ be any set of SAT--solving strategies that perform
exhaustive \emph{unit propagation} before making their first decision.
Then the bounded--cost--reduction assumption of
Theorem~\ref{prop:termination-alt} holds with cost budget
$\paramMaxCost>1$, where the cost metric is the number of conflicts.
\end{lemma}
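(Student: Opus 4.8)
The plan is a finite-descent argument. Along any cubing trace I track how many variables of the current sub-formula are already pinned by unit clauses, show this count strictly increases at every re-partitioning step (unless the trace has already reached a trivially easy cube), and conclude that after at most $\abs{\mathrm{Vars}(\formula)}$ steps every variable is pinned, so exhaustive unit propagation alone settles the sub-formula with at most one conflict.

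Concretely, fix a formula $\formula$ and a cubing trace $\top = \cube_0,\cube_1,\dots$, where $\cube_{i+1}$ is one of the cubes obtained by partitioning $\formula\land\cube_i$ with $\funcCube(\cdot,\paramOnlineCubes)$; consistently with line~\ref{line:collect-unknown} of Alg.~\ref{alg:createCubes} I read this so that $\cube_{i+1}$ extends $\cube_i$. For a CNF formula $\psi$ let $U(\psi)$ be the set of variables occurring as a unit clause in $\psi$. By the first hypothesis on $\funcCube$, no step introduces a new variable, so $\mathrm{Vars}(\formula\land\cube_i)=\mathrm{Vars}(\formula)$ and $\abs{U(\formula\land\cube_i)}\leq n:=\abs{\mathrm{Vars}(\formula)}$ throughout the trace.

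The first real step is to show that if $U(\formula\land\cube_i)\neq\mathrm{Vars}(\formula)$, then the second hypothesis forces $\cube_{i+1}$ to contain a literal $\ell$ that is not yet a unit clause of $\formula\land\cube_i$, and a short case split finishes: either the variable of $\ell$ is currently unpinned, so $\abs{U(\formula\land\cube_{i+1})}>\abs{U(\formula\land\cube_i)}$; or $\bar\ell$ is already a unit clause, in which case $\formula\land\cube_{i+1}$ contains complementary unit clauses, is refuted by unit propagation at decision level $0$, and the trace has already reached a cube solvable with at most one conflict. Since $\abs{U(\cdot)}$ is bounded by $n$ and strictly grows in the ``progress'' case, within at most $n$ steps the trace reaches some $\cube_d$ such that $\formula\land\cube_d$ either contains complementary unit clauses or pins every one of its $n$ variables. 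It then remains to bound the cost for every $\strategy\in\strategyspace$: since each such strategy runs exhaustive unit propagation before its first decision and all variables of $\formula\land\cube_d$ are fixed by explicit unit clauses, propagation assigns every variable with no decision ever being made; each clause of $\formula\land\cube_d$ is then fully instantiated, so either the resulting assignment satisfies $\formula\land\cube_d$ (the solver reports \sat\ with $0$ conflicts) or some clause is falsified, which is a single decision-level-$0$ conflict after which the solver reports \unsat\ (the complementary-units case is just this, with propagation detecting the clash directly). Hence $\funcEval(\formula\land\cube_d,\strategy)\leq 1$, and since $\paramMaxCost>1$ and the conflict count is a non-negative integer, $\max_{\strategy\in\strategyspace}\funcEval(\formula\land\cube_d,\strategy)\leq 1<\paramMaxCost$ — exactly the bounded cost reduction property, which by Thm.~\ref{prop:termination-alt} also yields termination of \algCollect.

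I expect the only genuine obstacle to be bookkeeping rather than mathematical depth: reconciling the phrasing ``$\cube_{i+1}\in\funcCube(\formula\land\cube_i,\paramOnlineCubes)$'' of Def.~\ref{def:trace} with the accumulating cubes that Alg.~\ref{alg:createCubes} actually stores, so that the progress measure $\abs{U(\formula\land\cube_i)}$ is genuinely monotone along a trace; and keeping ``does not yet occur in $\formula$'' as a statement about syntactic occurrence as a unit clause rather than about entailment — the corner case where the fresh literal is the complement of an existing unit clause is the one spot that needs its own clause in the argument instead of the generic ``$U$ grows'' step.
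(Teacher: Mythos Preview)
Your proposal is correct and follows essentially the same argument as the paper: a finite-descent on the number of pinned variables, bounded by $n=\abs{\mathrm{Vars}(\formula)}$, followed by the observation that once every variable is fixed (or complementary units appear) exhaustive unit propagation decides the formula with at most one conflict. Your write-up is in fact more careful than the paper's own proof, which simply asserts that after $n$ steps $\formula\land\cube$ contains at least $n$ distinct unit clauses without spelling out the accumulation of cubes along a trace or the complementary-literal corner case that you single out.
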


\begin{proof}
Let $n$ be the number of variables in $\formula$, and let
$\cube$ be a cube reachable from $\formula$ in at least $n$ steps.
The conjunction $\formula\land\cube$ must contain at least
$n$ different unit clauses. Hence, either all $n$ variables are assigned, or some variable and its negation both appear as unit clauses. In either case, unit propagation can decide the satisfiability of $\formula\land\cube$ with at most one conflict.
\end{proof}

The cubing tool \marchcu satisfies the characteristics of $\funcCube$ described above, except that when the given formula is already solvable by unit propagation, \marchcu will directly conclude $\unsat$ instead of generating any new cubes.
\begin{lemma}[Bounded cost reduction for Neural Network Verification]
\label{lem:bounded-cost-nnv}
Define a neural network verification query on ReLU neural networks as a conjunction of linear constraints together with ReLU constraints of the form $y = \max(0,x)$, where $x$ and $y$ are real variables. Let $\formula$ be such a query. Suppose the partitioning procedure
$\funcCube(\formula,\paramOnlineCubes)$ satisfies:
\begin{itemize}
\item every cube $\cube$ it returns is a conjunction of linear constraints over variables occurring in $\formula$;
\item $\cube$ \emph{fixes} at least one unfixed ReLU constraint $y = \max(0,x)$, that is, it asserts either $(x \ge 0 \land y = x)$ or $(x < 0 \land y = 0)$.
\end{itemize}
Let $\strategyspace$ be any set of NNV strategies that can decide the satisfiability of a set of linear arithmetic constraints without performing any case splits. Then the bounded--cost--reduction assumption of
Theorem~\ref{prop:termination-alt} holds with cost budget
$\paramMaxCost>0$, where the cost metric is the number of case splits on ReLU constraints.
\end{lemma}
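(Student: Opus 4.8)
The plan is to mirror the proof of Lemma~\ref{lem:bounded-cost-sat}, using the number of ReLU constraints as the finite resource that every cubing trace is forced to exhaust. Let $m$ be the number of ReLU constraints occurring in $\formula$, fix an arbitrary cubing trace $\cube_0 = \top, \cube_1, \cube_2, \dots$ with respect to $\funcCube(\cdot,\paramOnlineCubes)$, and let $\cube$ be any cube the trace reaches in at least $m$ steps. First I would set up the invariant that along the trace the set of ReLU constraints \emph{fixed} in the current formula is monotonically non-decreasing: each step only conjoins further linear constraints, and conjoining linear constraints can never undo an already-asserted ReLU phase $(x \ge 0 \land y = x)$ or $(x < 0 \land y = 0)$. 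Combined with the second hypothesis on $\funcCube$ --- every partitioning step fixes at least one previously-unfixed ReLU whenever one remains --- the invariant yields that $\formula \land \cube$ has \emph{all} $m$ ReLU constraints fixed.

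The second step is to observe that such a formula $\formula \land \cube$ is logically equivalent to a pure conjunction of linear arithmetic constraints: each fixed ReLU $y = \max(0,x)$ can be soundly replaced by the linear (in)equalities of its asserted phase without changing the solution set, so deciding it requires no disjunctive/case-split reasoning over ReLUs. Hence, by the hypothesis on $\strategyspace$, every $\strategy \in \strategyspace$ decides $\formula \land \cube$ with zero case splits on ReLU constraints, so $\funcEval(\formula\land\cube,\strategy) = 0$ and $\max_{\strategy\in\strategyspace}\funcEval(\formula\land\cube,\strategy) = 0 < \paramMaxCost$ for every $\paramMaxCost > 0$. Since $m$ is finite and independent of the trace, every cubing trace starting from $\formula$ reaches such a cube within $m$ steps, which is precisely the bounded cost reduction condition invoked by Theorem~\ref{prop:termination-alt}.

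I expect the bookkeeping around the trace to need the most care. One point to nail down is that, as in the SAT lemma, the cubes along a trace accumulate the fixings made by earlier partitioning steps (consistent with line~\ref{line:collect-unknown} of Alg.~\ref{alg:createCubes}, which re-adds $\cube' \land \cube$), so that ``reached in $m$ steps'' really does entail ``all $m$ ReLUs fixed.'' A second point is the ``whenever possible'' qualifier: once all ReLUs are fixed, later cubes may add arbitrary linear constraints but cannot un-fix anything by the monotonicity invariant, so every cube beyond step $m$ still has the zero-cost property and the uniform bound $m$ suffices (with $m = 0$ handling queries with no ReLU constraints, where $\cube_0 = \top$ already works). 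Finally I would note the degenerate-but-harmless case of an inconsistent intermediate cube: an unsatisfiable conjunction of linear constraints is also decided with zero ReLU case splits, so it poses no obstacle.
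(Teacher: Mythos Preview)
Your proposal is correct and follows essentially the same argument as the paper: after $m$ partitioning steps (where $m$ is the number of ReLU constraints), every ReLU is fixed to a linear phase, so the resulting formula is purely linear and is decided with zero case splits. The paper's proof is a three-sentence version of exactly this; your additional bookkeeping about monotonicity, cube accumulation along the trace, and the degenerate cases is sound and simply makes explicit what the paper leaves implicit.
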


\begin{proof}
Let $n$ be the number of ReLU constraints in $\formula$, and let
$\cube$ be a cube reachable from $\formula$ in at least $n$ steps. The conjunction $\formula\land\cube$ fixes every ReLU to a linear phase. Consequently, satisfiability can be decided by solving only the linear constraints, which requires no case splits.
\end{proof}

The lemma can be extended to neural networks with other piecewise-linear activation functions.
\marabou{} supports two cubing methods: one based on input splitting (splitting the intervals of variables representing the inputs to the neural network) and another based on ReLU splitting. The first method can be modified to satisfy the bounded-cost-reduction assumption by eventually switching to ReLU splitting, or by performing ReLU splitting periodically.

\newpage
\section{Additional Implementation Details}\label{app:impl}


\subsection{Choice of the cost metric} Importantly, the assumption that the number of conflicts is a good proxy for runtime is not always true. For example, if we turn off learned clause deletion, the number of conflicts would likely reduce, but the runtime can significantly increase due to overhead in unit propagation. This fact needs to be taken into account when choosing which parameter to include in the strategy space: the parameter's impact on the cost metric and its impact on the runtime must \emph{align}. If a parameter setting reduces the cost metric but simultaneously increases runtime, the tuning procedure would be misled to favor the parameter setting. We believe that coming up with a deterministic, easy-to-measure, and consistently faithful performance proxy is a highly non-trivial research topic. 

\subsection{Evaluating strategies on collected cubes} In tuning and validation, in order to evaluate a solving strategy \strategy, we need to solve a set of cubes with \strategy to compute its cost on those cubes. We impose a cost budget of \paramMaxCost for each cube. If a cube \cube is not solved within the budget, its cost on the \cube is treated as $2 \cdot \paramMaxCost$ (PAR2). This avoids bad solving strategies taking a long time to finish on the cubes. We have also tried PAR5 or PAR10, and found that the performance is not sensitive to this choice.

\newpage
\section{Experimental Evaluation}

\subsection{Additional details on experimental setup}\label{app:exp-setup}
\paragraph{Unsolved benchmarks from recent SAT Competitions} This information is retrieved using queries like ``\texttt{result = unknown and track = main\_2024}'' on a recently introduce SAT benchmark database~\cite{iser_et_al:LIPIcs.SAT.2024.18}.

\paragraph{Cube creation} For SAT-solving, we used \marchcu to partition a given benchmark with maximal cube depth 15. This means at most 32768 cubes are generated. \marchcu generates cubes close or equal to this number on most benchmarks. However, on some benchmarks (in particular, the two benchmarks where re-partitioning occurred), only thousands of cubes were created. On the other hand, if the maximal depth is unspecified, \marchcu would dynamically decide the number of cubes. We did not use this option since we found that \marchcu runs out of time or memory on many of the benchmarks under this mode.

\paragraph{\painless version} We built \painless from the source code available on Github.\footnote{\url{https://github.com/S-Mazigh/painless/tree/satcomp-24}} The \painless configurations used during the SAT competition are available from the solver download link at \url{https://satcompetition.github.io/2024/index.html}.\footnote{
See \texttt{parallel/painless/painless2/docker/painless-images/leader/run\_solver.sh}
} The concrete \painless configuration we used is \texttt{-v=1 -c=23 \\ -shr-strat=3 -shr-sleep=100000 -sbva-count=12 -ls-after-sbva=2\\ -sbva-timeout=1000}.

\subsection{Continued Discussion of Ablation studies}\label{app:ablation}

For the ablation studies, we consider a much larger strategy space that, in addition to Tab.~\ref{tab:strategyspace-kissat}, also includes 8 additional tunable parameters: \param{and}, \param{equivalences}, \param{extract}, \param{ifthenelse}, \param{minimize}, \param{phasesaving}, \param{rephase}, \param{substitute}. 
This increases the size of the strategy space to 3911.

We now discuss the effect of using this larger strategy space (the fourth row of Fig.~\ref{fig:ablation}). We observe that with this larger strategy space, \cncTuneAndVal still managed to consistently find solving strategies that improve runtime performance within the limited rounds of MCMC-sampling (20). This speaks to the efficacy of the MCMC-sampling procedure described in Sec.~\ref{subsec:strategy}. Comparing the fourth and fifth rows of Fig.~\ref{fig:ablation} illustrates the effect of parameter probing. If we initialize MCMC-search with the default solving strategy instead of the best strategy found through probing, there appear to be fewer cases where a significantly better solving strategy is found. Another way to read this result is that, for a larger strategy space, more samples are needed. Considering that one can only afford a limited number of samples in the online solving scenario, it would be an interesting next step to leverage developer expertise to devise a rich yet compact set of promising solving strategies in order to optimize the performance of \sys. 

\newpage
\section{Solving Strategies Found by CnC+TACO on the Evaluated SAT Benchmarks}\label{app:strategy}

\begin{table}[H]
\centering
\small
\label{tab:sat-config}
\begin{tabular}{lr}
\toprule
\tabtitle{Benchmark} & 
\tabtitle{Solving strategy}\tablefootnote{Only parameters that differ from the default are shown.} \\
9dlx\_vliw\_at\_b\_iq8 & chrono=0 stable=0 \\
9dlx\_vliw\_at\_b\_iq9 & chrono=0 stable=0 \\
AProVE07-25 & chrono=0 stable=0 \\
dated-5-19-u & chrono=0 phase=0 target=0 \\
eq.atree.braun.12 & bump=0 phase=0 stable=2 tumble=0 \\
eq.atree.braun.13 & bump=0 stable=2 tumble=0 \\
gss-24-s100 & forcephase=1 stable=2 \\
gss-26-s100 & stable=2 \\
gus-md5-14 & forcephase=1 \\
ndhf\_xits\_09\_UNS & stable=0 \\
rbcl\_xits\_09\_UNK & stable=0 \\
rpoc\_xits\_09\_UNS & forcephase=1 phase=0 stable=0 \\
total-10-17-u & chrono=0 stable=0 \\
total-5-15-u & chrono=0 target=0 \\
\midrule
cruxmiter32seed0 & stable=2 target=0\\
cruxmiter32seed1 & stable=2 target=0 \\
cruxmiter32seed2 & target=0 \\
cruxmiter32seed3 & phase=0 stable=2 \\
cruxmiter32seed4 & forcephase=1 stable=2 \\
cruxmiter32seed5 & phase=0 stable=2 target=0 \\
cruxmiter32seed6 & phase=0 stable=2 target=0 \\
cruxmiter32seed7 & phase=0 stable=2 target=0 \\
cruxmiter32seed8 & phase=0 stable=2 \\
cruxmiter32seed9 & stable=2 target=0 \\
\midrule
WS\_400\_24\_90\_10.apx\_1\_DS\_ST & stable=2 \\
WS\_400\_32\_90\_10.apx\_1\_DC\_AD & stable=2 target=0 \\
WS\_500\_32\_50\_10.apx\_2\_DC\_AD & stable=2 target=0 \\
grs\_192\_256 & chrono=0 phase=0 stable=0 \\
multiplier\_14bits\_\_miter\_14 & stable=2 target=0 \\
php15\_mixed\_15percent\_blocked & stable=2 target=0 \\
sin\_depth\_miter\_1 & stable=0 \\
\bottomrule
\end{tabular}
\end{table}

\newpage
\section{Scatter Plots of Solver Performance on Individual Cubes}\label{app:figures}

This section shows three scatter plots for each benchmark. The left plot shows the costs (\# conflicts) of cubes used for tuning and validation when solved using the learned solving strategy and the default solving strategy. The middle plot shows the costs on cubes commonly solved by \cnc and \cncTuneAndVal using different solving strategies (excluding cubes used for tuning). The right plot shows the runtime on those cubes. The sums of the values are also shown in the figures. 

\subsection{\cncBench benchmarks}

\begin{figure}[H]
    \centering
    \includegraphics[width=\textwidth]{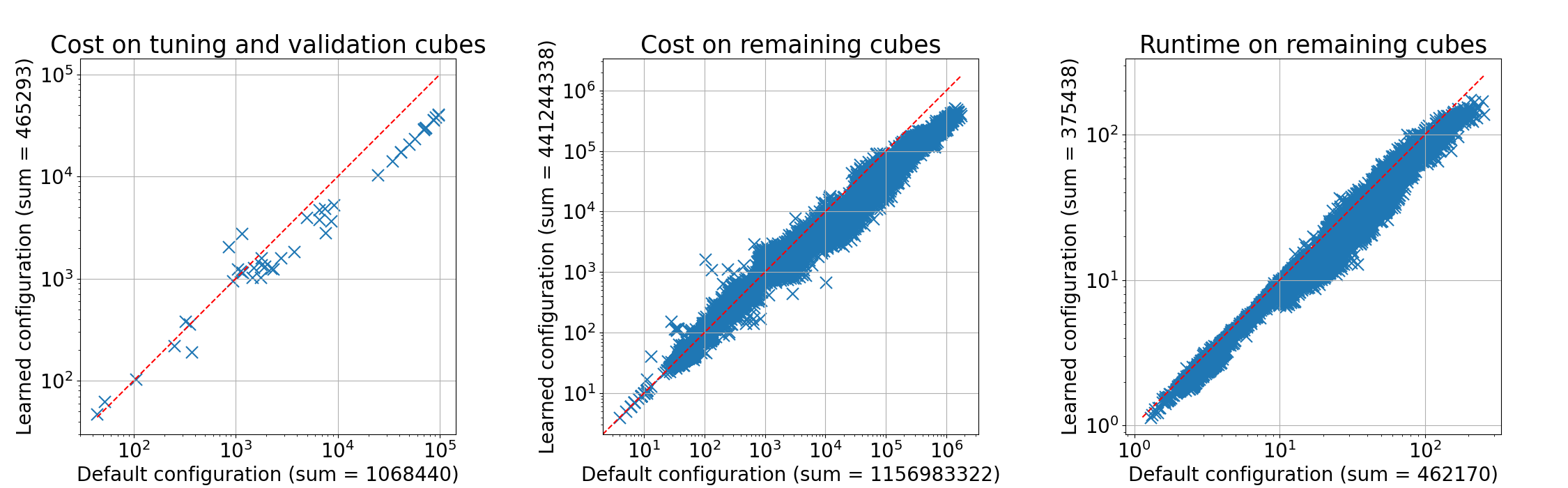}
    \caption{9dlx\_vliw\_at\_b\_iq8}
    \label{fig:combined_benchmark9dlx_vliw_at_b_iq8.cnf}
\end{figure}

\begin{figure}[H]
    \centering
    \includegraphics[width=\textwidth]{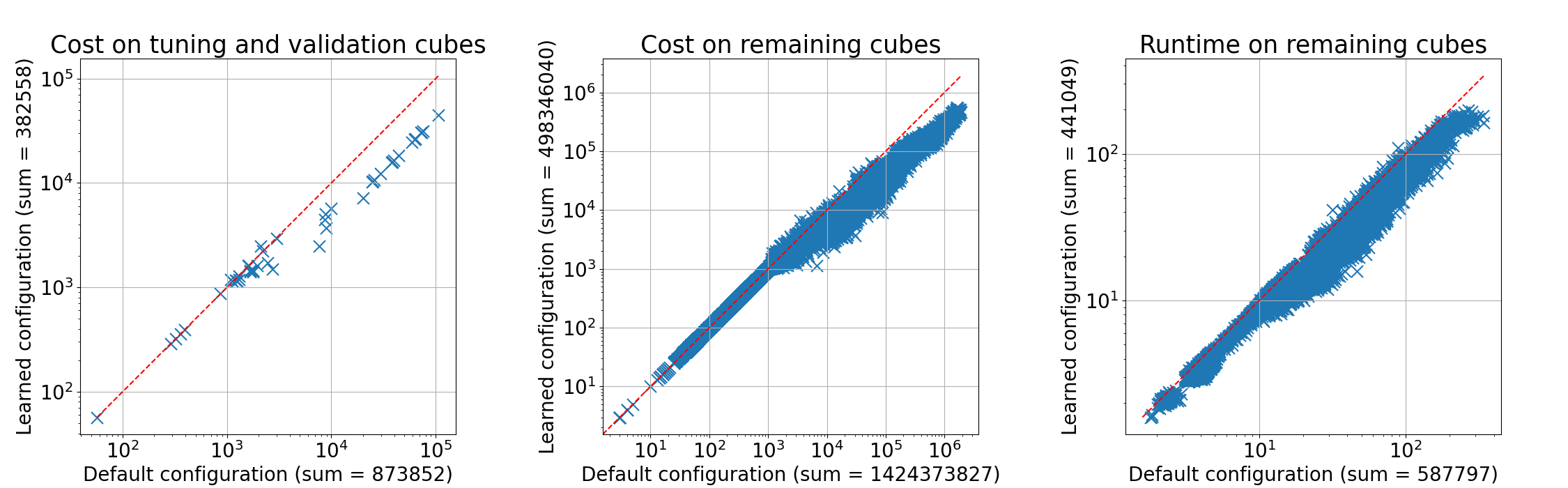}
    \caption{9dlx\_vliw\_at\_b\_iq9}
    \label{fig:combined_benchmark9dlx_vliw_at_b_iq9.cnf}
\end{figure}

\begin{figure}[H]
    \centering
    \includegraphics[width=\textwidth]{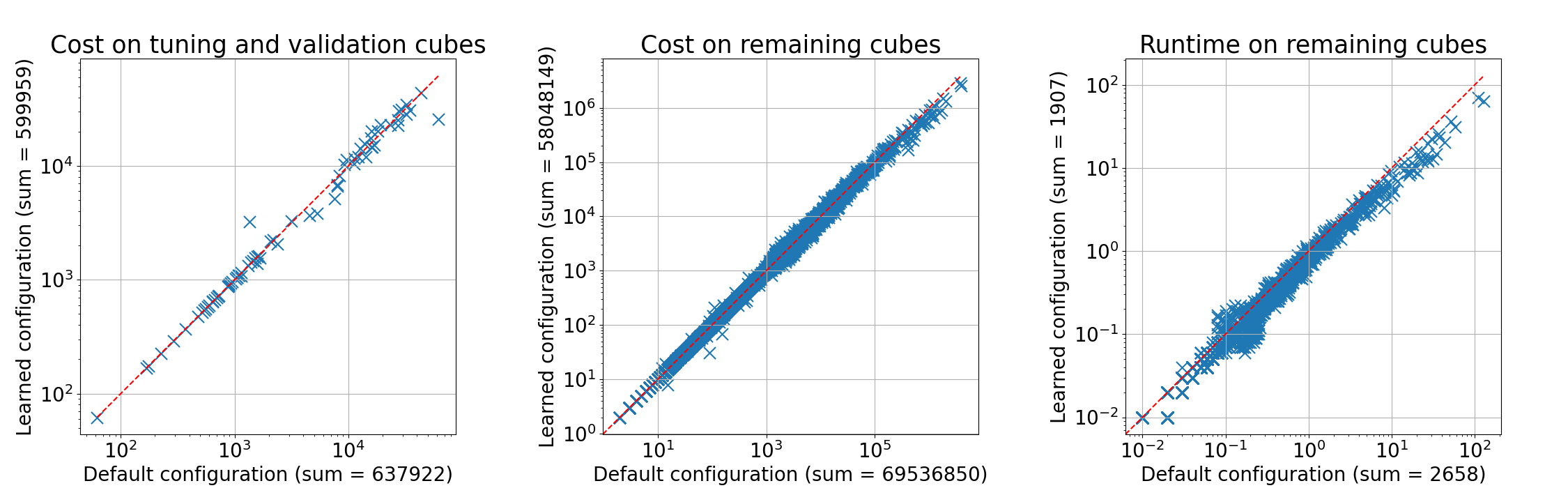}
    \caption{AProVE07-25}
    \label{fig:combined_benchmarkAProVE07-25.cnf}
\end{figure}

\begin{figure}[H]
    \centering
    \includegraphics[width=\textwidth]{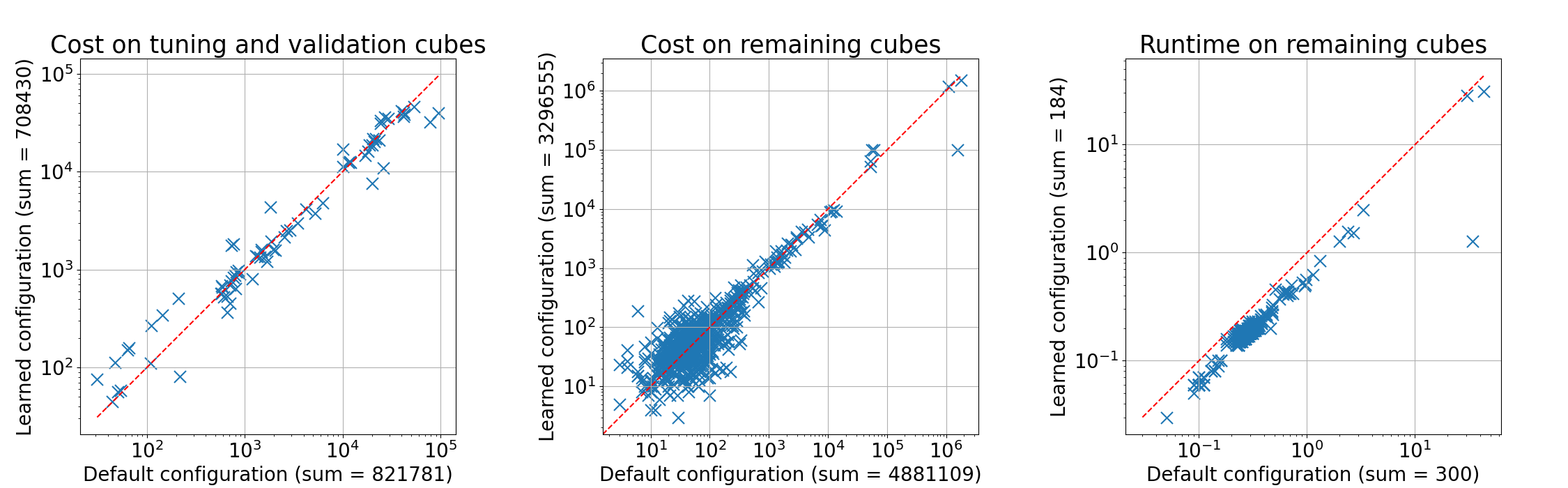}
    \caption{dated-5-19-u}
    \label{fig:combined_benchmarkdated-5-19-u.cnf}
\end{figure}

\begin{figure}[H]
    \centering
    \includegraphics[width=\textwidth]{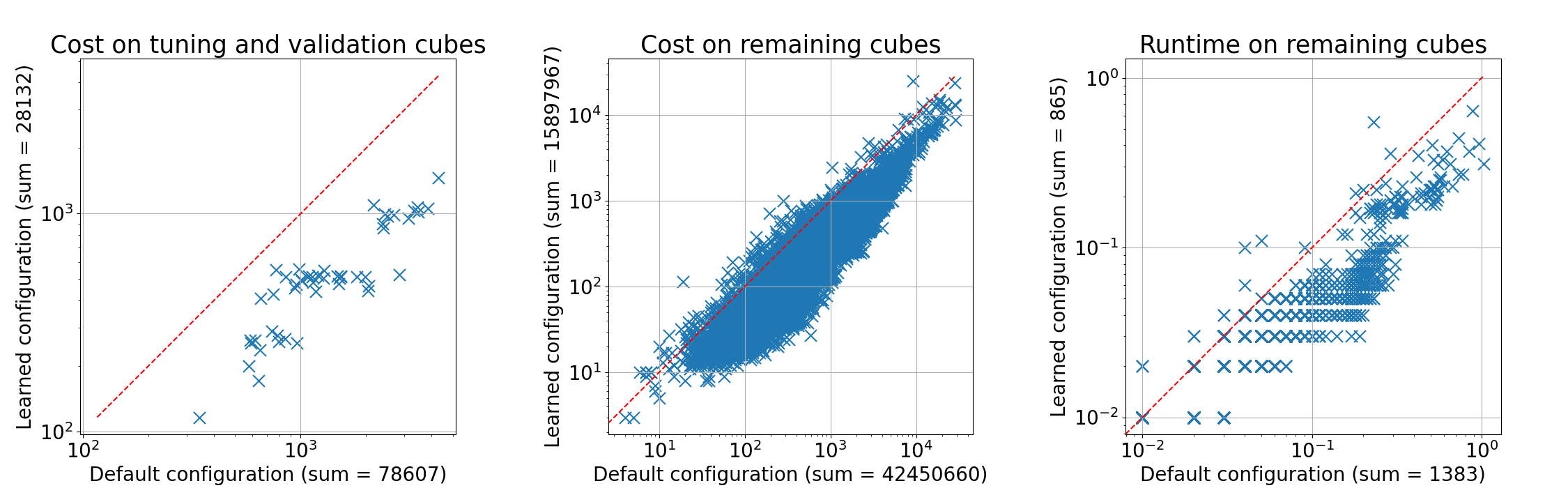}
    \caption{eq.atree.braun.12}
    \label{fig:combined_benchmarkeq.atree.braun.12.cnf}
\end{figure}

\begin{figure}[H]
    \centering
    \includegraphics[width=\textwidth]{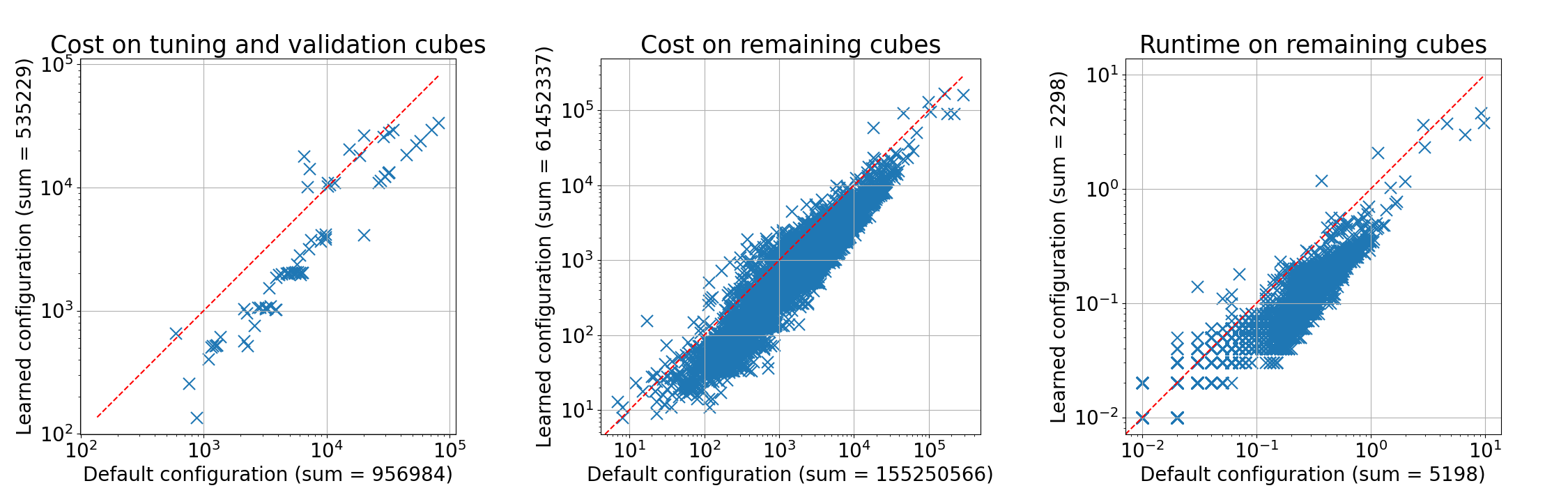}
    \caption{eq.atree.braun.13}
    \label{fig:combined_benchmarkeq.atree.braun.13.cnf}
\end{figure}

\begin{figure}[H]
    \centering
    \includegraphics[width=\textwidth]{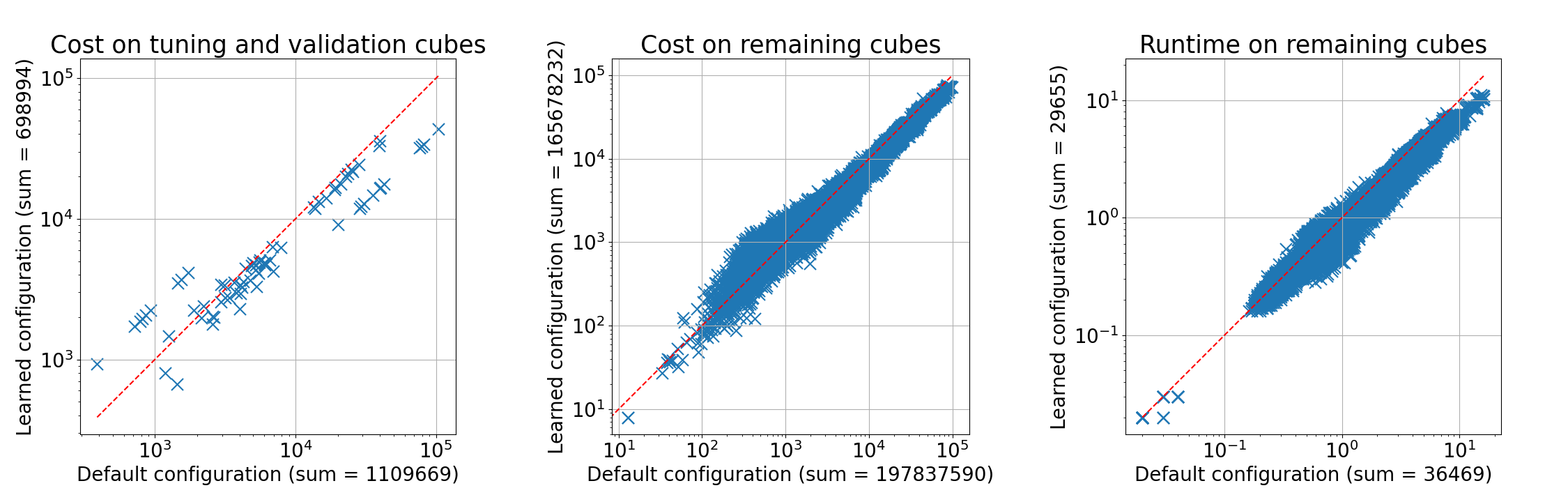}
    \caption{gss-24-s100}
    \label{fig:combined_benchmarkgss-24-s100.cnf}
\end{figure}

\begin{figure}[H]
    \centering
    \includegraphics[width=\textwidth]{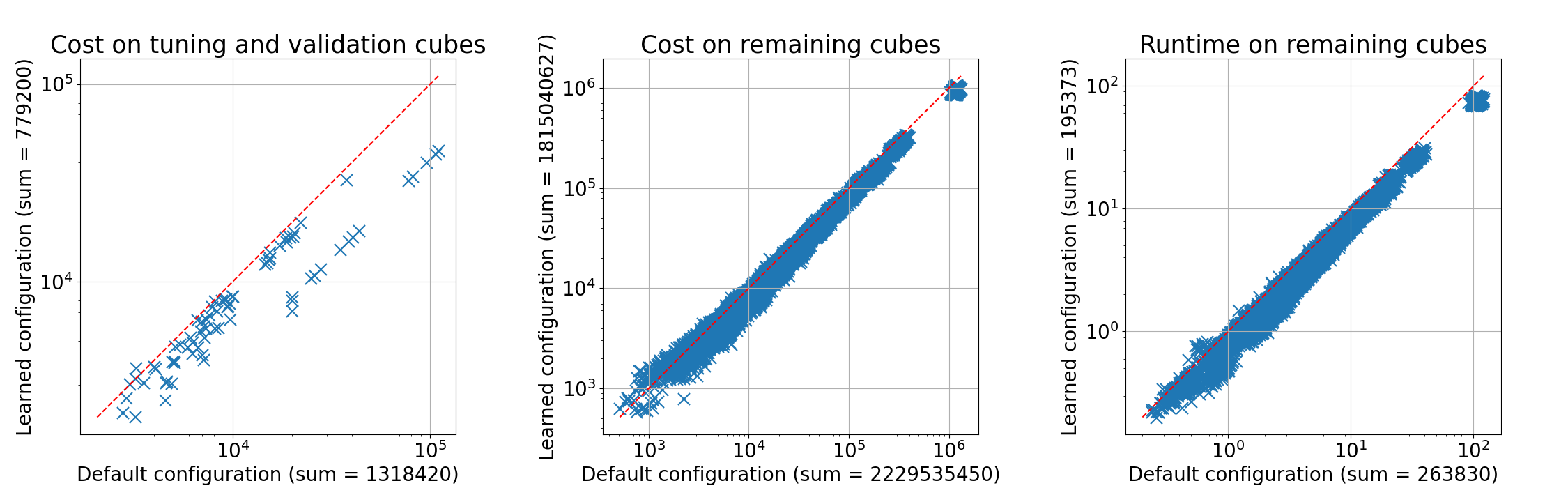}
    \caption{gss-26-s100}
    \label{fig:combined_benchmarkgss-26-s100.cnf}
\end{figure}

\begin{figure}[H]
    \centering
    \includegraphics[width=\textwidth]{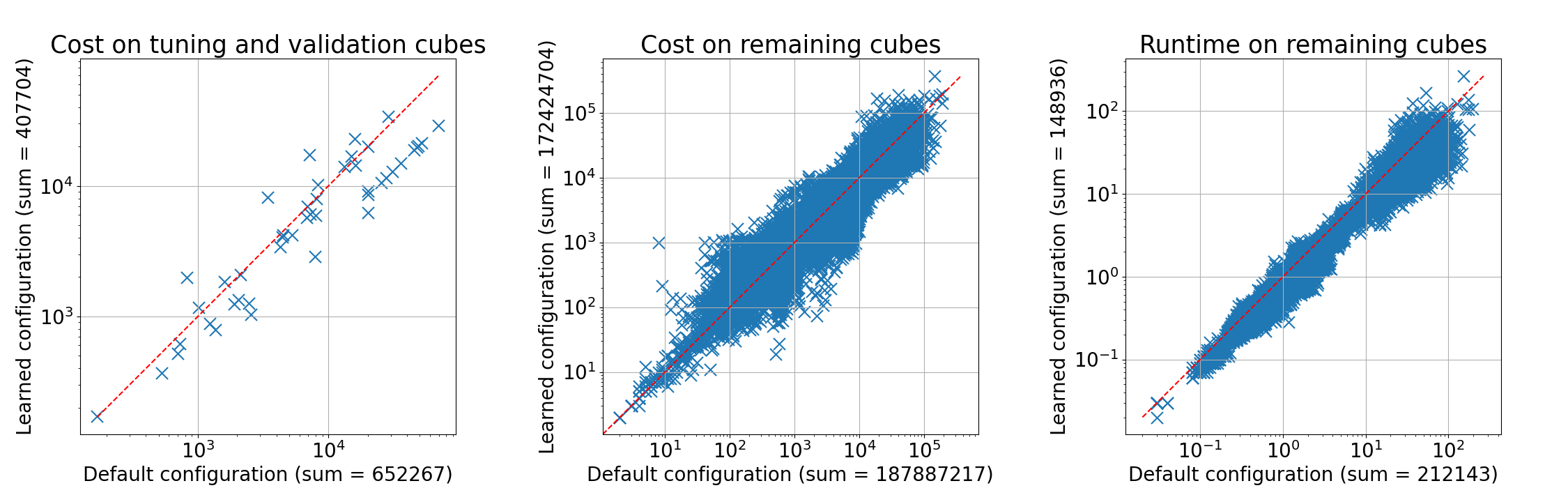}
    \caption{gus-md5-14}
    \label{fig:combined_benchmarkgus-md5-14.cnf}
\end{figure}

\begin{figure}[H]
    \centering
    \includegraphics[width=\textwidth]{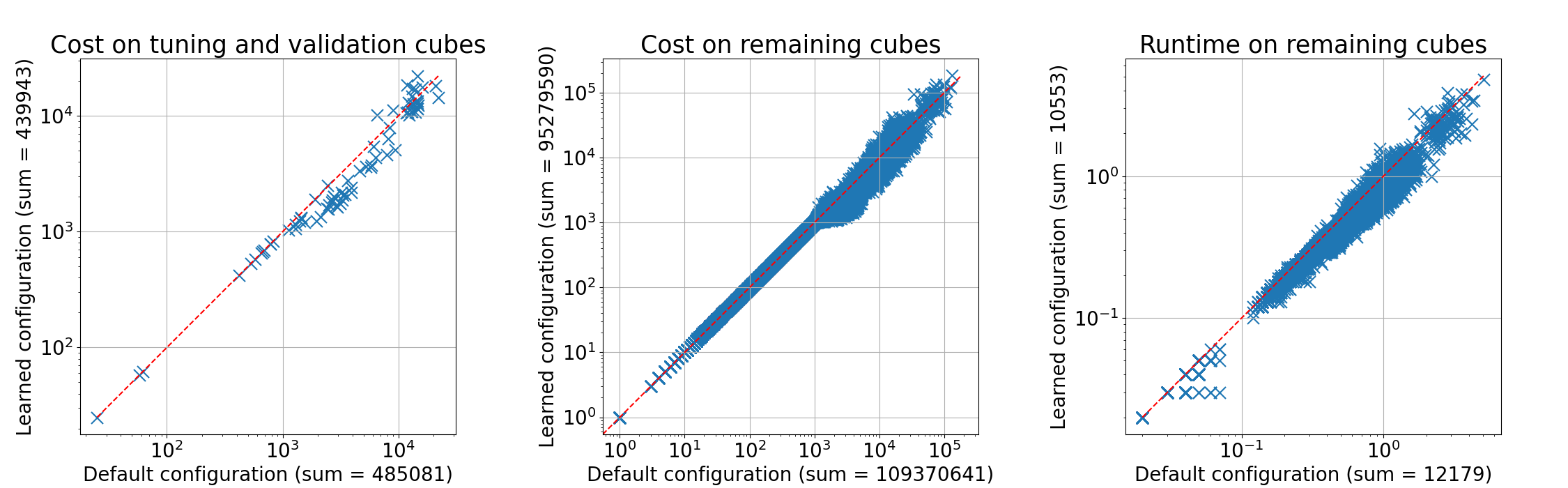}
    \caption{ndhf\_xits\_09\_UNS}
    \label{fig:combined_benchmarkndhf_xits_09_UNS.cnf}
\end{figure}

\begin{figure}[H]
    \centering
    \includegraphics[width=\textwidth]{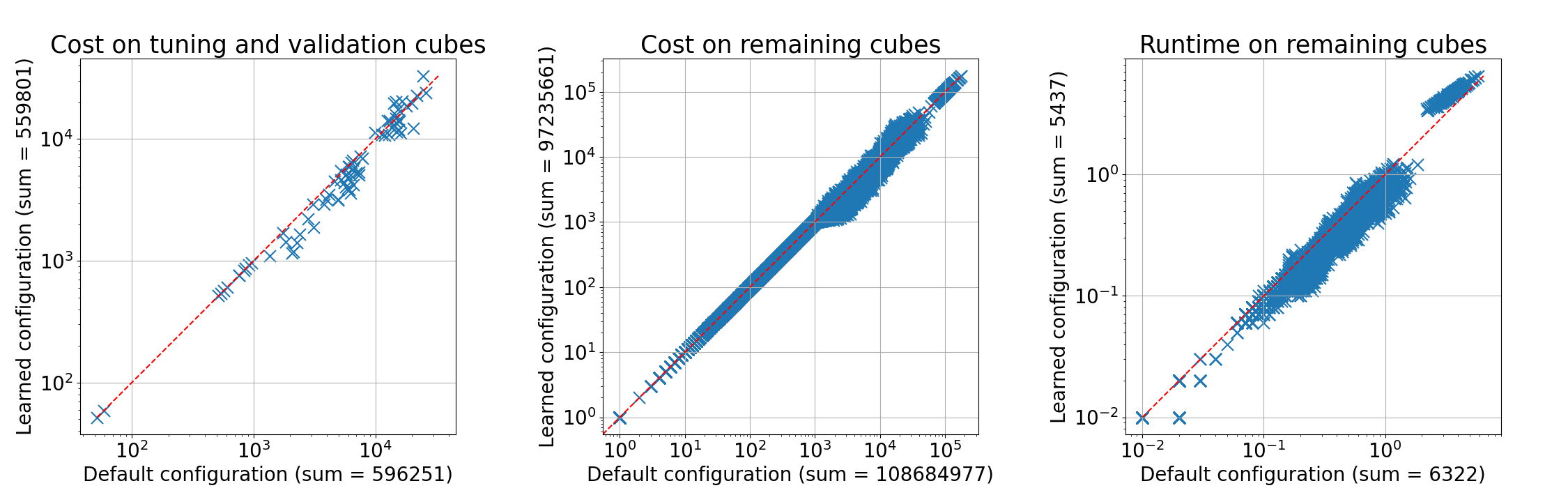}
    \caption{rbcl\_xits\_09\_UNK}
    \label{fig:combined_benchmarkrbcl_xits_09_UNK.cnf}
\end{figure}

\begin{figure}[H]
    \centering
    \includegraphics[width=\textwidth]{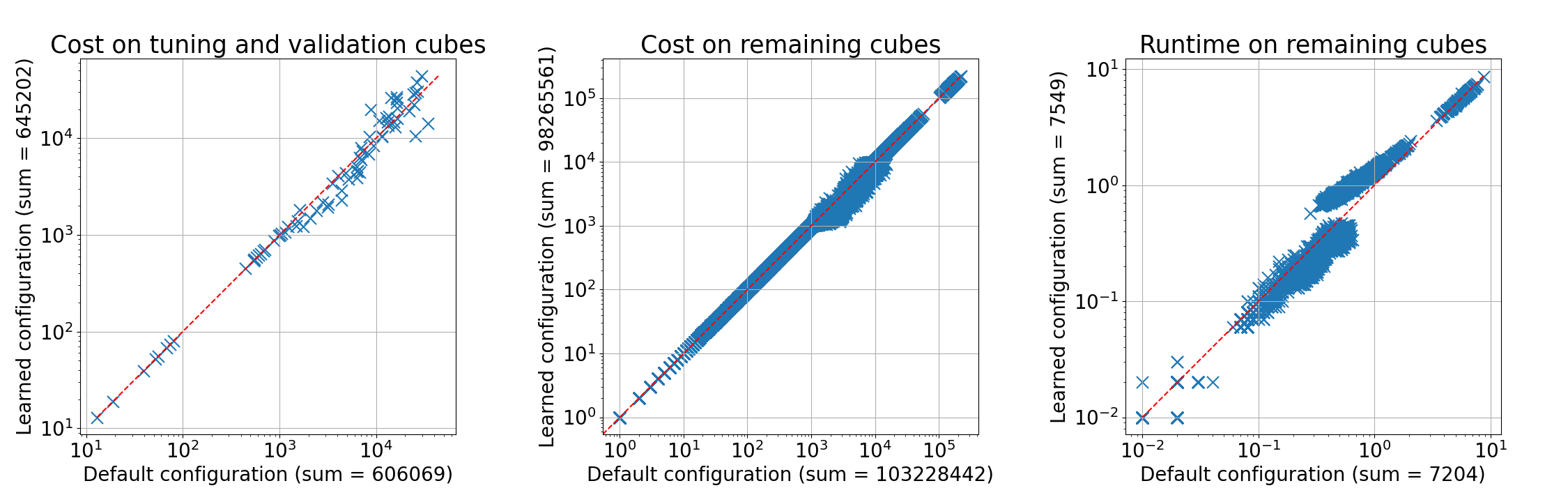}
    \caption{rpoc\_xits\_09\_UNS}
    \label{fig:combined_benchmarkrpoc_xits_09_UNS.cnf}
\end{figure}

\begin{figure}[H]
    \centering
    \includegraphics[width=\textwidth]{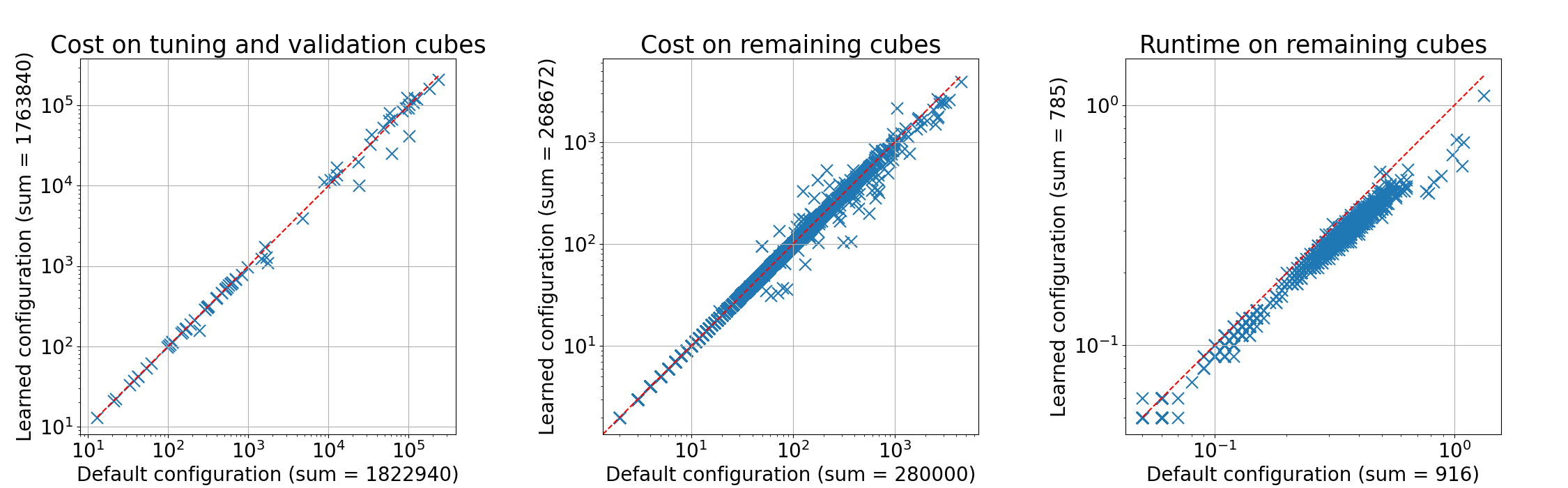}
    \caption{total-10-17-u}
    \label{fig:combined_benchmarktotal-10-17-u.cnf}
\end{figure}

\begin{figure}[H]
    \centering
    \includegraphics[width=\textwidth]{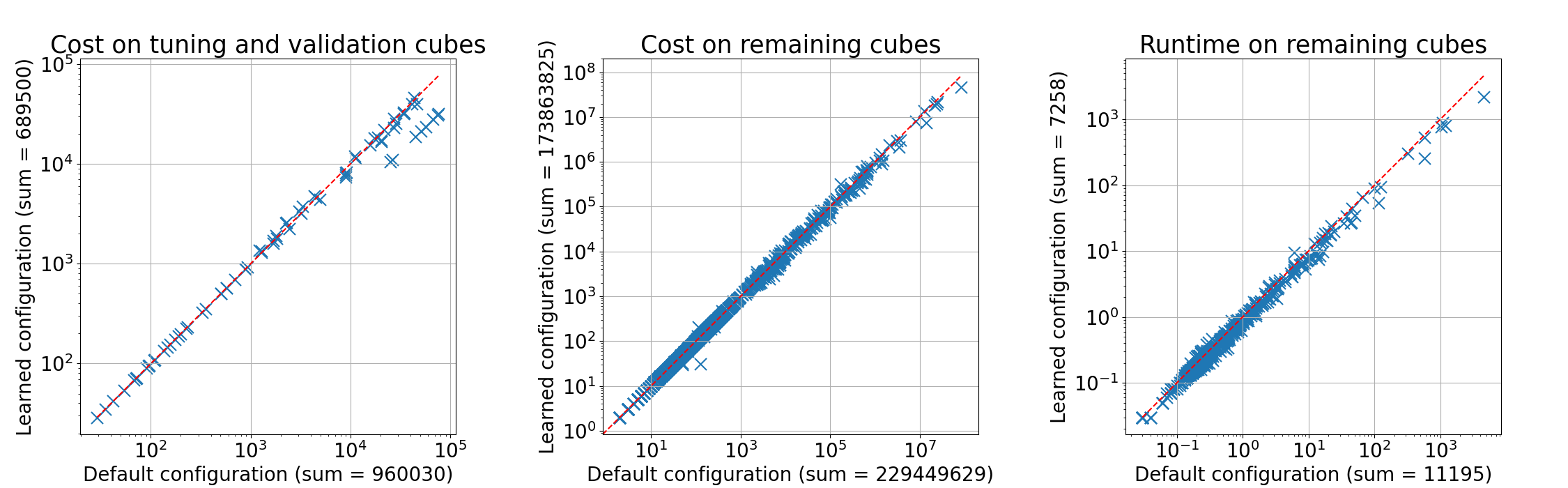}
    \caption{total-5-15-u}
    \label{fig:combined_benchmarktotal-5-15-u.cnf}
\end{figure}

\subsection{\crux benchmarks}

\begin{figure}[H]
    \centering
    \includegraphics[width=\textwidth]{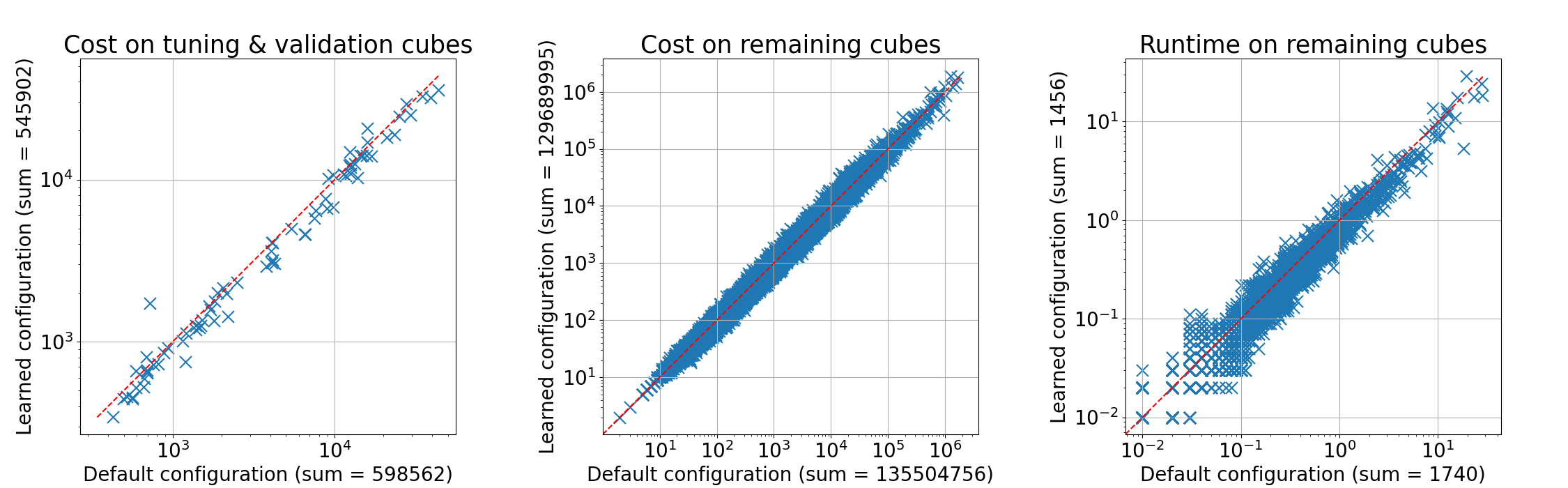}
    \caption{cruxmiter32seed0}
    \label{fig:combined_benchmarkcruxmiter32seed0.cnf}
\end{figure}

\begin{figure}[H]
    \centering
    \includegraphics[width=\textwidth]{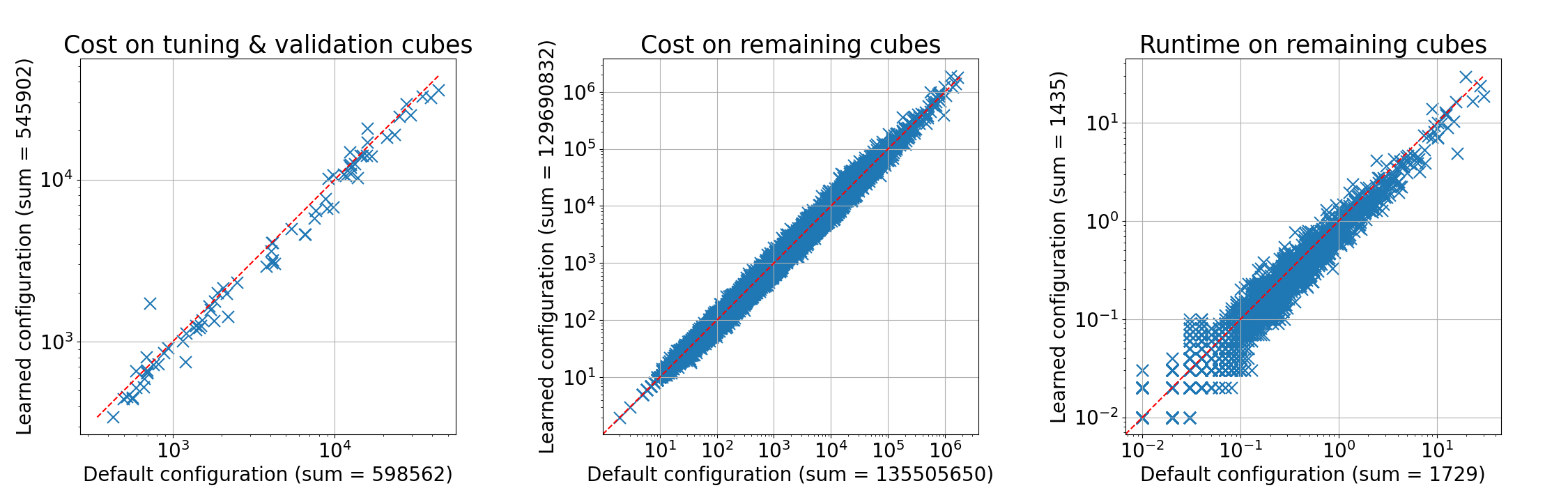}
    \caption{cruxmiter32seed1}
    \label{fig:combined_benchmarkcruxmiter32seed1.cnf}
\end{figure}

\begin{figure}[H]
    \centering
    \includegraphics[width=\textwidth]{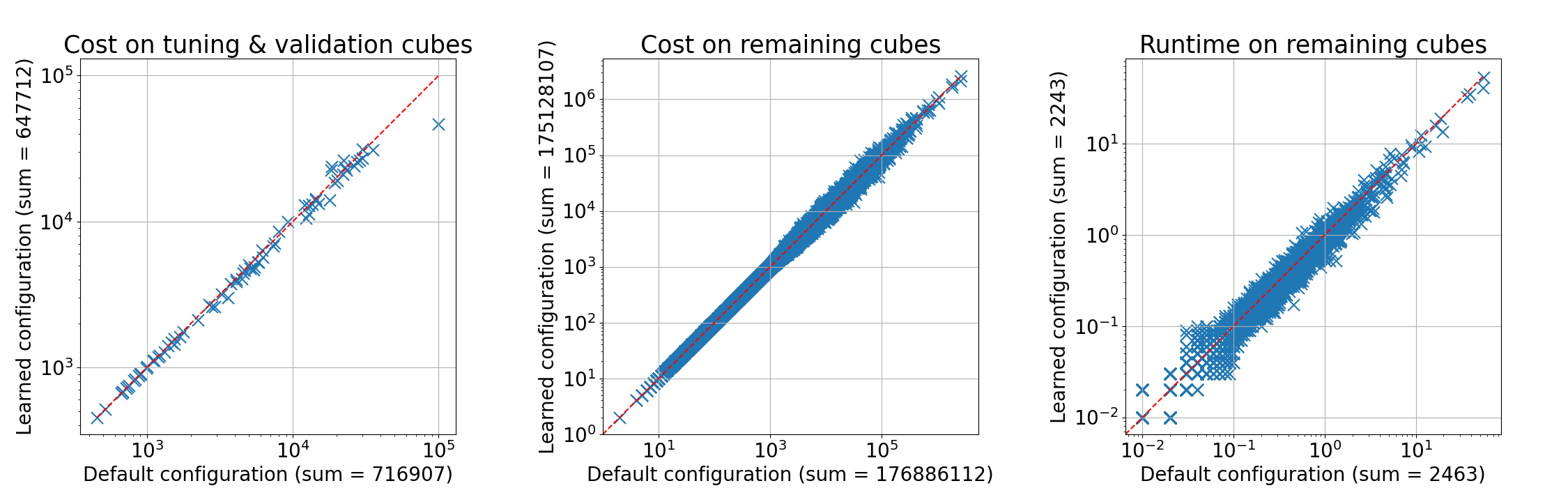}
    \caption{cruxmiter32seed2}
    \label{fig:combined_benchmarkcruxmiter32seed2.cnf}
\end{figure}

\begin{figure}[H]
    \centering
    \includegraphics[width=\textwidth]{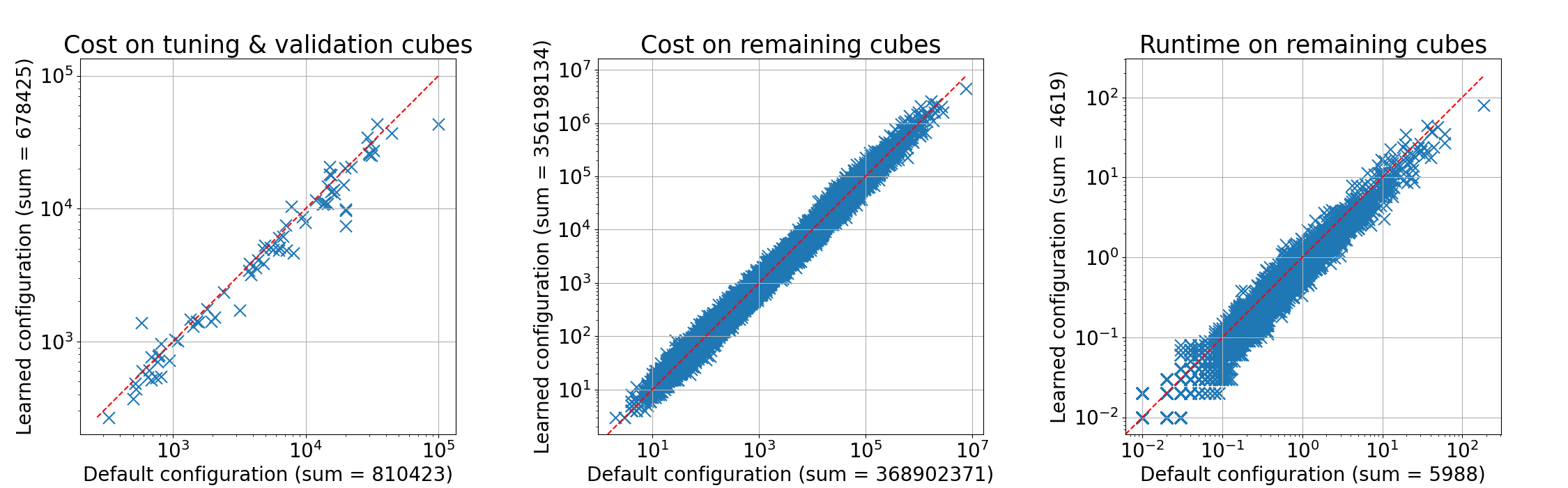}
    \caption{cruxmiter32seed3}
    \label{fig:combined_benchmarkcruxmiter32seed3.cnf}
\end{figure}

\begin{figure}[H]
    \centering
    \includegraphics[width=\textwidth]{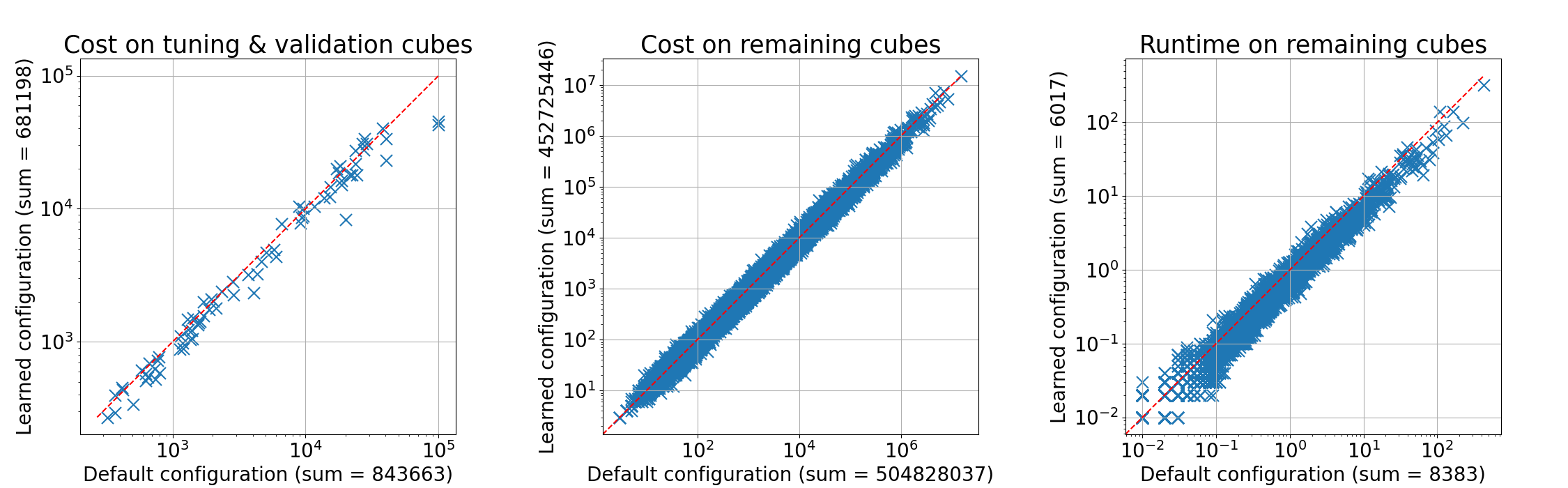}
    \caption{cruxmiter32seed4}
    \label{fig:combined_benchmarkcruxmiter32seed4.cnf}
\end{figure}

\begin{figure}[H]
    \centering
    \includegraphics[width=\textwidth]{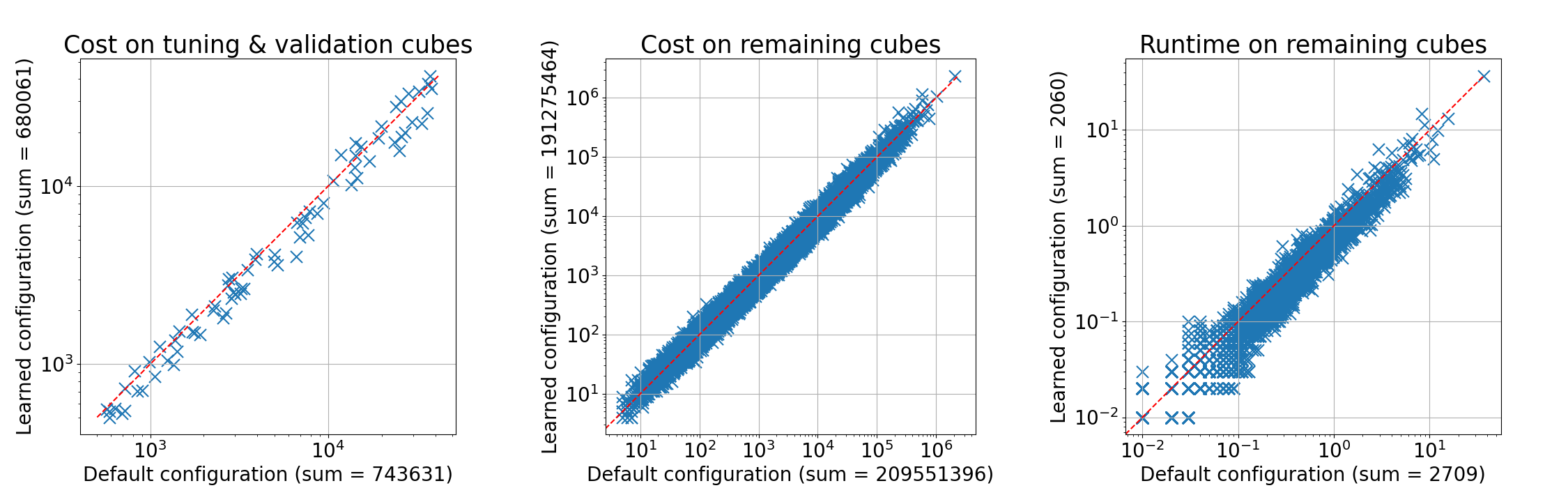}
    \caption{cruxmiter32seed5}
    \label{fig:combined_benchmarkcruxmiter32seed5.cnf}
\end{figure}

\begin{figure}[H]
    \centering
    \includegraphics[width=\textwidth]{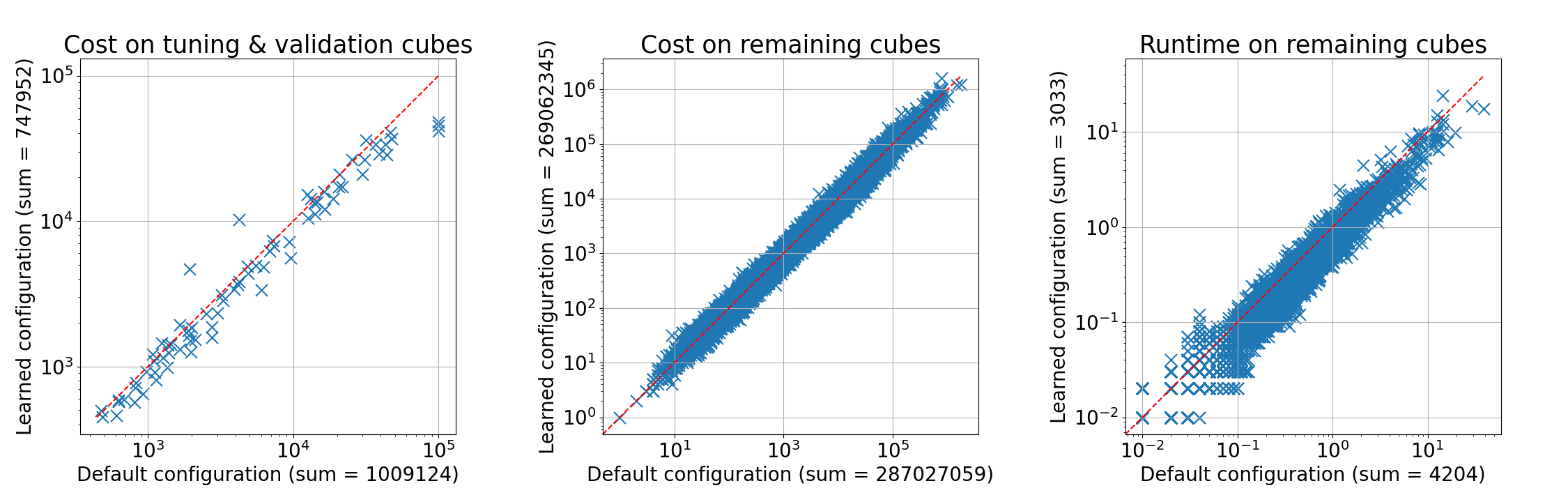}
    \caption{cruxmiter32seed6}
    \label{fig:combined_benchmarkcruxmiter32seed6.cnf}
\end{figure}

\begin{figure}[H]
    \centering
    \includegraphics[width=\textwidth]{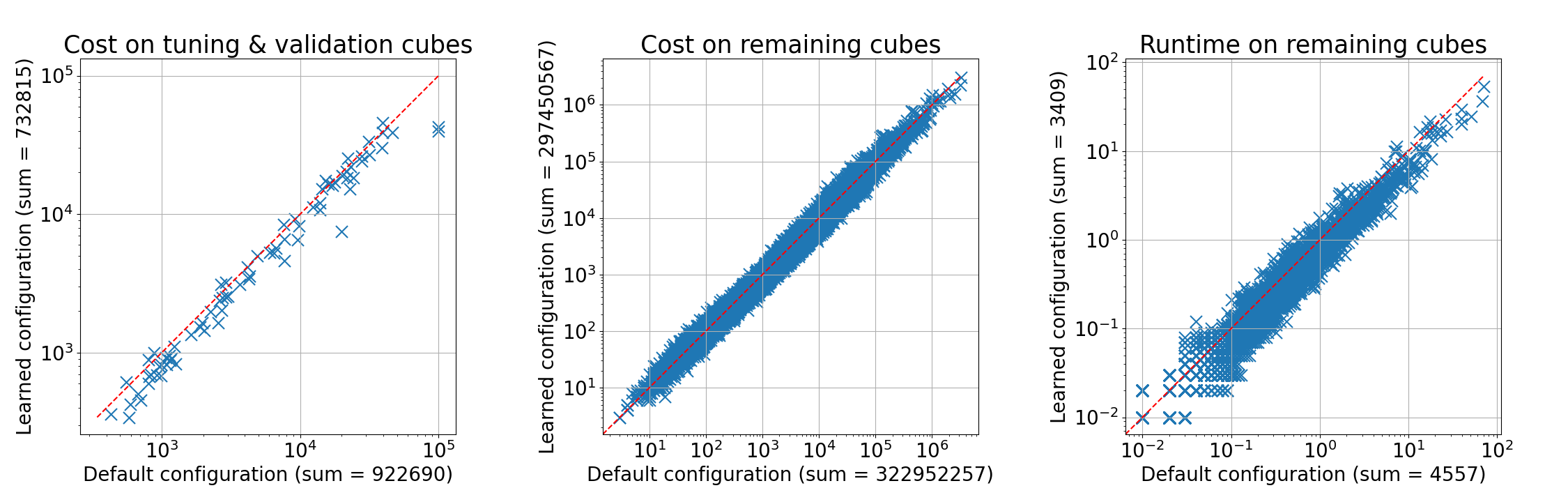}
    \caption{cruxmiter32seed7}
    \label{fig:combined_benchmarkcruxmiter32seed7.cnf}
\end{figure}

\begin{figure}[H]
    \centering
    \includegraphics[width=\textwidth]{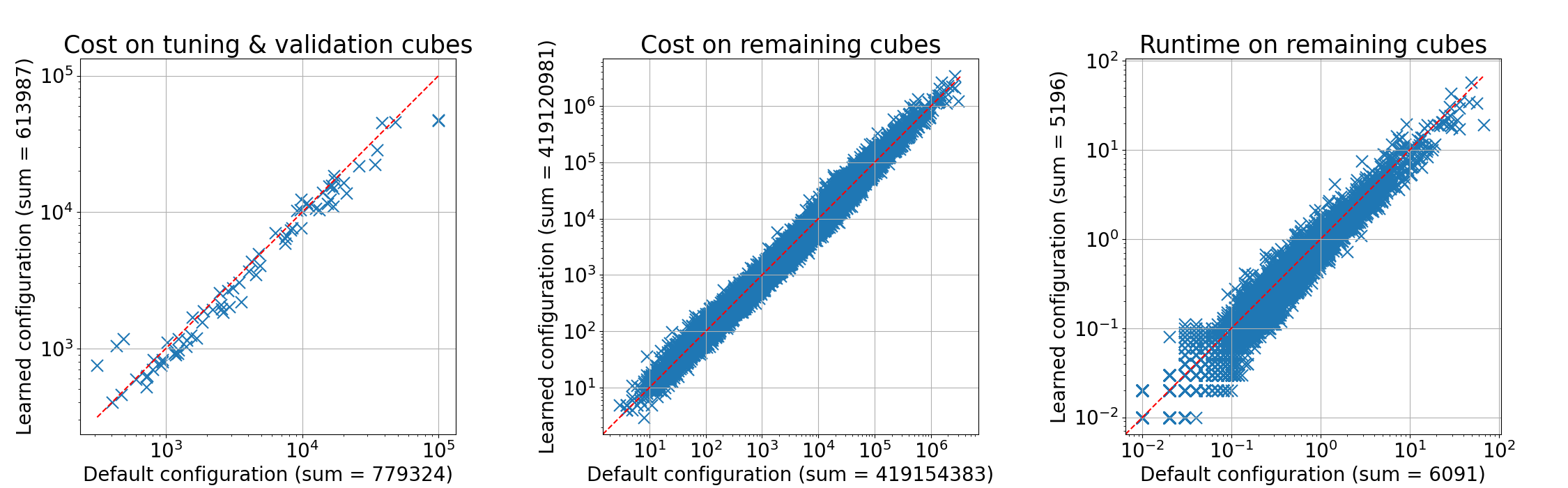}
    \caption{cruxmiter32seed8}
    \label{fig:combined_benchmarkcruxmiter32seed8.cnf}
\end{figure}

\begin{figure}[H]
    \centering
    \includegraphics[width=\textwidth]{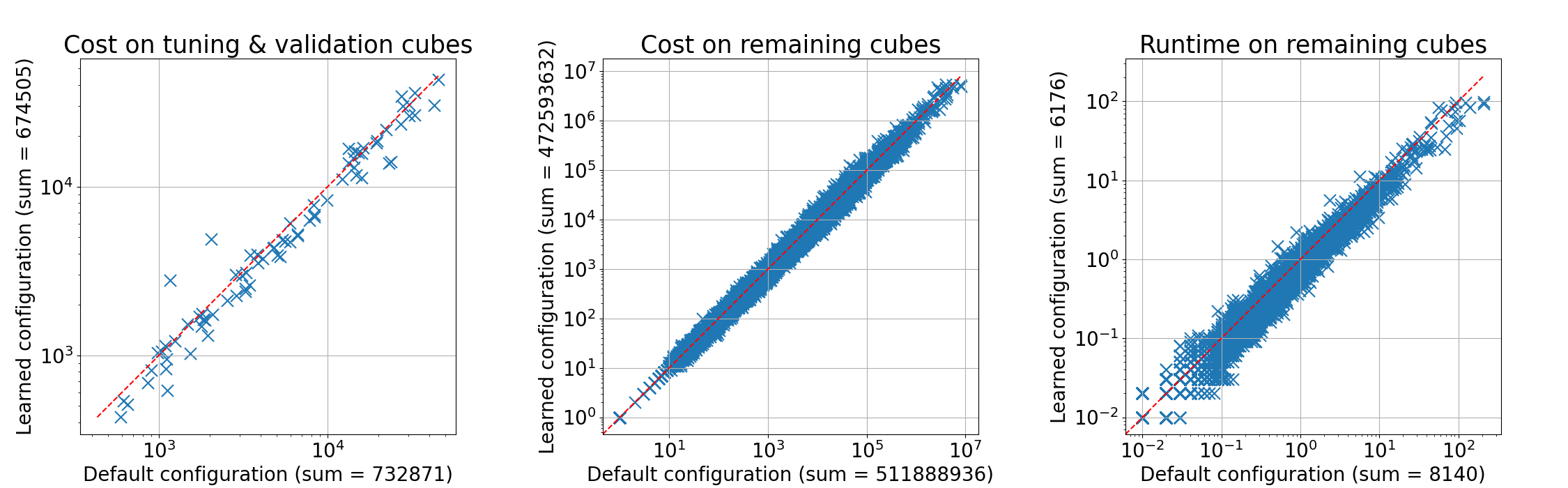}
    \caption{cruxmiter32seed9}
    \label{fig:combined_benchmarkcruxmiter32seed9.cnf}
\end{figure}

\subsection{\scBench}    

\begin{figure}[H]
    \centering
    \includegraphics[width=\textwidth]{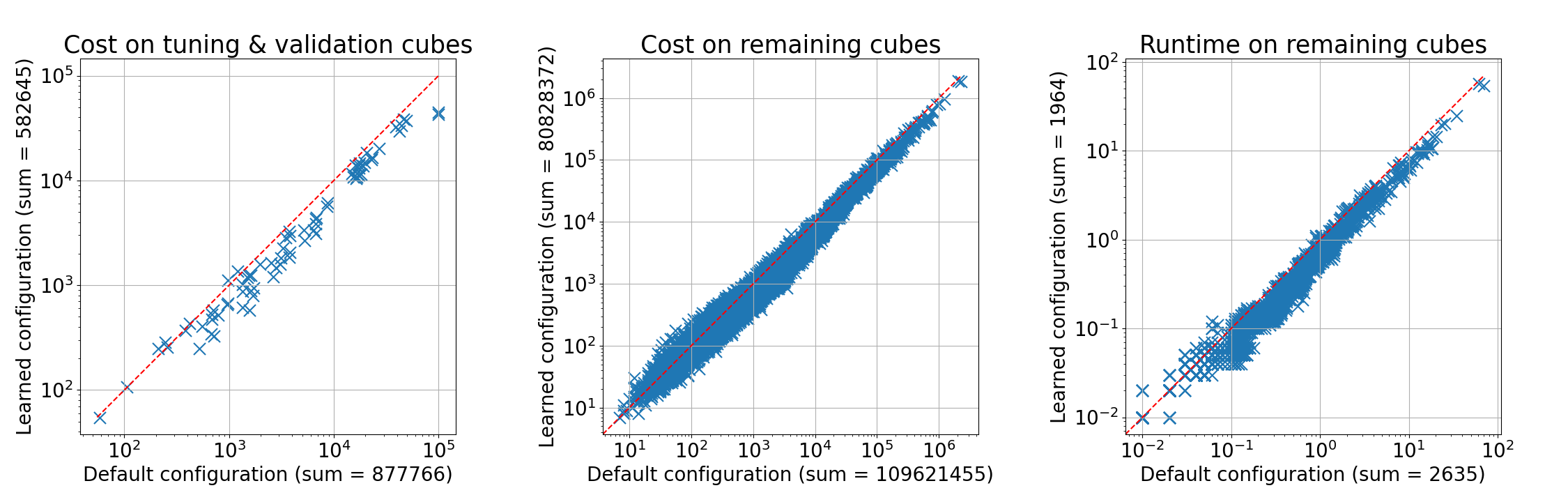}
    \caption{WS\_400\_24\_90\_10.apx\_1\_DS\_ST}
    \label{fig:combined_benchmarkWS_400_24_90_10.apx_1_DS_ST.cnf}
\end{figure}

\begin{figure}[H]
    \centering
    \includegraphics[width=\textwidth]{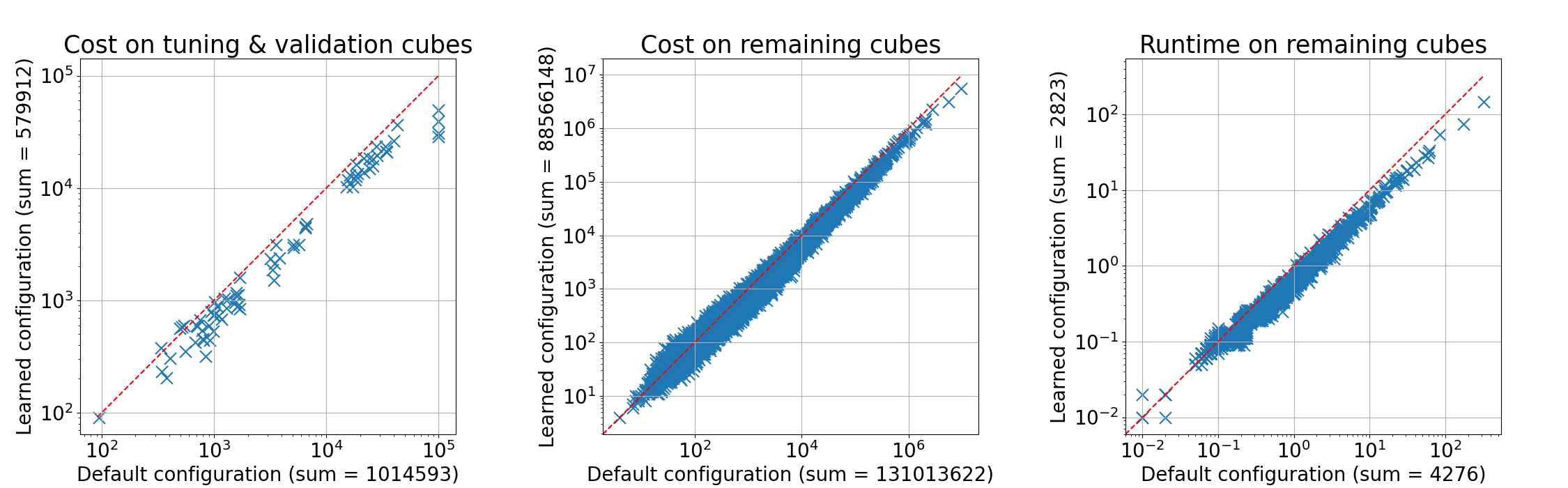}
    \caption{WS\_400\_32\_90\_10.apx\_1\_DC\_AD}
    \label{fig:combined_benchmarkWS_400_32_90_10.apx_1_DC_AD.cnf}
\end{figure}

\begin{figure}[H]
    \centering
    \includegraphics[width=\textwidth]{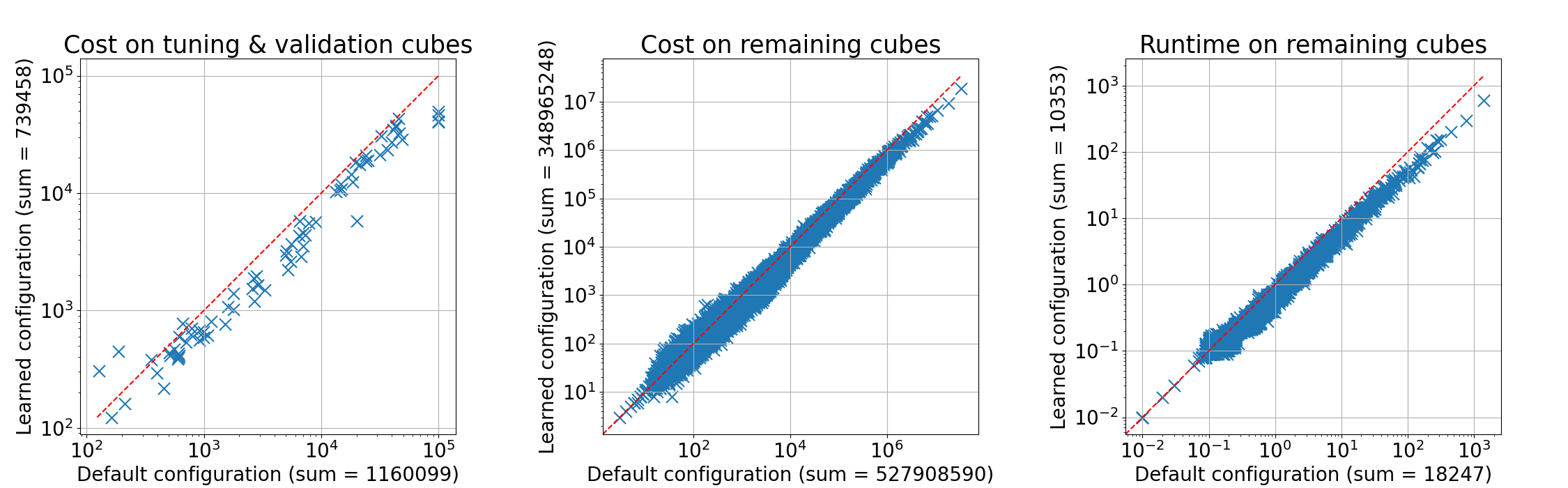}
    \caption{WS\_500\_32\_50\_10.apx\_2\_DC\_AD}
    \label{fig:combined_benchmarkWS_500_32_50_10.apx_2_DC_AD.cnf}
\end{figure}

\begin{figure}[H]
    \centering
    \includegraphics[width=\textwidth]{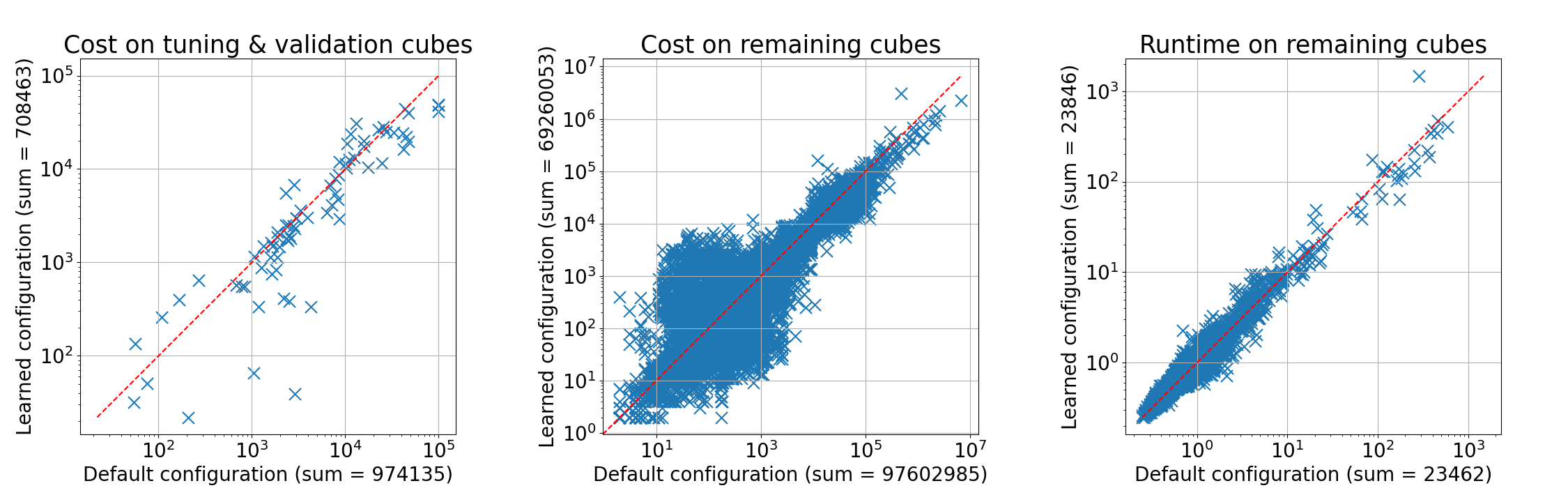}
    \caption{grs\_192\_256}
    \label{fig:combined_benchmarkgrs_192_256.cnf}
\end{figure}

\begin{figure}[H]
    \centering
    \includegraphics[width=\textwidth]{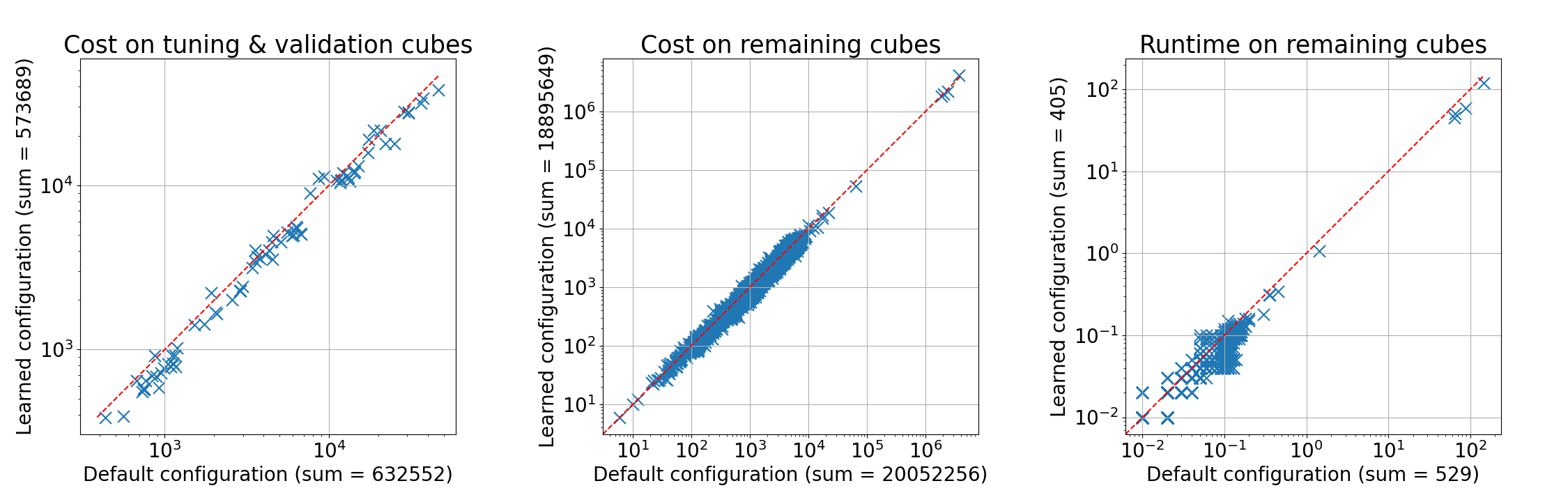}
    \caption{multiplier\_14bits\_\_miter\_14}
    \label{fig:combined_benchmarkmultiplier_14bits__miter_14.cnf}
\end{figure}

\begin{figure}[H]
\centering
\includegraphics[width=0.33\textwidth]{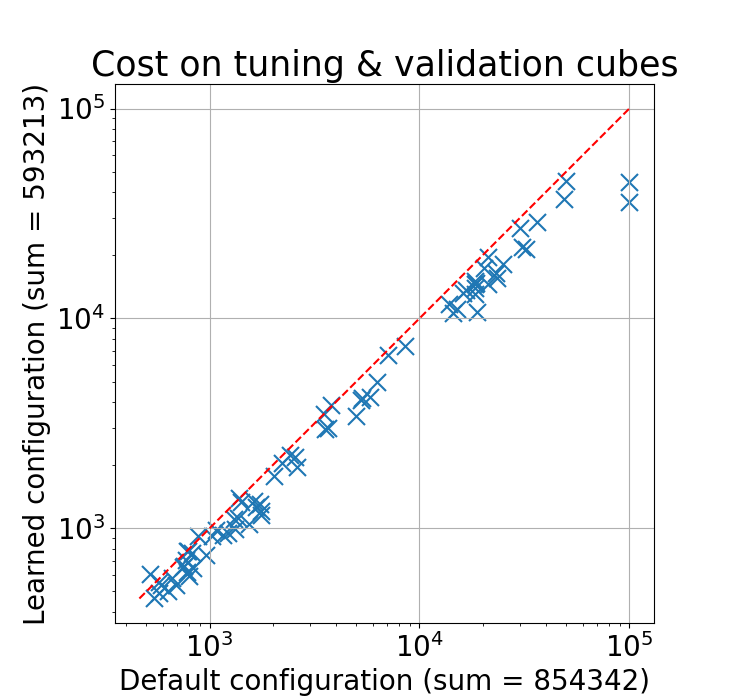}
\caption{php15\_mixed\_15percent\_blocked}
\label{fig:combined_benchmarkphp15_mixed_15percent_blocked.cnf}
\end{figure}

\begin{figure}[H]
    \centering
    \includegraphics[width=\textwidth]{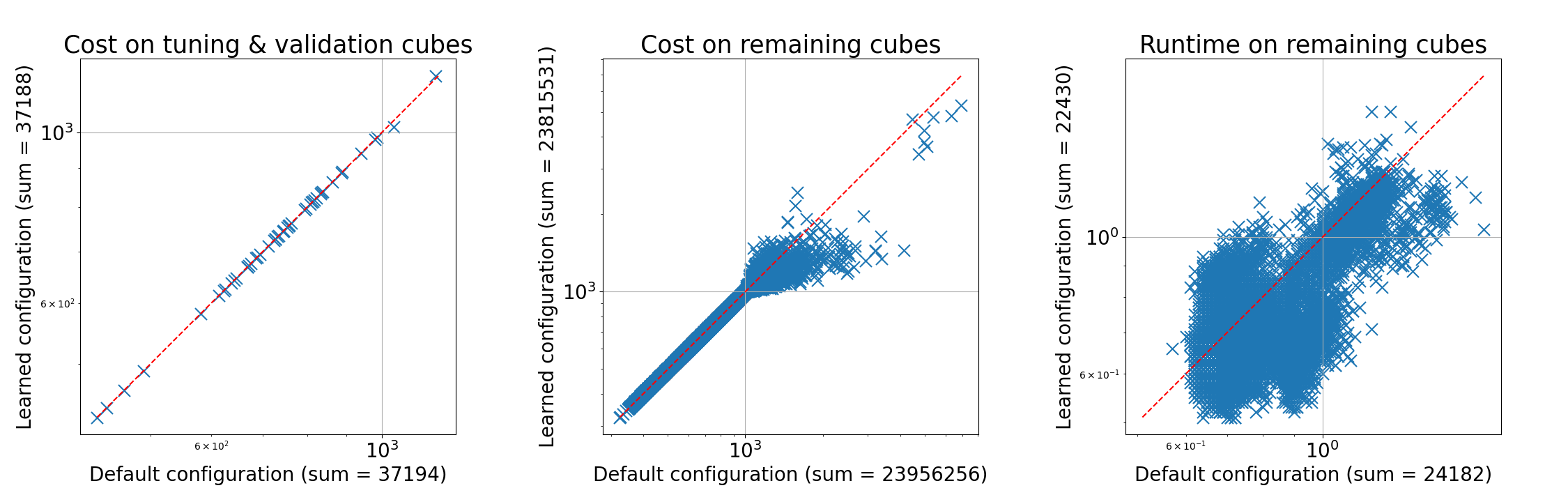}
    \caption{sin\_depth\_miter\_1}
    \label{fig:combined_benchmarksin_depth_miter_1.cnf}
\end{figure}

\newpage
\section{Evaluation on SC'24 Benchmarks}
\label{app:sc24}

For the completeness of the study, we evaluate \sys on the 400 SC'24 benchmarks. We note that \cnc is not the best strategy for solving this benchmark set: according to the competition results\footnote{\url{https://satcompetition.github.io/2024/downloads/satcomp24slides.pdf}}, within 500 seconds, sequential \kissat can already solve over half of the benchmarks, and \painless can solve over 75\%. On the other hand, \cnc is intended for solving challenging problems that cannot be efficiently tackled by a sequential solver or a portfolio strategy.

We run \cnc and \cncTuneAndVal with a one-hour wall clock timeout, 8 cores and 64G memory. As shown in Tab.~\ref{tab:sc24all}, \cncTuneAndVal solved all instances solved by \cnc, and uniquely solved 10 additional instances.

\begin{table}[h]
\centering
\caption{Comparing \cnc and \cncTuneAndVal on all benchamrks from SAT competitions 2024.}
\small
\label{tab:sc24all}
\begin{tabular}{lcccccc}
\toprule
 \tabtitle{Config.} & \tabtitle{Slv.} & 
 \tabtitle{Time} &
 \tabtitle{Uniq.} &
\tabtitle{Slv.} & 
 \tabtitle{Time} &
 \tabtitle{Uniq.} \\

 \cmidrule(lr){2-4}  \cmidrule(lr){5-7}
\cnc & 115 & 27956 & 0 & 121 & 42500  & 0 \\
\cncTuneAndVal & \best{123} & 38515 & \best{8} & \best{123} & 48507 & \best{2} \\
\bottomrule
\end{tabular}
\end{table}

A scatter plot of the runtime performance of both configurations is shown in Fig.~\ref{fig:scatter}. We observe that \sys tends to induce a small overhead for easy benchmarks (bottom left corner). On the other hand, \sys did manage to reduce the runtime for a number of harder unsatisfiable instances. Interestingly, we observe that in some cases \cncTuneAndVal finds satisfying assignment during the cube collection phase, which results in significant speed up on the corresponding satisfiable benchmarks.

\begin{figure}[h]
    \centering
\includegraphics[width=0.5\linewidth]{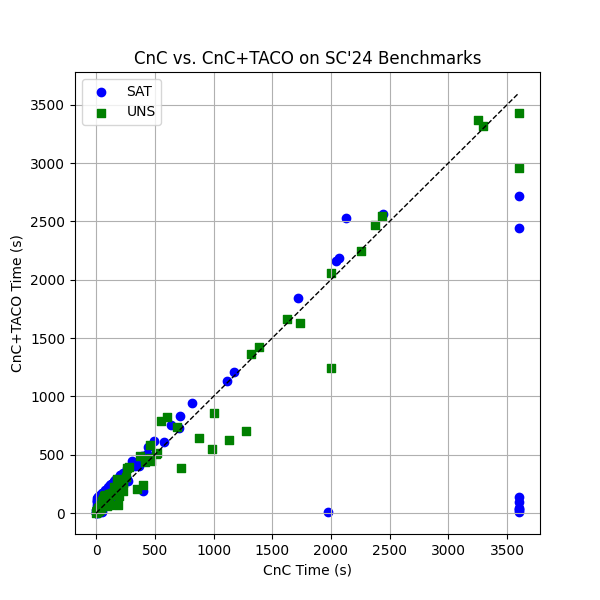}
\caption{Runtime of \cnc and \cncTuneAndVal on the SC'24 benchmarks}
\label{fig:scatter}
\end{figure}

\newpage
\section{Evaluation of sequential \sys}\label{app:seq-taco}

We also ran a sequential configuration \cncTuneAndValSeq that is the same as \cncTuneAndVal except that it spawns just one worker, and compared it against the default sequential mode of \marabou. Each job is given 8GB memory and a 4-hour wall-clock timeout. As shown in Tab.~\ref{tab:nnv-seq}, \cncTuneAndValSeq solves significantly more \nap benchmarks than \marabou. This is again due to the cost-limited breadth-first search pattern enabled by \sys. On the other hand, \cncTuneAndValSeq solved all of the 259 \altloop benchmarks, while \marabou timed out on one of them. The average runtime of \cncTuneAndValSeq is more than 2x faster than \marabou on the \altloop benchmarks. This suggests that \sys has the potential to improve the performance of an automated reasoning procedure in a sequential solving setting as well. 

\begin{table}[h]
\centering
\caption{Comparing sequential \sys with \marabou. Both configurations were given 1 core and a 4-hour timeout.} 
\small
\label{tab:nnv-seq}
\begin{tabular}{ccccccc}
\toprule
 & \multicolumn{3}{c}{\tabtitle{\cncTuneAndValSeq}} & 
 \multicolumn{2}{c}{\tabtitle{\marabou}} \\
\cmidrule(lr){2-4}\cmidrule(lr){5-6}
\tabtitle{Family} (\#) & 
\tabtitle{\colsolved} &
\tabtitle{\coltimetotal} & \tabtitle{\coltimelearning} & \tabtitle{\colsolved} & \tabtitle{\coltimetotal} \\
\benchmark{NAP} (235) 
& \best{209} & \best{301089} & 48605 & 113 & 669170 \\
\benchmark{Altloop} (259)
& \best{259} & \best{104182} & 19243 & 258 & 243271 \\
\bottomrule
\end{tabular}
\end{table}

\newpage
\section{Detailed Results on Neural Network Verification Benchmarks}\label{app:detailed-nnv}

\subsection{\cnc vs. \cncTuneAndVal}
\begin{center}
\scriptsize
\begin{xltabular}{\textwidth}{@{}llrrrrrrrr@{}}
\toprule
& & \multicolumn{2}{c}{\tabtitle{\cnc}} & \multicolumn{4}{c}{\tabtitle{\cncTuneAndVal}} &
 \multicolumn{2}{c}{\tabtitle{\dncmarabou}} 
\\
\cmidrule(lr){3-4} \cmidrule(lr){5-8} \cmidrule(lr){9-10}
\tabtitle{Family} & \tabtitle{Benchmark}  & \tabtitle{Res.} & \coltimetotal & \tabtitle{Res.} & \coltimetotal & \coltimelearning & \tabtitle{Config.} & \tabtitle{Res.} & \coltimetotal \\
AltLoop & REI.id76.ep90543 & UNS & 69.4 & UNS & 78.7 & 15.2 &  & UNS & 33.3 \\
AltLoop & REI.id35.ep94624 & SAT & 0.3 & SAT & 0.6 & 0.4 & (slv. during collect.) & SAT & 0.3 \\
AltLoop & REI.id267.ep96300 & UNS & 14.3 & UNS & 17.8 & 17.7 & (slv. during collect.) & UNS & 0.3 \\
AltLoop & REI.id321.ep55647 & UNS & 20.4 & UNS & 37.3 & 36.4 &  & UNS & 6.5 \\
AltLoop & REI.id91.ep95105 & UNS & 104.5 & UNS & 310.0 & 16.3 &  & UNS & 66.4 \\
AltLoop & REI.id176.ep98430 & UNS & 143.6 & UNS & 191.3 & 17.8 &  & UNS & 98.3 \\
AltLoop & REI.id176.ep96138 & UNS & 61.2 & UNS & 68.9 & 12.6 &  & UNS & 40.5 \\
AltLoop & REI.id73.ep84023 & UNS & 72.4 & UNS & 101.0 & 17.0 &  & UNS & 52.1 \\
AltLoop & REI.id343.ep63040 & UNS & 23.1 & UNS & 40.0 & 22.2 & br.=pol. p.s.f=1 & UNS & 9.9 \\
AltLoop & REI.id90.ep84292 & UNS & 212.0 & UNS & 234.4 & 15.9 &  & UNS & 545.9 \\
AltLoop & REI.id114.ep94581 & SAT & 13.4 & SAT & 0.8 & 0.6 & (slv. during collect.) & SAT & 0.4 \\
AltLoop & REI.id180.ep69795 & SAT & 0.6 & SAT & 1.2 & 1.0 & (slv. during collect.) & SAT & 0.3 \\
AltLoop & REI.id215.ep97421 & SAT & 35.4 & SAT & 2.8 & 2.7 & (slv. during collect.) & SAT & 0.3 \\
AltLoop & REI.id296.ep74854 & UNS & 44.0 & UNS & 56.6 & 13.9 &  & UNS & 17.8 \\
AltLoop & REI.id298.ep85589 & UNS & 76.2 & UNS & 94.2 & 13.9 &  & UNS & 63.1 \\
AltLoop & REI.id299.ep81175 & UNS & 40.2 & UNS & 56.8 & 17.8 &  & UNS & 21.6 \\
AltLoop & REI.id318.ep94480 & SAT & 111.7 & SAT & 2.0 & 1.9 & (slv. during collect.) & SAT & 0.3 \\
AltLoop & REI.id535.ep87226 & SAT & 0.8 & SAT & 1.7 & 1.6 & (slv. during collect.) & SAT & 0.4 \\
AltLoop & REI.id530.ep82130 & SAT & 0.5 & SAT & 0.7 & 0.6 & (slv. during collect.) & SAT & 0.3 \\
AltLoop & REI.id76.ep91480 & UNS & 65.5 & UNS & 112.8 & 18.8 &  & UNS & 44.0 \\
AltLoop & REI.id77.ep49509 & UNS & 34.4 & UNS & 48.2 & 18.7 &  & UNS & 12.9 \\
AltLoop & REI.id518.ep95639 & UNS & 54.8 & UNS & 63.2 & 16.0 &  & UNS & 36.6 \\
AltLoop & REI.id240.ep89317 & UNS & 31.9 & UNS & 43.1 & 17.0 & br.=pol. p.s.f=2 & UNS & 20.8 \\
AltLoop & REI.id540.ep94690 & SAT & 0.5 & SAT & 0.7 & 0.5 & (slv. during collect.) & SAT & 0.4 \\
AltLoop & REI.id73.ep91504 & SAT & 0.5 & SAT & 0.7 & 0.6 & (slv. during collect.) & SAT & 0.3 \\
AltLoop & REI.id76.ep98692 & UNS & 52.5 & UNS & 56.2 & 15.2 &  & SAT & 2.7 \\
AltLoop & REI.id35.ep85842 & SAT & 0.4 & SAT & 0.6 & 0.5 & (slv. during collect.) & SAT & 0.3 \\
AltLoop & REI.id518.ep95900 & UNS & 68.7 & UNS & 98.9 & 19.2 &  & UNS & 51.5 \\
AltLoop & REI.id343.ep63051 & UNS & 24.9 & UNS & 55.6 & 32.1 & br.=pol. p.s.f=1 & UNS & 10.1 \\
AltLoop & REI.id518.ep70223 & UNS & 47.2 & UNS & 58.7 & 13.0 &  & UNS & 25.1 \\
AltLoop & REI.id298.ep79143 & UNS & 79.2 & UNS & 62.2 & 18.9 & p.s.f=1 & UNS & 84.3 \\
AltLoop & REI.id512.ep97385 & SAT & 30.1 & SAT & 1.7 & 1.5 & (slv. during collect.) & SAT & 0.4 \\
AltLoop & REI.id530.ep91636 & SAT & 0.3 & SAT & 0.5 & 0.4 & (slv. during collect.) & SAT & 0.3 \\
AltLoop & REI.id35.ep61788 & SAT & 0.5 & SAT & 1.6 & 1.5 & (slv. during collect.) & SAT & 0.3 \\
AltLoop & REI.id67.ep87611 & UNS & 33.0 & UNS & 46.2 & 18.9 & p.s.f=5 & UNS & 23.6 \\
AltLoop & REI.id535.ep91323 & UNS & 64.7 & UNS & 57.2 & 12.4 &  & UNS & 13.7 \\
AltLoop & REI.id491.ep96326 & SAT & 135.8 & SAT & 1.4 & 1.3 & (slv. during collect.) & SAT & 0.4 \\
AltLoop & REI.id491.ep67924 & SAT & 0.5 & SAT & 0.5 & 0.4 & (slv. during collect.) & SAT & 0.4 \\
AltLoop & REI.id174.ep97933 & SAT & 0.4 & SAT & 0.5 & 0.4 & (slv. during collect.) & SAT & 0.3 \\
AltLoop & REI.id114.ep94159 & SAT & 15.8 & SAT & 0.8 & 0.6 & (slv. during collect.) & SAT & 0.5 \\
AltLoop & REI.id165.ep96974 & SAT & 14.5 & SAT & 2.2 & 2.1 & (slv. during collect.) & SAT & 0.6 \\
AltLoop & REI.id535.ep76916 & UNS & 47.8 & UNS & 56.0 & 13.3 &  & UNS & 15.6 \\
AltLoop & REI.id196.ep86142 & SAT & 0.6 & SAT & 0.9 & 0.8 & (slv. during collect.) & SAT & 0.4 \\
AltLoop & REI.id240.ep98891 & UNS & 46.2 & UNS & 71.5 & 20.6 & br.=pol. p.s.f=1 & UNS & 27.2 \\
AltLoop & REI.id298.ep75768 & UNS & 75.7 & UNS & 109.8 & 15.9 &  & UNS & 75.5 \\
AltLoop & REI.id91.ep54663 & UNS & 34.7 & UNS & 66.4 & 20.1 &  & UNS & 19.1 \\
AltLoop & REI.id502.ep71027 & SAT & 0.6 & SAT & 0.7 & 0.6 & (slv. during collect.) & SAT & 0.4 \\
AltLoop & REI.id540.ep57595 & UNS & 431.8 & UNS & 118.8 & 22.1 & p.s.f=1 & SAT & 483.1 \\
AltLoop & REI.id352.ep51510 & UNS & 24.9 & UNS & 30.7 & 22.6 &  & UNS & 6.9 \\
AltLoop & REI.id549.ep87887 & SAT & 0.6 & SAT & 1.0 & 0.8 & (slv. during collect.) & SAT & 0.4 \\
AltLoop & REI.id399.ep96449 & SAT & 91.4 & SAT & 111.2 & 17.4 &  & SAT & 5.3 \\
AltLoop & REI.id165.ep95976 & SAT & 3.4 & SAT & 1.1 & 0.9 & (slv. during collect.) & SAT & 0.5 \\
AltLoop & REI.id158.ep53499 & SAT & 26.8 & SAT & 32.4 & 16.9 & p.s.f=1 & SAT & 0.5 \\
AltLoop & REI.id201.ep48219 & SAT & 4.6 & SAT & 0.7 & 0.5 & (slv. during collect.) & SAT & 0.3 \\
AltLoop & REI.id165.ep94522 & SAT & 3.1 & SAT & 1.0 & 0.9 & (slv. during collect.) & SAT & 0.4 \\
AltLoop & REI.id201.ep48186 & SAT & 4.5 & SAT & 0.5 & 0.4 & (slv. during collect.) & SAT & 0.3 \\
AltLoop & REI.id308.ep94787 & UNS & 47.0 & UNS & 68.4 & 13.7 &  & UNS & 23.4 \\
AltLoop & REI.id457.ep79411 & SAT & 0.5 & SAT & 0.5 & 0.4 & (slv. during collect.) & SAT & 0.4 \\
AltLoop & REI.id303.ep94994 & SAT & 0.6 & SAT & 1.9 & 1.7 & (slv. during collect.) & SAT & 0.3 \\
AltLoop & REI.id180.ep77624 & SAT & 0.6 & SAT & 1.3 & 1.2 & (slv. during collect.) & SAT & 0.3 \\
AltLoop & REI.id457.ep81404 & SAT & 0.5 & SAT & 0.6 & 0.4 & (slv. during collect.) & SAT & 0.3 \\
AltLoop & REI.id239.ep73100 & UNS & 217.7 & UNS & 82.9 & 18.2 & p.s.f=1 & UNS & 309.1 \\
AltLoop & REI.id215.ep96277 & SAT & 0.4 & SAT & 0.7 & 0.6 & (slv. during collect.) & SAT & 0.3 \\
AltLoop & REI.id299.ep92705 & UNS & 114.8 & UNS & 71.7 & 20.3 & p.s.f=1 & UNS & 54.9 \\
AltLoop & REI.id308.ep97397 & SAT & 12.3 & SAT & 25.6 & 11.4 &  & SAT & 0.3 \\
AltLoop & REI.id35.ep61486 & SAT & 1.8 & SAT & 1.5 & 1.4 & (slv. during collect.) & SAT & 0.5 \\
AltLoop & REI.id512.ep70217 & SAT & 17.2 & SAT & 32.8 & 13.9 &  & SAT & 5.5 \\
AltLoop & REI.id240.ep98115 & UNS & 53.0 & UNS & 50.0 & 16.8 & p.s.f=1 & UNS & 30.6 \\
AltLoop & REI.id319.ep84434 & SAT & 0.3 & SAT & 0.7 & 0.5 & (slv. during collect.) & SAT & 0.4 \\
AltLoop & REI.id158.ep53113 & SAT & 11.2 & SAT & 4.6 & 4.5 & (slv. during collect.) & SAT & 0.4 \\
AltLoop & REI.id81.ep89983 & UNS & 97.3 & UNS & 96.6 & 14.1 &  & UNS & 142.9 \\
AltLoop & REI.id444.ep96160 & UNS & 69.0 & UNS & 89.3 & 11.7 &  & UNS & 50.1 \\
AltLoop & REI.id158.ep63427 & UNS & 28.8 & UNS & 40.1 & 20.4 & br.=pol. p.s.f=1 & UNS & 14.3 \\
AltLoop & REI.id491.ep70134 & SAT & 3.4 & SAT & 0.6 & 0.4 & (slv. during collect.) & SAT & 0.3 \\
AltLoop & REI.id321.ep93439 & UNS & 28.1 & UNS & 38.3 & 27.8 & p.s.f=5 & UNS & 9.2 \\
AltLoop & REI.id239.ep90118 & SAT & 0.4 & SAT & 1.7 & 1.6 & (slv. during collect.) & SAT & 0.3 \\
AltLoop & REI.id530.ep94463 & SAT & 0.6 & SAT & 0.6 & 0.5 & (slv. during collect.) & SAT & 0.3 \\
AltLoop & REI.id286.ep94925 & UNS & 36.0 & UNS & 48.0 & 17.3 &  & UNS & 353.6 \\
AltLoop & REI.id549.ep99098 & SAT & 0.5 & SAT & 0.8 & 0.7 & (slv. during collect.) & SAT & 0.3 \\
AltLoop & REI.id267.ep96426 & UNS & 15.2 & UNS & 16.2 & 16.1 & (slv. during collect.) & UNS & 0.3 \\
AltLoop & REI.id319.ep86630 & SAT & 0.3 & SAT & 0.6 & 0.4 & (slv. during collect.) & SAT & 0.3 \\
AltLoop & REI.id540.ep57610 & UNS & 470.0 & UNS & 117.5 & 22.9 & p.s.f=1 & UNS & 1300.8 \\
AltLoop & REI.id528.ep89331 & SAT & 0.4 & SAT & 3.4 & 3.2 & (slv. during collect.) & SAT & 0.5 \\
AltLoop & REI.id425.ep83340 & SAT & 0.4 & SAT & 0.7 & 0.6 & (slv. during collect.) & SAT & 0.3 \\
AltLoop & REI.id399.ep69315 & UNS & 26.2 & UNS & 42.2 & 17.7 &  & UNS & 13.4 \\
AltLoop & REI.id267.ep95713 & UNS & 17.4 & UNS & 13.6 & 13.5 & (slv. during collect.) & UNS & 0.3 \\
AltLoop & REI.id215.ep97333 & SAT & 0.5 & SAT & 3.1 & 3.0 & (slv. during collect.) & SAT & 0.3 \\
AltLoop & REI.id298.ep97154 & SAT & 0.5 & SAT & 1.3 & 1.2 & (slv. during collect.) & SAT & 0.3 \\
AltLoop & REI.id299.ep99426 & UNS & 55.0 & UNS & 55.4 & 19.4 & p.s.f=1 & UNS & 33.5 \\
AltLoop & REI.id319.ep84478 & SAT & 0.4 & SAT & 0.6 & 0.5 & (slv. during collect.) & SAT & 0.3 \\
AltLoop & REI.id379.ep80359 & SAT & 7.5 & SAT & 3.1 & 3.0 & (slv. during collect.) & SAT & 0.3 \\
AltLoop & REI.id549.ep99331 & SAT & 0.4 & SAT & 0.8 & 0.6 & (slv. during collect.) & SAT & 0.3 \\
AltLoop & REI.id535.ep87193 & SAT & 0.6 & SAT & 1.6 & 1.4 & (slv. during collect.) & SAT & 0.4 \\
AltLoop & REI.id379.ep71969 & SAT & 95.6 & SAT & 128.9 & 16.6 &  & SAT & 9.1 \\
AltLoop & REI.id425.ep93745 & UNS & 406.6 & UNS & 311.9 & 19.1 & p.s.f=5 & UNS & 364.3 \\
AltLoop & REI.id381.ep85470 & SAT & 103.8 & SAT & 3.6 & 3.4 & (slv. during collect.) & SAT & 37.2 \\
AltLoop & REI.id540.ep57775 & UNS & 707.5 & UNS & 147.7 & 27.2 & p.s.f=1 & UNS & 2343.8 \\
AltLoop & REI.id425.ep96981 & SAT & 0.6 & SAT & 0.8 & 0.6 & (slv. during collect.) & SAT & 0.3 \\
AltLoop & REI.id128.ep68334 & UNS & 109.9 & UNS & 74.2 & 18.5 & p.s.f=1 & UNS & 69.7 \\
AltLoop & REI.id318.ep85341 & SAT & 23.7 & SAT & 36.9 & 14.5 &  & SAT & 0.5 \\
AltLoop & REI.id68.ep95777 & SAT & 0.6 & SAT & 0.7 & 0.5 & (slv. during collect.) & SAT & 0.4 \\
AltLoop & REI.id393.ep90550 & SAT & 16.5 & SAT & 7.7 & 7.5 & (slv. during collect.) & SAT & 0.3 \\
AltLoop & REI.id393.ep97111 & SAT & 0.6 & SAT & 19.6 & 18.8 &  & SAT & 0.4 \\
AltLoop & REI.id286.ep95719 & UNS & 33.3 & UNS & 46.3 & 16.6 &  & UNS & 11.8 \\
AltLoop & REI.id393.ep97199 & SAT & 17.9 & SAT & 19.1 & 17.8 & p.s.f=1 & SAT & 0.5 \\
AltLoop & REI.id502.ep84337 & SAT & 0.6 & SAT & 0.6 & 0.5 & (slv. during collect.) & SAT & 0.4 \\
AltLoop & REI.id165.ep94134 & SAT & 2.7 & SAT & 0.9 & 0.8 & (slv. during collect.) & SAT & 0.4 \\
AltLoop & REI.id239.ep79112 & UNS & 247.7 & UNS & 87.8 & 20.2 & p.s.f=1 & UNS & 616.5 \\
AltLoop & REI.id444.ep78487 & UNS & 188.7 & UNS & 224.3 & 18.1 &  & UNS & 274.1 \\
AltLoop & REI.id502.ep90122 & SAT & 0.5 & SAT & 0.7 & 0.5 & (slv. during collect.) & SAT & 0.3 \\
AltLoop & REI.id90.ep99033 & UNS & 205.7 & UNS & 231.8 & 14.2 &  & UNS & 218.5 \\
AltLoop & REI.id319.ep62661 & SAT & 0.4 & SAT & 0.7 & 0.6 & (slv. during collect.) & SAT & 0.4 \\
AltLoop & REI.id512.ep70283 & SAT & 16.7 & SAT & 35.7 & 18.0 &  & SAT & 29.7 \\
AltLoop & REI.id457.ep46821 & SAT & 0.5 & SAT & 0.7 & 0.6 & (slv. during collect.) & SAT & 0.4 \\
AltLoop & REI.id308.ep98363 & SAT & 17.9 & SAT & 4.9 & 4.7 & (slv. during collect.) & SAT & 0.5 \\
AltLoop & REI.id502.ep86449 & SAT & 0.7 & SAT & 0.7 & 0.6 & (slv. during collect.) & SAT & 0.3 \\
AltLoop & REI.id303.ep95038 & SAT & 0.4 & SAT & 1.9 & 1.7 & (slv. during collect.) & SAT & 0.4 \\
AltLoop & REI.id444.ep87148 & UNS & 94.9 & UNS & 110.6 & 10.7 &  & UNS & 114.1 \\
AltLoop & REI.id318.ep75036 & SAT & 116.5 & SAT & 1.6 & 1.5 & (slv. during collect.) & SAT & 0.3 \\
AltLoop & REI.id47.ep99600 & SAT & 2.5 & SAT & 2.8 & 2.7 & (slv. during collect.) & SAT & 0.3 \\
AltLoop & REI.id399.ep92145 & UNS & 47.0 & UNS & 48.4 & 17.5 & p.s.f=1 & UNS & 25.1 \\
AltLoop & REI.id176.ep99427 & UNS & 93.5 & UNS & 116.0 & 13.5 &  & UNS & 71.4 \\
AltLoop & REI.id457.ep78403 & SAT & 0.5 & SAT & 0.6 & 0.4 & (slv. during collect.) & SAT & 0.4 \\
AltLoop & REI.id491.ep54571 & SAT & 17.9 & SAT & 0.6 & 0.4 & (slv. during collect.) & SAT & 0.4 \\
AltLoop & REI.id286.ep89988 & UNS & 30.9 & UNS & 37.6 & 16.3 &  & UNS & 11.2 \\
AltLoop & REI.id234.ep71965 & SAT & 0.5 & SAT & 0.9 & 0.8 & (slv. during collect.) & SAT & 0.4 \\
AltLoop & REI.id176.ep92206 & UNS & 229.5 & UNS & 113.5 & 23.5 & p.s.f=1 & UNS & 313.6 \\
AltLoop & REI.id114.ep97993 & SAT & 15.1 & SAT & 0.7 & 0.6 & (slv. during collect.) & SAT & 0.3 \\
AltLoop & REI.id352.ep70312 & UNS & 19.8 & UNS & 32.5 & 31.5 & br.=pol. p.s.f=1 & UNS & 6.9 \\
AltLoop & REI.id196.ep41633 & SAT & 0.5 & SAT & 0.7 & 0.6 & (slv. during collect.) & SAT & 0.3 \\
AltLoop & REI.id518.ep68431 & UNS & 52.4 & UNS & 67.0 & 15.3 &  & UNS & 28.0 \\
AltLoop & REI.id518.ep96115 & UNS & 50.8 & UNS & 146.4 & 18.5 & br.=pol. p.s.f=1 & UNS & 42.1 \\
AltLoop & REI.id549.ep85789 & SAT & 0.7 & SAT & 0.6 & 0.5 & (slv. during collect.) & SAT & 0.3 \\
AltLoop & REI.id73.ep74414 & SAT & 0.7 & SAT & 2.5 & 2.5 & (slv. during collect.) & SAT & 0.5 \\
AltLoop & REI.id296.ep79125 & UNS & 44.5 & UNS & 52.1 & 12.3 &  & UNS & 18.1 \\
AltLoop & REI.id502.ep71445 & SAT & 0.5 & SAT & 0.6 & 0.5 & (slv. during collect.) & SAT & 0.3 \\
AltLoop & REI.id215.ep97355 & SAT & 0.4 & SAT & 2.7 & 2.6 & (slv. during collect.) & SAT & 0.3 \\
AltLoop & REI.id239.ep73504 & SAT & 9.3 & SAT & 18.9 & 16.4 & p.s.f=1 & SAT & 0.3 \\
AltLoop & REI.id81.ep91571 & UNS & 117.1 & UNS & 455.2 & 16.3 & br.=pol. p.s.f=1 & UNS & 362.6 \\
AltLoop & REI.id298.ep77187 & SAT & 5.8 & SAT & 23.6 & 22.1 & p.s.f=1 & SAT & 4.3 \\
AltLoop & REI.id444.ep77347 & UNS & 161.8 & UNS & 172.3 & 12.1 &  & UNS & 161.0 \\
AltLoop & REI.id530.ep82185 & SAT & 0.3 & SAT & 0.8 & 0.6 & (slv. during collect.) & SAT & 0.3 \\
AltLoop & REI.id73.ep74469 & SAT & 0.6 & SAT & 3.0 & 2.9 & (slv. during collect.) & SAT & 0.4 \\
AltLoop & REI.id530.ep82207 & SAT & 0.6 & SAT & 0.6 & 0.5 & (slv. during collect.) & SAT & 0.3 \\
AltLoop & REI.id68.ep95854 & SAT & 0.5 & SAT & 0.7 & 0.6 & (slv. during collect.) & SAT & 0.3 \\
AltLoop & REI.id303.ep95137 & SAT & 2.7 & SAT & 5.2 & 5.1 & (slv. during collect.) & SAT & 0.4 \\
AltLoop & REI.id381.ep93562 & UNS & 182.6 & UNS & 230.1 & 18.7 &  & UNS & 322.1 \\
AltLoop & REI.id77.ep99229 & UNS & 61.6 & UNS & 51.7 & 16.7 & p.s.f=1 & UNS & 23.9 \\
AltLoop & REI.id549.ep93366 & SAT & 0.3 & SAT & 1.1 & 0.9 & (slv. during collect.) & SAT & 0.4 \\
AltLoop & REI.id399.ep85537 & UNS & 46.2 & UNS & 68.4 & 18.2 &  & UNS & 26.0 \\
AltLoop & REI.id379.ep96909 & SAT & 101.5 & SAT & 63.1 & 21.4 & p.s.f=1 & SAT & 25.2 \\
AltLoop & REI.id352.ep49376 & UNS & 18.3 & UNS & 29.5 & 29.3 & (slv. during collect.) & UNS & 6.5 \\
AltLoop & REI.id343.ep96195 & UNS & 62.5 & UNS & 63.2 & 18.9 & p.s.f=5 & UNS & 31.9 \\
AltLoop & REI.id318.ep89536 & SAT & 67.2 & SAT & 2.2 & 2.1 & (slv. during collect.) & SAT & 0.5 \\
AltLoop & REI.id47.ep92041 & SAT & 0.6 & SAT & 1.0 & 0.8 & (slv. during collect.) & SAT & 0.3 \\
AltLoop & REI.id68.ep98804 & SAT & 0.3 & SAT & 0.6 & 0.5 & (slv. during collect.) & SAT & 0.3 \\
AltLoop & REI.id73.ep82370 & UNS & 225.2 & UNS & 79.8 & 15.6 & p.s.f=1 & UNS & 589.2 \\
AltLoop & REI.id321.ep86834 & UNS & 29.9 & UNS & 36.0 & 19.8 &  & UNS & 9.3 \\
AltLoop & REI.id318.ep96308 & SAT & 65.9 & SAT & 1.9 & 1.8 & (slv. during collect.) & SAT & 0.5 \\
AltLoop & REI.id239.ep71386 & UNS & 272.1 & UNS & 105.6 & 21.2 & p.s.f=1 & UNS & 611.2 \\
AltLoop & REI.id299.ep69103 & UNS & 54.2 & UNS & 67.2 & 13.5 &  & UNS & 28.6 \\
AltLoop & REI.id267.ep96333 & UNS & 16.2 & UNS & 20.4 & 20.3 & (slv. during collect.) & UNS & 0.3 \\
AltLoop & REI.id91.ep58172 & UNS & 32.4 & UNS & 46.1 & 16.4 &  & UNS & 16.2 \\
AltLoop & REI.id303.ep68520 & UNS & 128.7 & UNS & 146.7 & 13.7 &  & UNS & 118.0 \\
AltLoop & REI.id165.ep82822 & SAT & 0.4 & SAT & 0.8 & 0.7 & (slv. during collect.) & SAT & 0.4 \\
AltLoop & REI.id296.ep76414 & UNS & 53.5 & UNS & 72.4 & 14.8 &  & UNS & 17.9 \\
AltLoop & REI.id303.ep69708 & SAT & 7.3 & SAT & 27.6 & 23.7 & p.s.f=1 & SAT & 13.8 \\
AltLoop & REI.id47.ep96122 & SAT & 2.4 & SAT & 3.0 & 2.9 & (slv. during collect.) & SAT & 0.4 \\
AltLoop & REI.id67.ep82981 & UNS & 32.6 & UNS & 40.4 & 18.9 & br.=pol. p.s.f=1 & UNS & 14.3 \\
AltLoop & REI.id196.ep99092 & SAT & 0.5 & SAT & 0.9 & 0.8 & (slv. during collect.) & SAT & 0.4 \\
AltLoop & REI.id215.ep98767 & SAT & 0.5 & SAT & 2.7 & 2.5 & (slv. during collect.) & SAT & 0.3 \\
AltLoop & REI.id418.ep99947 & SAT & 52.0 & SAT & 1.8 & 1.7 & (slv. during collect.) & SAT & 0.2 \\
AltLoop & REI.id267.ep91426 & UNS & 16.7 & UNS & 13.1 & 13.0 & (slv. during collect.) & UNS & 0.3 \\
AltLoop & REI.id379.ep97133 & SAT & 21.9 & SAT & 40.7 & 15.0 &  & SAT & 1.6 \\
AltLoop & REI.id444.ep92635 & UNS & 159.3 & UNS & 82.0 & 15.2 & p.s.f=1 & UNS & 95.9 \\
AltLoop & REI.id321.ep92994 & UNS & 24.1 & UNS & 37.7 & 25.4 &  & UNS & 9.6 \\
AltLoop & REI.id491.ep67504 & SAT & 2.9 & SAT & 0.6 & 0.5 & (slv. during collect.) & SAT & 0.3 \\
AltLoop & REI.id528.ep95627 & SAT & 0.6 & SAT & 0.9 & 0.8 & (slv. during collect.) & SAT & 0.4 \\
AltLoop & REI.id77.ep99119 & UNS & 51.3 & UNS & 55.6 & 18.4 & p.s.f=1 & UNS & 23.6 \\
AltLoop & REI.id68.ep95942 & SAT & 0.6 & SAT & 0.6 & 0.5 & (slv. during collect.) & SAT & 0.3 \\
AltLoop & REI.id67.ep95115 & SAT & 4.5 & SAT & 1.5 & 1.3 & (slv. during collect.) & SAT & 0.3 \\
AltLoop & REI.id90.ep94747 & UNS & 239.9 & UNS & 265.8 & 16.3 &  & UNS & 244.2 \\
AltLoop & REI.id158.ep59378 & UNS & 33.9 & UNS & 40.9 & 17.5 & br.=pol. p.s.f=1 & UNS & 14.8 \\
AltLoop & REI.id90.ep97680 & UNS & 133.9 & UNS & 76.9 & 18.6 & p.s.f=1 & UNS & 82.5 \\
AltLoop & REI.id76.ep99198 & UNS & 90.1 & UNS & 84.8 & 13.2 &  & UNS & 39.5 \\
AltLoop & REI.id201.ep56910 & SAT & 6.7 & SAT & 0.6 & 0.5 & (slv. during collect.) & SAT & 0.2 \\
AltLoop & REI.id352.ep68171 & UNS & 30.1 & UNS & 43.2 & 17.6 & br.=pol. p.s.f=1 & UNS & 10.9 \\
AltLoop & REI.id240.ep91714 & UNS & 50.0 & UNS & 55.5 & 18.6 & br.=pol. p.s.f=1 & UNS & 28.8 \\
AltLoop & REI.id91.ep75275 & UNS & 44.5 & UNS & 60.6 & 13.8 &  & UNS & 26.3 \\
AltLoop & REI.id180.ep58329 & SAT & 4.1 & SAT & 1.0 & 0.8 & (slv. during collect.) & SAT & 0.3 \\
AltLoop & REI.id158.ep53554 & SAT & 24.4 & SAT & 39.2 & 14.7 &  & SAT & 0.5 \\
AltLoop & REI.id180.ep61293 & SAT & 7.1 & SAT & 2.7 & 2.6 & (slv. during collect.) & SAT & 0.5 \\
AltLoop & REI.id47.ep99688 & SAT & 2.4 & SAT & 1.7 & 1.6 & (slv. during collect.) & SAT & 0.3 \\
AltLoop & REI.id201.ep88825 & SAT & 0.3 & SAT & 0.8 & 0.7 & (slv. during collect.) & SAT & 0.3 \\
AltLoop & REI.id234.ep70556 & SAT & 0.6 & SAT & 1.2 & 1.0 & (slv. during collect.) & SAT & 0.2 \\
AltLoop & REI.id321.ep81083 & UNS & 26.9 & UNS & 52.1 & 29.3 & p.s.f=1 & UNS & 11.9 \\
AltLoop & REI.id457.ep69348 & SAT & 0.5 & SAT & 0.7 & 0.6 & (slv. during collect.) & SAT & 0.5 \\
AltLoop & REI.id286.ep90032 & UNS & 27.9 & UNS & 37.5 & 16.8 &  & UNS & 9.5 \\
AltLoop & REI.id114.ep67447 & SAT & 2.4 & SAT & 0.5 & 0.4 & (slv. during collect.) & SAT & 0.5 \\
AltLoop & REI.id68.ep95821 & SAT & 0.8 & SAT & 0.7 & 0.6 & (slv. during collect.) & SAT & 0.2 \\
AltLoop & REI.id343.ep55239 & UNS & 23.3 & UNS & 26.3 & 26.1 & (slv. during collect.) & UNS & 4.3 \\
AltLoop & REI.id418.ep92561 & SAT & 0.6 & SAT & 1.7 & 1.5 & (slv. during collect.) & SAT & 0.4 \\
AltLoop & REI.id535.ep76894 & UNS & 49.9 & UNS & 54.9 & 16.1 & p.s.f=5 & UNS & 13.2 \\
AltLoop & REI.id81.ep91408 & UNS & 115.6 & UNS & 139.7 & 12.4 &  & UNS & 404.9 \\
AltLoop & REI.id201.ep39282 & SAT & 6.2 & SAT & 0.6 & 0.5 & (slv. during collect.) & SAT & 0.3 \\
AltLoop & REI.id352.ep69872 & UNS & 22.4 & UNS & 32.3 & 20.8 &  & UNS & 8.1 \\
AltLoop & REI.id174.ep91107 & SAT & 0.5 & SAT & 0.5 & 0.3 & (slv. during collect.) & SAT & 0.3 \\
AltLoop & REI.id418.ep91524 & SAT & 0.5 & SAT & 1.9 & 1.8 & (slv. during collect.) & SAT & 0.4 \\
AltLoop & REI.id77.ep75183 & UNS & 31.0 & UNS & 46.2 & 16.1 &  & UNS & 14.7 \\
AltLoop & REI.id393.ep88789 & SAT & 7.5 & SAT & 19.6 & 17.6 & p.s.f=2 & SAT & 0.4 \\
AltLoop & REI.id77.ep91446 & UNS & 46.0 & UNS & 55.2 & 11.6 &  & UNS & 23.5 \\
AltLoop & REI.id174.ep66258 & SAT & 4.1 & SAT & 0.4 & 0.3 & (slv. during collect.) & SAT & 0.3 \\
AltLoop & REI.id425.ep98177 & SAT & 0.5 & SAT & 0.9 & 0.8 & (slv. during collect.) & SAT & 0.3 \\
AltLoop & REI.id76.ep92143 & UNS & 47.4 & UNS & 50.1 & 14.9 &  & UNS & 16.7 \\
AltLoop & REI.id512.ep75804 & SAT & 32.3 & SAT & 1.5 & 1.3 & (slv. during collect.) & SAT & 0.6 \\
AltLoop & REI.id240.ep85050 & UNS & 45.1 & UNS & 61.1 & 18.1 & br.=pol. p.s.f=1 & UNS & 23.2 \\
AltLoop & REI.id540.ep85904 & SAT & 0.6 & SAT & 0.7 & 0.6 & (slv. during collect.) & SAT & 0.5 \\
AltLoop & REI.id176.ep95516 & UNS & 75.6 & UNS & 134.8 & 17.2 & br.=pol. p.s.f=5 & UNS & 75.8 \\
AltLoop & REI.id67.ep99214 & SAT & 6.7 & SAT & 1.6 & 1.4 & (slv. during collect.) & SAT & 0.4 \\
AltLoop & REI.id114.ep84297 & UNS & 66.2 & UNS & 124.2 & 17.4 & br.=pol. p.s.f=1 & UNS & 43.4 \\
AltLoop & REI.id128.ep87384 & SAT & 111.7 & SAT & 3.1 & 2.9 & (slv. during collect.) & SAT & 0.5 \\
AltLoop & REI.id128.ep85878 & UNS & 346.1 & UNS & 92.6 & 15.8 & p.s.f=1 & UNS & 549.8 \\
AltLoop & REI.id35.ep81067 & SAT & 0.5 & SAT & 0.7 & 0.5 & (slv. during collect.) & SAT & 0.3 \\
AltLoop & REI.id381.ep92994 & SAT & 122.7 & SAT & 1.0 & 0.9 & (slv. during collect.) & SAT & 0.4 \\
AltLoop & REI.id180.ep58461 & SAT & 3.6 & SAT & 1.1 & 1.0 & (slv. during collect.) & SAT & 0.3 \\
AltLoop & REI.id418.ep99126 & SAT & 139.2 & SAT & 2.0 & 1.9 & (slv. during collect.) & SAT & 0.3 \\
AltLoop & REI.id399.ep62611 & UNS & 31.5 & UNS & 35.8 & 15.2 &  & UNS & 12.0 \\
AltLoop & REI.id296.ep77867 & UNS & 23.8 & UNS & 43.3 & 13.8 &  & UNS & 9.6 \\
AltLoop & REI.id296.ep74969 & UNS & 45.2 & UNS & 54.4 & 13.2 &  & UNS & 17.0 \\
AltLoop & REI.id343.ep73171 & UNS & 149.7 & UNS & 167.5 & 18.2 &  & UNS & 85.7 \\
AltLoop & REI.id47.ep79665 & SAT & 2.5 & SAT & 2.8 & 2.7 & (slv. during collect.) & SAT & 0.3 \\
AltLoop & REI.id512.ep97330 & SAT & 30.9 & SAT & 1.8 & 1.6 & (slv. during collect.) & SAT & 0.5 \\
AltLoop & REI.id234.ep83780 & SAT & 0.4 & SAT & 1.2 & 1.0 & (slv. during collect.) & SAT & 0.2 \\
AltLoop & REI.id286.ep89999 & UNS & 30.8 & UNS & 35.0 & 16.2 &  & UNS & 13.1 \\
AltLoop & REI.id528.ep55711 & SAT & 13.2 & SAT & 3.2 & 3.1 & (slv. during collect.) & SAT & 0.7 \\
AltLoop & REI.id128.ep97983 & UNS & 243.9 & UNS & 75.8 & 16.5 & p.s.f=1 & UNS & 203.0 \\
AltLoop & REI.id381.ep92541 & UNS & 129.8 & UNS & 149.1 & 15.1 &  & UNS & 152.6 \\
AltLoop & REI.id174.ep81057 & SAT & 0.4 & SAT & 0.6 & 0.4 & (slv. during collect.) & SAT & 0.3 \\
AltLoop & REI.id319.ep89649 & SAT & 0.4 & SAT & 0.6 & 0.5 & (slv. during collect.) & SAT & 0.3 \\
AltLoop & REI.id528.ep90341 & SAT & 21.1 & SAT & 3.4 & 3.3 & (slv. during collect.) & SAT & 3.1 \\
AltLoop & REI.id299.ep69400 & UNS & 50.5 & UNS & 79.9 & 17.2 &  & UNS & 27.8 \\
AltLoop & REI.id81.ep97775 & UNS & 154.7 & UNS & 242.3 & 18.0 &  & UNS & 632.9 \\
AltLoop & REI.id196.ep73462 & SAT & 0.5 & SAT & 1.1 & 0.9 & (slv. during collect.) & SAT & 0.3 \\
AltLoop & REI.id379.ep87708 & UNS & 297.9 & UNS & 92.3 & 18.8 & p.s.f=1 & UNS & 624.2 \\
AltLoop & REI.id174.ep98455 & SAT & 0.3 & SAT & 0.6 & 0.5 & (slv. during collect.) & SAT & 0.2 \\
AltLoop & REI.id81.ep97833 & UNS & 185.1 & UNS & 102.8 & 18.0 & p.s.f=1 & TO & 3601.4 \\
AltLoop & REI.id308.ep98830 & UNS & 48.5 & UNS & 70.7 & 12.2 &  & UNS & 28.0 \\
AltLoop & REI.id67.ep84585 & UNS & 39.3 & UNS & 42.0 & 16.7 & p.s.f=1 & UNS & 26.4 \\
AltLoop & REI.id234.ep71035 & SAT & 0.4 & SAT & 0.9 & 0.8 & (slv. during collect.) & SAT & 0.3 \\
AltLoop & REI.id91.ep58150 & UNS & 36.1 & UNS & 46.5 & 16.1 &  & UNS & 16.6 \\
AltLoop & REI.id196.ep43251 & SAT & 0.5 & SAT & 0.8 & 0.7 & (slv. during collect.) & SAT & 0.3 \\
AltLoop & REI.id381.ep85558 & SAT & 66.9 & SAT & 3.4 & 3.2 & (slv. during collect.) & SAT & 77.2 \\
AltLoop & REI.id418.ep92671 & SAT & 0.5 & SAT & 1.9 & 1.8 & (slv. during collect.) & SAT & 0.4 \\
AltLoop & REI.id528.ep89177 & SAT & 0.5 & SAT & 3.0 & 2.9 & (slv. during collect.) & SAT & 2.7 \\
AltLoop & REI.id393.ep97221 & SAT & 0.6 & SAT & 20.2 & 19.4 & p.s.f=1 & SAT & 0.5 \\
AltLoop & REI.id128.ep91042 & UNS & 53.1 & UNS & 62.7 & 19.7 & p.s.f=1 & UNS & 25.9 \\
AltLoop & REI.id308.ep99303 & SAT & 22.8 & SAT & 5.0 & 4.9 & (slv. during collect.) & SAT & 0.4 \\
AltLoop & REI.id425.ep69300 & SAT & 24.2 & SAT & 1.6 & 1.5 & (slv. during collect.) & SAT & 0.4 \\
AltLoop & REI.id234.ep83703 & SAT & 0.3 & SAT & 1.0 & 0.8 & (slv. during collect.) & SAT & 0.3 \\
NAP & cls4.id202 & SAT & 1526.8 & SAT & 3.0 & 2.9 & (slv. during collect.) & SAT & 157.8 \\
NAP & cls3.id114 & SAT & 1455.0 & SAT & 0.8 & 0.7 & (slv. during collect.) & SAT & 6.0 \\
NAP & cls4.id219 & SAT & 2202.5 & SAT & 2390.6 & 165.1 &  & SAT & 818.4 \\
NAP & cls4.id228 & SAT & 55.9 & SAT & 132.6 & 132.4 & (slv. during collect.) & TO & 3601.4 \\
NAP & cls4.id217 & TO & 3601.5 & SAT & 4.6 & 4.5 & (slv. during collect.) & SAT & 41.4 \\
NAP & cls0.id22 & TO & 3601.6 & SAT & 1.7 & 1.6 & (slv. during collect.) & SAT & 0.9 \\
NAP & cls0.id18 & SAT & 3265.2 & SAT & 1.8 & 1.6 & (slv. during collect.) & SAT & 0.9 \\
NAP & cls3.id151 & SAT & 2069.6 & SAT & 80.7 & 80.6 & (slv. during collect.) & SAT & 102.8 \\
NAP & cls3.id155 & SAT & 1514.2 & SAT & 2.0 & 1.9 & (slv. during collect.) & SAT & 0.9 \\
NAP & cls3.id143 & SAT & 2734.5 & SAT & 106.0 & 105.9 & (slv. during collect.) & SAT & 164.2 \\
NAP & cls0.id57 & SAT & 389.6 & SAT & 9.8 & 9.7 & (slv. during collect.) & SAT & 0.8 \\
NAP & cls3.id169 & SAT & 660.4 & SAT & 818.3 & 156.3 &  & TO & 3601.4 \\
NAP & cls0.id25 & SAT & 134.9 & SAT & 3.7 & 3.5 & (slv. during collect.) & SAT & 0.6 \\
NAP & cls2.id93 & SAT & 2198.7 & SAT & 2354.7 & 177.8 &  & SAT & 32.0 \\
NAP & cls4.id194 & SAT & 2393.8 & SAT & 2.0 & 1.9 & (slv. during collect.) & SAT & 178.8 \\
NAP & cls0.id36 & SAT & 384.1 & SAT & 18.0 & 17.9 & (slv. during collect.) & SAT & 1.0 \\
NAP & cls4.id214 & TO & 3601.5 & SAT & 0.9 & 0.9 & (slv. during collect.) & SAT & 43.5 \\
NAP & cls1.id82 & SAT & 345.0 & SAT & 29.4 & 29.4 & (slv. during collect.) & SAT & 132.4 \\
NAP & cls0.id16 & SAT & 20.0 & SAT & 5.7 & 5.6 & (slv. during collect.) & SAT & 0.9 \\
NAP & cls1.id87 & SAT & 2627.3 & SAT & 102.1 & 102.0 & (slv. during collect.) & SAT & 319.7 \\
NAP & cls2.id91 & SAT & 175.4 & SAT & 51.9 & 51.7 & (slv. during collect.) & SAT & 0.8 \\
NAP & cls0.id58 & SAT & 130.3 & SAT & 4.5 & 4.4 & (slv. during collect.) & SAT & 5.7 \\
NAP & cls0.id8 & TO & 3601.5 & SAT & 1.6 & 1.5 & (slv. during collect.) & SAT & 0.7 \\
NAP & cls0.id9 & SAT & 5.2 & SAT & 0.8 & 0.7 & (slv. during collect.) & SAT & 0.7 \\
NAP & cls1.id80 & SAT & 1.1 & SAT & 65.0 & 64.9 & (slv. during collect.) & SAT & 147.8 \\
NAP & cls4.id230 & TO & 3601.4 & SAT & 47.5 & 47.4 & (slv. during collect.) & SAT & 1.5 \\
NAP & cls3.id109 & SAT & 1032.6 & SAT & 6.3 & 6.2 & (slv. during collect.) & SAT & 1.0 \\
NAP & cls4.id205 & SAT & 255.3 & SAT & 49.1 & 49.1 & (slv. during collect.) & SAT & 278.8 \\
NAP & cls0.id53 & SAT & 1469.6 & SAT & 14.1 & 13.9 & (slv. during collect.) & SAT & 5.7 \\
NAP & cls0.id7 & SAT & 0.9 & SAT & 8.4 & 8.2 & (slv. during collect.) & SAT & 0.8 \\
NAP & cls4.id173 & SAT & 3197.4 & SAT & 1.2 & 1.2 & (slv. during collect.) & SAT & 0.8 \\
NAP & cls3.id156 & SAT & 997.4 & SAT & 1.0 & 0.8 & (slv. during collect.) & SAT & 1.5 \\
NAP & cls4.id197 & TO & 3601.5 & SAT & 47.0 & 46.9 & (slv. during collect.) & SAT & 1091.4 \\
NAP & cls4.id176 & SAT & 3034.6 & SAT & 1.6 & 1.5 & (slv. during collect.) & SAT & 49.0 \\
NAP & cls4.id186 & SAT & 724.1 & SAT & 901.6 & 179.2 &  & SAT & 259.0 \\
NAP & cls0.id67 & SAT & 848.6 & SAT & 24.9 & 24.7 & (slv. during collect.) & SAT & 0.8 \\
NAP & cls4.id199 & TO & 3601.5 & SAT & 1.4 & 1.2 & (slv. during collect.) & SAT & 51.9 \\
NAP & cls0.id1 & SAT & 20.6 & SAT & 0.7 & 0.6 & (slv. during collect.) & SAT & 0.6 \\
NAP & cls3.id166 & SAT & 721.4 & SAT & 76.5 & 76.4 & (slv. during collect.) & SAT & 1.1 \\
NAP & cls0.id51 & SAT & 108.1 & SAT & 17.8 & 17.6 & (slv. during collect.) & SAT & 0.8 \\
NAP & cls3.id134 & SAT & 324.3 & SAT & 30.4 & 30.2 & (slv. during collect.) & SAT & 108.3 \\
NAP & cls4.id182 & SAT & 854.4 & SAT & 2.3 & 2.1 & (slv. during collect.) & SAT & 2.4 \\
NAP & cls0.id43 & SAT & 50.6 & SAT & 13.8 & 13.7 & (slv. during collect.) & SAT & 598.3 \\
NAP & cls1.id86 & SAT & 1.2 & SAT & 92.5 & 92.4 & (slv. during collect.) & SAT & 7.0 \\
NAP & cls0.id52 & SAT & 888.0 & SAT & 3.7 & 3.6 & (slv. during collect.) & SAT & 0.6 \\
NAP & cls0.id66 & SAT & 72.8 & SAT & 243.7 & 171.8 &  & SAT & 86.4 \\
NAP & cls3.id120 & TO & 3601.5 & SAT & 104.2 & 104.1 & (slv. during collect.) & TO & 3601.6 \\
NAP & cls4.id198 & SAT & 193.8 & SAT & 389.2 & 195.8 &  & SAT & 2.7 \\
NAP & cls4.id189 & SAT & 1215.3 & SAT & 10.7 & 10.6 & (slv. during collect.) & SAT & 2826.7 \\
NAP & cls3.id145 & SAT & 2198.3 & SAT & 30.8 & 30.7 & (slv. during collect.) & SAT & 195.9 \\
NAP & cls3.id139 & SAT & 422.1 & SAT & 656.5 & 234.7 &  & TO & 3601.6 \\
NAP & cls0.id12 & SAT & 198.1 & SAT & 3.5 & 3.4 & (slv. during collect.) & SAT & 1.6 \\
NAP & cls0.id70 & SAT & 351.8 & SAT & 9.1 & 9.0 & (slv. during collect.) & SAT & 51.4 \\
NAP & cls3.id112 & SAT & 99.7 & SAT & 71.0 & 71.0 & (slv. during collect.) & SAT & 73.5 \\
NAP & cls4.id195 & SAT & 71.0 & SAT & 5.5 & 5.4 & (slv. during collect.) & SAT & 1.2 \\
NAP & cls2.id94 & TO & 3601.5 & TO & 3601.5 & 187.0 &  & SAT & 31.2 \\
NAP & cls4.id221 & TO & 3601.5 & TO & 3601.6 & 192.4 &  & SAT & 1.5 \\
NAP & cls0.id24 & TO & 3601.5 & SAT & 1.4 & 1.3 & (slv. during collect.) & SAT & 0.8 \\
NAP & cls4.id215 & TO & 3601.5 & TO & 3601.5 & 182.4 &  & SAT & 1033.7 \\
NAP & cls4.id227 & TO & 3601.4 & TO & 3601.6 & 168.5 &  & SAT & 2673.2 \\
NAP & cls0.id63 & SAT & 2685.1 & SAT & 18.0 & 17.9 & (slv. during collect.) & SAT & 41.5 \\
NAP & cls1.id88 & SAT & 1784.5 & SAT & 99.7 & 99.6 & (slv. during collect.) & SAT & 102.8 \\
NAP & cls3.id153 & SAT & 690.1 & SAT & 30.7 & 30.6 & (slv. during collect.) & SAT & 2731.6 \\
NAP & cls4.id235 & TO & 3601.5 & SAT & 6.5 & 6.4 & (slv. during collect.) & SAT & 6.1 \\
NAP & cls1.id79 & SAT & 1.1 & SAT & 176.5 & 175.3 &  & SAT & 105.8 \\
NAP & cls3.id132 & SAT & 483.2 & SAT & 11.0 & 10.8 & (slv. during collect.) & SAT & 1.0 \\
NAP & cls3.id168 & SAT & 683.3 & SAT & 29.4 & 29.2 & (slv. during collect.) & TO & 3601.4 \\
NAP & cls4.id192 & TO & 3601.5 & SAT & 1.0 & 0.8 & (slv. during collect.) & SAT & 1.1 \\
NAP & cls3.id128 & TO & 3601.6 & SAT & 30.5 & 30.3 & (slv. during collect.) & SAT & 244.8 \\
NAP & cls3.id137 & SAT & 995.0 & SAT & 21.9 & 21.8 & (slv. during collect.) & SAT & 77.0 \\
NAP & cls2.id98 & SAT & 382.2 & SAT & 109.3 & 109.3 & (slv. during collect.) & SAT & 11.6 \\
NAP & cls3.id106 & SAT & 258.5 & SAT & 2.4 & 2.3 & (slv. during collect.) & SAT & 1.0 \\
NAP & cls3.id167 & SAT & 1843.7 & SAT & 2030.8 & 176.1 &  & TO & 3601.4 \\
NAP & cls2.id95 & SAT & 2372.3 & SAT & 31.9 & 31.8 & (slv. during collect.) & SAT & 82.3 \\
NAP & cls4.id200 & SAT & 2467.7 & SAT & 32.1 & 32.0 & (slv. during collect.) & SAT & 1.0 \\
NAP & cls4.id213 & TO & 3601.5 & SAT & 2.6 & 2.5 & (slv. during collect.) & SAT & 367.1 \\
NAP & cls0.id68 & SAT & 2107.2 & SAT & 1.3 & 1.1 & (slv. during collect.) & SAT & 36.1 \\
NAP & cls0.id55 & SAT & 705.3 & SAT & 1.3 & 1.2 & (slv. during collect.) & SAT & 0.7 \\
NAP & cls4.id206 & TO & 3601.5 & TO & 3601.5 & 149.9 &  & TO & 3601.4 \\
NAP & cls0.id20 & SAT & 472.3 & SAT & 14.3 & 14.2 & (slv. during collect.) & SAT & 6.1 \\
NAP & cls4.id211 & TO & 3601.5 & SAT & 1.1 & 1.0 & (slv. during collect.) & SAT & 0.8 \\
NAP & cls3.id138 & SAT & 1704.0 & SAT & 1867.8 & 174.4 &  & TO & 3601.5 \\
NAP & cls3.id160 & SAT & 1678.0 & SAT & 1842.8 & 172.9 &  & SAT & 3.5 \\
NAP & cls4.id226 & TO & 3601.5 & SAT & 32.0 & 31.9 & (slv. during collect.) & SAT & 1.8 \\
NAP & cls3.id115 & SAT & 149.1 & SAT & 10.8 & 10.8 & (slv. during collect.) & SAT & 0.9 \\
NAP & cls4.id225 & TO & 3601.5 & SAT & 14.8 & 14.6 & (slv. during collect.) & SAT & 398.8 \\
NAP & cls3.id136 & SAT & 938.0 & SAT & 27.6 & 27.5 & (slv. during collect.) & SAT & 105.3 \\
NAP & cls3.id146 & SAT & 159.4 & SAT & 330.1 & 176.7 &  & SAT & 314.4 \\
NAP & cls3.id113 & SAT & 654.1 & SAT & 47.5 & 47.4 & (slv. during collect.) & SAT & 33.8 \\
NAP & cls0.id76 & TO & 3601.5 & SAT & 21.6 & 21.4 & (slv. during collect.) & SAT & 1.3 \\
NAP & cls3.id123 & SAT & 2.2 & SAT & 11.5 & 11.4 & (slv. during collect.) & SAT & 931.7 \\
NAP & cls3.id122 & TO & 3601.5 & SAT & 101.5 & 101.4 & (slv. during collect.) & SAT & 49.4 \\
NAP & cls0.id74 & TO & 3601.5 & SAT & 2.4 & 2.2 & (slv. during collect.) & SAT & 46.3 \\
NAP & cls0.id15 & SAT & 15.4 & SAT & 12.3 & 12.2 & (slv. during collect.) & SAT & 0.8 \\
NAP & cls0.id23 & SAT & 839.2 & SAT & 1.1 & 1.0 & (slv. during collect.) & SAT & 0.8 \\
NAP & cls0.id31 & SAT & 30.7 & SAT & 0.8 & 0.7 & (slv. during collect.) & SAT & 35.6 \\
NAP & cls0.id50 & SAT & 496.9 & SAT & 57.5 & 57.4 & (slv. during collect.) & SAT & 85.9 \\
NAP & cls3.id150 & TO & 3601.5 & TO & 3601.5 & 180.5 &  & TO & 3601.5 \\
NAP & cls4.id231 & TO & 3601.5 & SAT & 36.9 & 36.8 & (slv. during collect.) & SAT & 1086.4 \\
NAP & cls2.id99 & TO & 3601.5 & SAT & 41.0 & 40.9 & (slv. during collect.) & SAT & 1954.0 \\
NAP & cls4.id190 & SAT & 1197.7 & SAT & 35.3 & 35.1 & (slv. during collect.) & SAT & 1.1 \\
NAP & cls0.id60 & TO & 3601.5 & SAT & 21.5 & 21.3 & (slv. during collect.) & SAT & 0.8 \\
NAP & cls2.id101 & SAT & 2518.8 & TO & 3601.5 & 152.2 &  & SAT & 42.1 \\
NAP & cls3.id107 & SAT & 68.0 & SAT & 27.4 & 27.3 & (slv. during collect.) & SAT & 62.5 \\
NAP & cls0.id47 & SAT & 2671.9 & SAT & 21.4 & 21.4 & (slv. during collect.) & SAT & 10.9 \\
NAP & cls0.id62 & SAT & 22.9 & SAT & 17.2 & 17.1 & (slv. during collect.) & SAT & 51.5 \\
NAP & cls0.id38 & SAT & 1009.0 & SAT & 1158.5 & 148.0 &  & SAT & 57.6 \\
NAP & cls4.id181 & SAT & 1906.7 & SAT & 18.9 & 18.8 & (slv. during collect.) & SAT & 0.8 \\
NAP & cls0.id49 & TO & 3601.5 & SAT & 20.6 & 20.4 & (slv. during collect.) & SAT & 41.2 \\
NAP & cls0.id65 & TO & 3601.5 & SAT & 60.0 & 59.9 & (slv. during collect.) & SAT & 1.5 \\
NAP & cls4.id185 & SAT & 35.6 & SAT & 38.7 & 38.6 & (slv. during collect.) & SAT & 170.1 \\
NAP & cls0.id71 & SAT & 134.5 & SAT & 4.8 & 4.7 & (slv. during collect.) & SAT & 5.7 \\
NAP & cls4.id180 & SAT & 341.0 & SAT & 26.4 & 26.2 & (slv. during collect.) & SAT & 3.2 \\
NAP & cls4.id191 & TO & 3601.5 & SAT & 14.1 & 13.9 & (slv. during collect.) & SAT & 1.9 \\
NAP & cls0.id45 & SAT & 2255.4 & SAT & 18.2 & 18.1 & (slv. during collect.) & SAT & 26.6 \\
NAP & cls3.id148 & SAT & 2132.7 & SAT & 2292.6 & 160.2 &  & TO & 3601.5 \\
NAP & cls4.id183 & TO & 3601.5 & SAT & 2.5 & 2.3 & (slv. during collect.) & SAT & 1.3 \\
NAP & cls3.id133 & SAT & 336.2 & SAT & 534.8 & 203.2 &  & SAT & 913.4 \\
NAP & cls4.id208 & TO & 3601.5 & SAT & 2.0 & 1.9 & (slv. during collect.) & SAT & 360.3 \\
NAP & cls0.id28 & SAT & 52.1 & SAT & 56.3 & 56.1 & (slv. during collect.) & SAT & 31.5 \\
NAP & cls4.id220 & TO & 3601.5 & SAT & 35.5 & 35.4 & (slv. during collect.) & SAT & 12.2 \\
NAP & cls4.id222 & TO & 3601.5 & TO & 3601.5 & 154.1 &  & SAT & 1538.0 \\
NAP & cls4.id234 & TO & 3601.5 & TO & 3602.1 & 242.6 &  & SAT & 1.5 \\
NAP & cls3.id164 & SAT & 117.0 & SAT & 37.0 & 36.9 & (slv. during collect.) & SAT & 1.4 \\
NAP & cls2.id103 & TO & 3601.5 & TO & 3601.5 & 170.4 &  & SAT & 52.5 \\
NAP & cls4.id178 & TO & 3601.5 & SAT & 38.5 & 38.4 & (slv. during collect.) & SAT & 1144.6 \\
NAP & cls3.id130 & SAT & 1070.8 & SAT & 36.4 & 36.2 & (slv. during collect.) & SAT & 318.9 \\
NAP & cls3.id108 & SAT & 922.8 & SAT & 1.4 & 1.3 & (slv. during collect.) & SAT & 22.8 \\
NAP & cls0.id30 & SAT & 9.4 & SAT & 7.3 & 7.2 & (slv. during collect.) & SAT & 1.1 \\
NAP & cls0.id69 & SAT & 912.4 & SAT & 1096.3 & 172.9 &  & SAT & 1448.7 \\
NAP & cls4.id196 & SAT & 1233.7 & SAT & 9.9 & 9.9 & (slv. during collect.) & TO & 3601.4 \\
NAP & cls0.id39 & SAT & 0.8 & SAT & 14.6 & 14.4 & (slv. during collect.) & SAT & 5.9 \\
NAP & cls3.id152 & TO & 3601.5 & TO & 3601.5 & 245.8 &  & TO & 3601.5 \\
NAP & cls2.id100 & TO & 3601.5 & SAT & 1.7 & 1.6 & (slv. during collect.) & SAT & 0.6 \\
NAP & cls0.id41 & SAT & 360.0 & SAT & 16.3 & 16.2 & (slv. during collect.) & SAT & 41.6 \\
NAP & cls4.id209 & SAT & 2123.7 & SAT & 2.0 & 1.9 & (slv. during collect.) & SAT & 1097.8 \\
NAP & cls0.id26 & SAT & 58.4 & SAT & 4.2 & 4.1 & (slv. during collect.) & SAT & 1.1 \\
NAP & cls2.id105 & TO & 3601.5 & SAT & 6.0 & 5.9 & (slv. during collect.) & SAT & 71.7 \\
NAP & cls3.id162 & SAT & 1482.7 & SAT & 1696.1 & 228.4 &  & SAT & 295.0 \\
NAP & cls0.id48 & SAT & 2844.3 & SAT & 52.7 & 52.5 & (slv. during collect.) & SAT & 254.9 \\
NAP & cls3.id126 & SAT & 3.0 & SAT & 190.6 & 187.5 &  & SAT & 149.7 \\
NAP & cls3.id141 & TO & 3601.5 & TO & 3601.5 & 222.3 &  & TO & 3601.5 \\
NAP & cls0.id14 & SAT & 4.5 & SAT & 1.9 & 1.8 & (slv. during collect.) & SAT & 0.9 \\
NAP & cls0.id32 & SAT & 3231.1 & SAT & 10.9 & 10.8 & (slv. during collect.) & SAT & 129.3 \\
NAP & cls3.id142 & SAT & 33.6 & SAT & 25.2 & 25.1 & (slv. during collect.) & SAT & 1508.0 \\
NAP & cls3.id121 & SAT & 1583.8 & SAT & 10.2 & 10.1 & (slv. during collect.) & SAT & 0.6 \\
NAP & cls4.id216 & TO & 3601.5 & SAT & 6.9 & 6.7 & (slv. during collect.) & SAT & 7.0 \\
NAP & cls4.id201 & TO & 3601.4 & SAT & 77.8 & 77.7 & (slv. during collect.) & SAT & 9.3 \\
NAP & cls1.id81 & SAT & 1623.4 & SAT & 57.3 & 57.2 & (slv. during collect.) & SAT & 0.8 \\
NAP & cls4.id188 & TO & 3601.5 & TO & 3601.5 & 167.1 &  & SAT & 3338.7 \\
NAP & cls4.id232 & TO & 3601.5 & TO & 3602.1 & 173.8 &  & SAT & 3449.0 \\
NAP & cls2.id102 & TO & 3601.4 & SAT & 1.3 & 1.2 & (slv. during collect.) & SAT & 1.3 \\
NAP & cls3.id147 & SAT & 675.5 & SAT & 1.0 & 0.9 & (slv. during collect.) & SAT & 1.1 \\
NAP & cls0.id61 & SAT & 998.1 & SAT & 25.6 & 25.5 & (slv. during collect.) & SAT & 0.7 \\
NAP & cls0.id44 & SAT & 493.7 & SAT & 0.8 & 0.7 & (slv. during collect.) & SAT & 7.0 \\
NAP & cls2.id97 & SAT & 601.8 & SAT & 17.8 & 17.7 & (slv. during collect.) & SAT & 170.5 \\
NAP & cls2.id89 & SAT & 2897.6 & SAT & 18.2 & 18.1 & (slv. during collect.) & SAT & 31.5 \\
NAP & cls0.id42 & SAT & 24.4 & SAT & 16.4 & 16.2 & (slv. during collect.) & SAT & 1.4 \\
NAP & cls0.id4 & SAT & 255.9 & SAT & 1.7 & 1.6 & (slv. during collect.) & SAT & 0.5 \\
NAP & cls0.id40 & SAT & 166.8 & SAT & 0.9 & 0.8 & (slv. during collect.) & SAT & 0.9 \\
NAP & cls2.id96 & SAT & 863.1 & SAT & 1.1 & 0.9 & (slv. during collect.) & SAT & 193.5 \\
NAP & cls3.id149 & SAT & 962.8 & SAT & 9.8 & 9.6 & (slv. during collect.) & SAT & 0.7 \\
NAP & cls3.id159 & SAT & 723.6 & SAT & 41.6 & 41.5 & (slv. during collect.) & SAT & 140.1 \\
NAP & cls4.id174 & SAT & 99.4 & SAT & 1.6 & 1.5 & (slv. during collect.) & SAT & 9.2 \\
NAP & cls3.id116 & SAT & 561.6 & SAT & 2.3 & 2.2 & (slv. during collect.) & SAT & 0.9 \\
NAP & cls3.id118 & SAT & 2.3 & SAT & 66.4 & 66.3 & (slv. during collect.) & SAT & 8.7 \\
NAP & cls3.id161 & TO & 3601.5 & SAT & 44.4 & 44.2 & (slv. during collect.) & SAT & 201.8 \\
NAP & cls1.id84 & SAT & 1.5 & SAT & 12.1 & 12.0 & (slv. during collect.) & SAT & 1.1 \\
NAP & cls4.id177 & TO & 3601.5 & SAT & 1.5 & 1.4 & (slv. during collect.) & SAT & 2.2 \\
NAP & cls4.id193 & SAT & 2419.1 & SAT & 33.8 & 33.7 & (slv. during collect.) & SAT & 61.8 \\
NAP & cls3.id157 & TO & 3601.5 & TO & 3601.5 & 223.6 &  & SAT & 3051.9 \\
NAP & cls3.id163 & SAT & 3265.2 & SAT & 104.5 & 104.3 & (slv. during collect.) & TO & 3601.5 \\
NAP & cls4.id212 & SAT & 1524.5 & SAT & 2.5 & 2.3 & (slv. during collect.) & SAT & 158.9 \\
NAP & cls0.id5 & SAT & 0.8 & SAT & 0.8 & 0.6 & (slv. during collect.) & SAT & 0.6 \\
NAP & cls0.id17 & SAT & 266.5 & SAT & 4.8 & 4.6 & (slv. during collect.) & SAT & 0.7 \\
NAP & cls4.id223 & SAT & 105.2 & SAT & 263.4 & 156.9 &  & SAT & 4.9 \\
NAP & cls4.id203 & TO & 3601.5 & TO & 3601.5 & 175.5 &  & SAT & 687.9 \\
NAP & cls0.id11 & SAT & 41.3 & SAT & 11.5 & 11.4 & (slv. during collect.) & SAT & 0.9 \\
NAP & cls1.id77 & TO & 3601.5 & TO & 3601.6 & 162.3 &  & TO & 3601.5 \\
NAP & cls3.id111 & SAT & 280.2 & SAT & 2.4 & 2.2 & (slv. during collect.) & SAT & 0.8 \\
NAP & cls2.id104 & TO & 3601.4 & SAT & 1.5 & 1.4 & (slv. during collect.) & SAT & 1.0 \\
NAP & cls0.id21 & SAT & 0.9 & SAT & 1.9 & 1.7 & (slv. during collect.) & SAT & 0.7 \\
NAP & cls3.id172 & SAT & 1294.5 & SAT & 10.4 & 10.3 & (slv. during collect.) & SAT & 11.2 \\
NAP & cls2.id92 & SAT & 19.4 & SAT & 18.4 & 18.4 & (slv. during collect.) & TO & 3601.5 \\
NAP & cls4.id204 & TO & 3601.5 & TO & 3601.5 & 182.2 &  & SAT & 200.9 \\
NAP & cls4.id207 & SAT & 2150.0 & SAT & 35.0 & 34.9 & (slv. during collect.) & SAT & 12.8 \\
NAP & cls0.id6 & SAT & 1524.5 & SAT & 0.7 & 0.6 & (slv. during collect.) & SAT & 0.9 \\
NAP & cls3.id144 & SAT & 194.6 & SAT & 2.2 & 2.2 & (slv. during collect.) & SAT & 6.2 \\
NAP & cls4.id184 & SAT & 559.1 & SAT & 2.6 & 2.5 & (slv. during collect.) & SAT & 64.7 \\
NAP & cls4.id233 & SAT & 498.5 & SAT & 1.5 & 1.3 & (slv. during collect.) & SAT & 1.8 \\
NAP & cls0.id29 & SAT & 22.4 & SAT & 13.7 & 13.6 & (slv. during collect.) & SAT & 656.9 \\
NAP & cls0.id64 & TO & 3601.5 & SAT & 16.2 & 16.2 & (slv. during collect.) & SAT & 0.6 \\
NAP & cls0.id19 & SAT & 175.1 & SAT & 15.1 & 15.0 & (slv. during collect.) & SAT & 41.4 \\
NAP & cls3.id165 & SAT & 256.5 & SAT & 10.6 & 10.5 & (slv. during collect.) & SAT & 0.9 \\
NAP & cls0.id10 & SAT & 869.5 & SAT & 10.2 & 10.0 & (slv. during collect.) & SAT & 0.7 \\
NAP & cls0.id33 & SAT & 1475.3 & SAT & 1.3 & 1.1 & (slv. during collect.) & SAT & 70.5 \\
NAP & cls0.id13 & SAT & 35.9 & SAT & 3.5 & 3.3 & (slv. during collect.) & SAT & 16.9 \\
NAP & cls3.id154 & SAT & 1057.9 & SAT & 4.5 & 4.4 & (slv. during collect.) & SAT & 5.9 \\
NAP & cls1.id83 & SAT & 60.7 & SAT & 246.8 & 184.7 &  & SAT & 1405.8 \\
NAP & cls1.id85 & TO & 3601.5 & SAT & 26.3 & 26.2 & (slv. during collect.) & SAT & 372.3 \\
NAP & cls0.id75 & TO & 3601.5 & SAT & 4.0 & 3.9 & (slv. during collect.) & SAT & 114.9 \\
NAP & cls1.id78 & TO & 3601.5 & TO & 3601.5 & 142.5 &  & SAT & 2596.2 \\
NAP & cls3.id124 & SAT & 358.2 & SAT & 49.1 & 49.0 & (slv. during collect.) & SAT & 525.0 \\
NAP & cls3.id171 & SAT & 1041.2 & SAT & 26.6 & 26.5 & (slv. during collect.) & TO & 3601.5 \\
NAP & cls0.id3 & SAT & 0.8 & SAT & 0.8 & 0.8 & (slv. during collect.) & SAT & 0.6 \\
NAP & cls4.id187 & TO & 3601.5 & SAT & 44.6 & 44.5 & (slv. during collect.) & SAT & 42.5 \\
NAP & cls0.id37 & TO & 3601.5 & SAT & 1.6 & 1.4 & (slv. during collect.) & SAT & 11.1 \\
NAP & cls3.id158 & SAT & 2941.8 & SAT & 3221.1 & 215.5 &  & SAT & 120.0 \\
NAP & cls2.id90 & SAT & 2677.2 & SAT & 19.4 & 19.3 & (slv. during collect.) & SAT & 41.8 \\
NAP & cls3.id119 & SAT & 3.1 & SAT & 68.5 & 68.3 & (slv. during collect.) & SAT & 755.2 \\
NAP & cls0.id56 & SAT & 122.0 & SAT & 13.8 & 13.7 & (slv. during collect.) & SAT & 0.6 \\
NAP & cls3.id110 & SAT & 275.6 & SAT & 10.9 & 10.8 & (slv. during collect.) & SAT & 66.9 \\
NAP & cls0.id72 & SAT & 329.2 & SAT & 62.4 & 62.3 & (slv. during collect.) & SAT & 86.8 \\
NAP & cls3.id117 & SAT & 1.9 & SAT & 31.4 & 31.3 & (slv. during collect.) & SAT & 3.4 \\
NAP & cls0.id2 & SAT & 3006.0 & SAT & 0.8 & 0.6 & (slv. during collect.) & SAT & 0.7 \\
NAP & cls3.id131 & SAT & 893.9 & SAT & 74.8 & 74.6 & (slv. during collect.) & SAT & 86.0 \\
NAP & cls4.id175 & TO & 3601.5 & SAT & 2.9 & 2.8 & (slv. during collect.) & SAT & 42.1 \\
NAP & cls0.id46 & TO & 3601.5 & SAT & 8.1 & 7.9 & (slv. during collect.) & SAT & 62.7 \\
NAP & cls4.id229 & SAT & 3570.4 & SAT & 1.2 & 1.2 & (slv. during collect.) & SAT & 217.1 \\
NAP & cls0.id34 & SAT & 85.5 & SAT & 21.9 & 21.7 & (slv. during collect.) & SAT & 263.1 \\
NAP & cls0.id54 & TO & 3601.5 & SAT & 35.4 & 35.3 & (slv. during collect.) & SAT & 0.9 \\
NAP & cls4.id210 & TO & 3601.4 & SAT & 50.7 & 50.6 & (slv. during collect.) & SAT & 499.4 \\
NAP & cls4.id179 & TO & 3601.5 & SAT & 2.5 & 2.3 & (slv. during collect.) & SAT & 3.2 \\
NAP & cls3.id170 & SAT & 1973.8 & SAT & 31.7 & 31.6 & (slv. during collect.) & SAT & 140.0 \\
NAP & cls3.id135 & SAT & 626.9 & SAT & 845.6 & 216.2 &  & SAT & 986.4 \\
NAP & cls0.id35 & SAT & 138.9 & SAT & 11.4 & 11.3 & (slv. during collect.) & SAT & 34.2 \\
NAP & cls0.id27 & SAT & 1016.1 & SAT & 13.2 & 13.1 & (slv. during collect.) & SAT & 11.3 \\
NAP & cls0.id59 & SAT & 288.9 & SAT & 75.2 & 75.1 & (slv. during collect.) & SAT & 87.1 \\
NAP & cls3.id140 & SAT & 316.8 & SAT & 91.4 & 91.3 & (slv. during collect.) & SAT & 141.9 \\
NAP & cls4.id218 & TO & 3601.5 & SAT & 34.8 & 34.7 & (slv. during collect.) & SAT & 215.9 \\
NAP & cls0.id73 & TO & 3601.5 & SAT & 7.7 & 7.5 & (slv. during collect.) & SAT & 51.3 \\
NAP & cls3.id127 & SAT & 2945.3 & SAT & 3148.0 & 187.6 &  & TO & 3601.5 \\
NAP & cls3.id125 & SAT & 1250.8 & SAT & 65.8 & 65.7 & (slv. during collect.) & SAT & 1.2 \\
NAP & cls3.id129 & SAT & 152.0 & SAT & 353.1 & 201.7 &  & SAT & 206.9 \\
NAP & cls4.id224 & TO & 3601.5 & SAT & 88.0 & 87.8 & (slv. during collect.) & TO & 3601.4 \\
\bottomrule
\end{xltabular}
\end{center}

\subsection{\marabou vs. \cncTuneAndValSeq}
\begin{center}
\scriptsize
\setlength{\tabcolsep}{10.3pt}
\begin{xltabular}{\textwidth}{@{}llrrrrr@{}}
\toprule
& & \multicolumn{2}{c}{\tabtitle{\marabou}} & \multicolumn{3}{c}{\tabtitle{\cncTuneAndValSeq}} \\
\cmidrule(lr){3-4} \cmidrule(lr){5-7}
\tabtitle{Family} & \tabtitle{Benchmark}  & \tabtitle{Res.} & \coltimetotal & \tabtitle{Res.} & \coltimetotal & \coltimelearning \\
\endhead
AltLoop & REI.id298.ep79143 & UNS & 747.5 & UNS & 440.3 & 52.2 \\
AltLoop & REI.id303.ep95137 & SAT & 0.0 & SAT & 69.3 & 69.2 \\
AltLoop & REI.id549.ep99098 & SAT & 0.0 & SAT & 118.0 & 117.9 \\
AltLoop & REI.id512.ep97385 & SAT & 511.3 & SAT & 136.9 & 136.8 \\
AltLoop & REI.id234.ep70556 & SAT & 64.2 & SAT & 7.8 & 7.7 \\
AltLoop & REI.id68.ep95854 & SAT & 0.0 & SAT & 155.0 & 154.9 \\
AltLoop & REI.id174.ep97933 & SAT & 0.0 & SAT & 0.7 & 0.6 \\
AltLoop & REI.id418.ep92671 & SAT & 0.0 & SAT & 133.2 & 133.1 \\
AltLoop & REI.id418.ep91524 & SAT & 318.1 & SAT & 102.0 & 101.9 \\
AltLoop & REI.id518.ep95900 & UNS & 338.5 & UNS & 437.4 & 56.8 \\
AltLoop & REI.id549.ep93366 & SAT & 0.0 & SAT & 4.8 & 4.7 \\
AltLoop & REI.id35.ep85842 & SAT & 0.0 & SAT & 7.8 & 7.7 \\
AltLoop & REI.id76.ep91480 & UNS & 633.1 & UNS & 398.2 & 55.0 \\
AltLoop & REI.id176.ep95516 & UNS & 759.2 & UNS & 713.1 & 62.3 \\
AltLoop & REI.id239.ep71386 & UNS & 3661.5 & UNS & 609.8 & 56.5 \\
AltLoop & REI.id343.ep73171 & UNS & 1149.4 & UNS & 935.2 & 69.7 \\
AltLoop & REI.id444.ep87148 & UNS & 888.7 & UNS & 567.0 & 34.0 \\
AltLoop & REI.id381.ep85558 & SAT & 820.0 & SAT & 128.4 & 128.3 \\
AltLoop & REI.id418.ep92561 & SAT & 0.0 & SAT & 118.8 & 118.7 \\
AltLoop & REI.id319.ep84434 & SAT & 0.0 & SAT & 99.1 & 99.0 \\
AltLoop & REI.id457.ep46821 & SAT & 0.0 & SAT & 33.0 & 32.9 \\
AltLoop & REI.id240.ep98115 & UNS & 460.3 & UNS & 210.2 & 56.1 \\
AltLoop & REI.id491.ep67504 & SAT & 111.5 & SAT & 153.7 & 153.5 \\
AltLoop & REI.id319.ep86630 & SAT & 0.0 & SAT & 85.4 & 85.3 \\
AltLoop & REI.id457.ep69348 & SAT & 0.0 & SAT & 14.6 & 14.4 \\
AltLoop & REI.id381.ep93562 & UNS & 1755.0 & UNS & 1178.1 & 65.6 \\
AltLoop & REI.id180.ep77624 & SAT & 0.0 & SAT & 4.6 & 4.5 \\
AltLoop & REI.id530.ep82185 & SAT & 0.0 & SAT & 11.5 & 11.4 \\
AltLoop & REI.id81.ep91571 & UNS & 1148.1 & UNS & 2141.6 & 71.3 \\
AltLoop & REI.id491.ep96326 & SAT & 1361.5 & SAT & 28.7 & 28.5 \\
AltLoop & REI.id158.ep53113 & SAT & 135.7 & SAT & 14.5 & 14.3 \\
AltLoop & REI.id381.ep92541 & UNS & 1374.2 & UNS & 862.8 & 66.5 \\
AltLoop & REI.id165.ep82822 & SAT & 0.0 & SAT & 0.7 & 0.5 \\
AltLoop & REI.id165.ep94522 & SAT & 124.4 & SAT & 2.0 & 1.8 \\
AltLoop & REI.id176.ep96138 & UNS & 493.8 & UNS & 348.7 & 50.2 \\
AltLoop & REI.id298.ep97154 & SAT & 0.0 & SAT & 3.2 & 3.0 \\
AltLoop & REI.id91.ep75275 & UNS & 533.6 & UNS & 229.8 & 50.2 \\
AltLoop & REI.id318.ep85341 & SAT & 347.7 & SAT & 323.3 & 72.7 \\
AltLoop & REI.id530.ep91636 & SAT & 0.0 & SAT & 166.1 & 166.0 \\
AltLoop & REI.id114.ep97993 & SAT & 182.5 & SAT & 0.9 & 0.8 \\
AltLoop & REI.id76.ep90543 & UNS & 579.2 & UNS & 354.8 & 60.1 \\
AltLoop & REI.id549.ep87887 & SAT & 0.0 & SAT & 370.9 & 370.7 \\
AltLoop & REI.id128.ep85878 & UNS & 4491.0 & UNS & 452.8 & 59.1 \\
AltLoop & REI.id77.ep91446 & UNS & 427.6 & UNS & 247.3 & 44.8 \\
AltLoop & REI.id379.ep87708 & UNS & 2085.0 & UNS & 688.2 & 67.2 \\
AltLoop & REI.id512.ep97330 & SAT & 533.0 & SAT & 136.9 & 136.8 \\
AltLoop & REI.id165.ep96974 & SAT & 188.6 & SAT & 5.7 & 5.5 \\
AltLoop & REI.id491.ep67924 & SAT & 0.0 & SAT & 6.8 & 6.7 \\
AltLoop & REI.id318.ep96308 & SAT & 767.5 & SAT & 167.4 & 167.2 \\
AltLoop & REI.id286.ep89999 & UNS & 110.0 & UNS & 161.4 & 60.0 \\
AltLoop & REI.id35.ep61486 & SAT & 0.0 & SAT & 50.1 & 49.9 \\
AltLoop & REI.id234.ep83780 & SAT & 0.0 & SAT & 2.1 & 1.9 \\
AltLoop & REI.id73.ep74414 & SAT & 345.4 & SAT & 154.3 & 154.2 \\
AltLoop & REI.id286.ep94925 & UNS & 183.3 & UNS & 534.3 & 54.7 \\
AltLoop & REI.id91.ep58150 & UNS & 208.7 & UNS & 165.2 & 61.8 \\
AltLoop & REI.id68.ep95942 & SAT & 0.0 & SAT & 56.1 & 56.0 \\
AltLoop & REI.id196.ep73462 & SAT & 0.1 & SAT & 4.8 & 4.7 \\
AltLoop & REI.id128.ep91042 & UNS & 336.3 & UNS & 308.5 & 56.8 \\
AltLoop & REI.id540.ep57595 & UNS & 4232.3 & UNS & 503.2 & 89.9 \\
AltLoop & REI.id239.ep73504 & SAT & 22.7 & SAT & 99.1 & 68.5 \\
AltLoop & REI.id549.ep99331 & SAT & 0.0 & SAT & 157.4 & 157.3 \\
AltLoop & REI.id318.ep75036 & SAT & 1361.0 & SAT & 43.3 & 43.2 \\
AltLoop & REI.id530.ep82130 & SAT & 0.0 & SAT & 436.7 & 436.5 \\
AltLoop & REI.id512.ep70217 & SAT & 225.6 & SAT & 468.0 & 49.9 \\
AltLoop & REI.id180.ep58461 & SAT & 33.9 & SAT & 2.4 & 2.2 \\
AltLoop & REI.id176.ep92206 & UNS & 1802.5 & UNS & 559.4 & 84.1 \\
AltLoop & REI.id239.ep90118 & SAT & 0.0 & SAT & 4.5 & 4.4 \\
AltLoop & REI.id319.ep62661 & SAT & 0.0 & SAT & 297.9 & 297.8 \\
AltLoop & REI.id215.ep97421 & SAT & 0.0 & SAT & 13.9 & 13.8 \\
AltLoop & REI.id308.ep98363 & SAT & 309.0 & SAT & 9.8 & 9.7 \\
AltLoop & REI.id457.ep78403 & SAT & 0.0 & SAT & 29.9 & 29.7 \\
AltLoop & REI.id267.ep96333 & UNS & 0.0 & UNS & 117.1 & 117.0 \\
AltLoop & REI.id240.ep89317 & UNS & 159.7 & UNS & 448.5 & 69.0 \\
AltLoop & REI.id114.ep94159 & SAT & 215.9 & SAT & 0.8 & 0.6 \\
AltLoop & REI.id502.ep71445 & SAT & 0.0 & SAT & 142.3 & 142.2 \\
AltLoop & REI.id502.ep90122 & SAT & 0.0 & SAT & 12.5 & 12.4 \\
AltLoop & REI.id425.ep93745 & UNS & 4007.1 & UNS & 1748.3 & 66.7 \\
AltLoop & REI.id174.ep81057 & SAT & 0.1 & SAT & 0.8 & 0.7 \\
AltLoop & REI.id540.ep57610 & UNS & 3156.6 & UNS & 471.9 & 69.3 \\
AltLoop & REI.id165.ep95976 & SAT & 92.8 & SAT & 1.8 & 1.6 \\
AltLoop & REI.id296.ep79125 & UNS & 259.9 & UNS & 225.7 & 36.6 \\
AltLoop & REI.id549.ep85789 & SAT & 0.0 & SAT & 3.8 & 3.7 \\
AltLoop & REI.id379.ep96909 & SAT & 1277.8 & SAT & 574.6 & 76.1 \\
AltLoop & REI.id180.ep69795 & SAT & 0.0 & SAT & 5.3 & 5.1 \\
AltLoop & REI.id308.ep99303 & SAT & 377.7 & SAT & 168.3 & 168.2 \\
AltLoop & REI.id68.ep95821 & SAT & 0.0 & SAT & 274.4 & 274.2 \\
AltLoop & REI.id299.ep69400 & UNS & 462.2 & UNS & 412.4 & 60.3 \\
AltLoop & REI.id528.ep89177 & SAT & 0.0 & SAT & 113.2 & 113.1 \\
AltLoop & REI.id240.ep85050 & UNS & 433.8 & UNS & 270.6 & 55.8 \\
AltLoop & REI.id528.ep95627 & SAT & 71.7 & SAT & 177.8 & 177.7 \\
AltLoop & REI.id318.ep89536 & SAT & 894.8 & SAT & 46.0 & 45.9 \\
AltLoop & REI.id318.ep94480 & SAT & 1635.2 & SAT & 167.4 & 167.3 \\
AltLoop & REI.id418.ep99126 & SAT & 2703.4 & SAT & 7.5 & 7.3 \\
AltLoop & REI.id308.ep94787 & UNS & 409.0 & UNS & 345.2 & 38.7 \\
AltLoop & REI.id399.ep69315 & UNS & 187.8 & UNS & 240.8 & 69.0 \\
AltLoop & REI.id90.ep84292 & UNS & 3093.0 & UNS & 1279.3 & 47.7 \\
AltLoop & REI.id165.ep94134 & SAT & 59.9 & SAT & 2.1 & 2.0 \\
AltLoop & REI.id343.ep63040 & UNS & 60.0 & UNS & 178.0 & 76.9 \\
AltLoop & REI.id47.ep79665 & SAT & 43.7 & SAT & 22.5 & 22.4 \\
AltLoop & REI.id77.ep99229 & UNS & 534.6 & UNS & 208.2 & 59.1 \\
AltLoop & REI.id201.ep48186 & SAT & 62.4 & SAT & 3.4 & 3.3 \\
AltLoop & REI.id502.ep84337 & SAT & 0.0 & SAT & 142.3 & 142.2 \\
AltLoop & REI.id81.ep97775 & UNS & 1289.2 & UNS & 920.0 & 59.9 \\
AltLoop & REI.id540.ep57775 & UNS & 4935.6 & UNS & 562.3 & 73.8 \\
AltLoop & REI.id77.ep99119 & UNS & 564.9 & UNS & 211.8 & 65.2 \\
AltLoop & REI.id393.ep97111 & SAT & 0.0 & SAT & 103.6 & 42.7 \\
AltLoop & REI.id298.ep85589 & UNS & 589.3 & UNS & 466.5 & 48.6 \\
AltLoop & REI.id303.ep95038 & SAT & 0.0 & SAT & 5.5 & 5.3 \\
AltLoop & REI.id352.ep49376 & UNS & 24.7 & UNS & 177.9 & 177.7 \\
AltLoop & REI.id528.ep55711 & SAT & 279.1 & SAT & 438.4 & 438.3 \\
AltLoop & REI.id67.ep99214 & SAT & 81.1 & SAT & 197.0 & 196.9 \\
AltLoop & REI.id267.ep95713 & UNS & 0.1 & UNS & 70.6 & 70.5 \\
AltLoop & REI.id321.ep81083 & UNS & 111.8 & UNS & 369.9 & 58.5 \\
AltLoop & REI.id299.ep92705 & UNS & 1241.0 & UNS & 378.1 & 66.1 \\
AltLoop & REI.id512.ep75804 & SAT & 581.4 & SAT & 142.1 & 141.9 \\
AltLoop & REI.id530.ep82207 & SAT & 0.0 & SAT & 412.4 & 412.3 \\
AltLoop & REI.id176.ep98430 & UNS & 1510.7 & UNS & 1078.4 & 54.4 \\
AltLoop & REI.id399.ep62611 & UNS & 173.7 & UNS & 152.0 & 60.3 \\
AltLoop & REI.id201.ep39282 & SAT & 90.2 & SAT & 2.7 & 2.5 \\
AltLoop & REI.id67.ep95115 & SAT & 47.0 & SAT & 156.8 & 156.6 \\
AltLoop & REI.id128.ep87384 & SAT & 1147.2 & SAT & 8.7 & 8.6 \\
AltLoop & REI.id77.ep49509 & UNS & 150.6 & UNS & 181.9 & 59.8 \\
AltLoop & REI.id239.ep79112 & UNS & 3356.2 & UNS & 474.6 & 58.6 \\
AltLoop & REI.id444.ep78487 & UNS & 1754.0 & UNS & 1036.5 & 44.3 \\
AltLoop & REI.id296.ep77867 & UNS & 95.4 & UNS & 162.2 & 31.3 \\
AltLoop & REI.id286.ep89988 & UNS & 109.8 & UNS & 98.9 & 43.5 \\
AltLoop & REI.id81.ep91408 & UNS & 1029.7 & UNS & 666.4 & 54.1 \\
AltLoop & REI.id308.ep97397 & SAT & 228.5 & SAT & 139.9 & 36.7 \\
AltLoop & REI.id535.ep76894 & UNS & 229.8 & UNS & 410.3 & 58.2 \\
AltLoop & REI.id73.ep74469 & SAT & 273.8 & SAT & 218.1 & 218.0 \\
AltLoop & REI.id303.ep94994 & SAT & 0.0 & SAT & 5.4 & 5.2 \\
AltLoop & REI.id321.ep93439 & UNS & 71.2 & UNS & 250.0 & 94.9 \\
AltLoop & REI.id381.ep92994 & SAT & 1479.5 & SAT & 29.6 & 29.4 \\
AltLoop & REI.id502.ep71027 & SAT & 0.0 & SAT & 201.0 & 200.9 \\
AltLoop & REI.id502.ep86449 & SAT & 0.0 & SAT & 142.2 & 142.0 \\
AltLoop & REI.id174.ep91107 & SAT & 0.0 & SAT & 0.8 & 0.6 \\
AltLoop & REI.id76.ep92143 & UNS & 206.6 & UNS & 242.0 & 56.3 \\
AltLoop & REI.id528.ep89331 & SAT & 64.1 & SAT & 108.5 & 108.4 \\
AltLoop & REI.id176.ep99427 & UNS & 1021.9 & UNS & 569.5 & 45.6 \\
AltLoop & REI.id352.ep68171 & UNS & 150.3 & UNS & 195.6 & 52.9 \\
AltLoop & REI.id47.ep96122 & SAT & 45.5 & SAT & 70.5 & 70.3 \\
AltLoop & REI.id286.ep90032 & UNS & 96.8 & UNS & 108.5 & 44.5 \\
AltLoop & REI.id35.ep94624 & SAT & 0.0 & SAT & 135.1 & 135.0 \\
AltLoop & REI.id425.ep83340 & SAT & 0.0 & SAT & 536.4 & 536.3 \\
AltLoop & REI.id399.ep92145 & UNS & 436.5 & UNS & 217.1 & 71.2 \\
AltLoop & REI.id399.ep96449 & SAT & 2964.5 & SAT & 812.7 & 50.6 \\
AltLoop & REI.id352.ep70312 & UNS & 43.2 & UNS & 125.5 & 125.4 \\
AltLoop & REI.id267.ep91426 & UNS & 0.1 & UNS & 49.6 & 49.5 \\
AltLoop & REI.id77.ep75183 & UNS & 220.0 & UNS & 210.0 & 53.4 \\
AltLoop & REI.id114.ep67447 & SAT & 59.1 & SAT & 0.7 & 0.6 \\
AltLoop & REI.id114.ep84297 & UNS & 742.2 & UNS & 562.7 & 49.9 \\
AltLoop & REI.id425.ep96981 & SAT & 0.0 & SAT & 54.2 & 54.0 \\
AltLoop & REI.id68.ep95777 & SAT & 0.0 & SAT & 197.0 & 196.9 \\
AltLoop & REI.id352.ep51510 & UNS & 73.5 & UNS & 276.8 & 79.9 \\
AltLoop & REI.id379.ep80359 & SAT & 28.5 & SAT & 205.2 & 205.1 \\
AltLoop & REI.id444.ep92635 & UNS & 1172.2 & UNS & 518.5 & 56.0 \\
AltLoop & REI.id158.ep63427 & UNS & 115.4 & UNS & 121.8 & 54.3 \\
AltLoop & REI.id128.ep68334 & UNS & 1027.1 & UNS & 352.8 & 56.5 \\
AltLoop & REI.id196.ep43251 & SAT & 0.0 & SAT & 5.2 & 5.1 \\
AltLoop & REI.id73.ep82370 & UNS & 2597.3 & UNS & 364.0 & 48.2 \\
AltLoop & REI.id286.ep95719 & UNS & 145.9 & UNS & 522.6 & 56.0 \\
AltLoop & REI.id303.ep69708 & SAT & 47.2 & SAT & 116.6 & 81.6 \\
AltLoop & REI.id90.ep97680 & UNS & 1533.5 & UNS & 281.8 & 65.7 \\
AltLoop & REI.id91.ep95105 & UNS & 1272.3 & UNS & 565.9 & 56.7 \\
AltLoop & REI.id158.ep53554 & SAT & 285.3 & SAT & 212.6 & 54.9 \\
AltLoop & REI.id308.ep98830 & UNS & 534.0 & UNS & 334.5 & 41.3 \\
AltLoop & REI.id418.ep99947 & SAT & 1007.0 & SAT & 105.2 & 105.0 \\
AltLoop & REI.id379.ep71969 & SAT & 871.0 & SAT & 617.2 & 50.6 \\
AltLoop & REI.id512.ep70283 & SAT & 245.3 & SAT & 509.2 & 42.6 \\
AltLoop & REI.id90.ep99033 & TO & 14503.1 & UNS & 1248.5 & 65.7 \\
AltLoop & REI.id425.ep98177 & SAT & 0.0 & SAT & 97.9 & 97.8 \\
AltLoop & REI.id530.ep94463 & SAT & 0.0 & SAT & 58.8 & 58.6 \\
AltLoop & REI.id215.ep97333 & SAT & 0.0 & SAT & 12.9 & 12.8 \\
AltLoop & REI.id267.ep96300 & UNS & 0.1 & UNS & 70.1 & 70.0 \\
AltLoop & REI.id234.ep71965 & SAT & 30.3 & SAT & 20.5 & 20.4 \\
AltLoop & REI.id457.ep79411 & SAT & 0.0 & SAT & 1.0 & 0.8 \\
AltLoop & REI.id518.ep68431 & UNS & 269.5 & UNS & 417.7 & 55.5 \\
AltLoop & REI.id201.ep48219 & SAT & 59.0 & SAT & 72.9 & 72.8 \\
AltLoop & REI.id425.ep69300 & SAT & 356.1 & SAT & 17.3 & 17.1 \\
AltLoop & REI.id128.ep97983 & UNS & 2546.4 & UNS & 451.3 & 87.0 \\
AltLoop & REI.id234.ep83703 & SAT & 0.0 & SAT & 1.7 & 1.5 \\
AltLoop & REI.id518.ep95639 & UNS & 303.1 & UNS & 405.1 & 59.8 \\
AltLoop & REI.id444.ep77347 & UNS & 1318.2 & UNS & 953.4 & 35.2 \\
AltLoop & REI.id201.ep88825 & SAT & 0.0 & SAT & 2.3 & 2.2 \\
AltLoop & REI.id215.ep96277 & SAT & 0.0 & SAT & 1.7 & 1.6 \\
AltLoop & REI.id35.ep81067 & SAT & 0.0 & SAT & 220.1 & 220.0 \\
AltLoop & REI.id239.ep73100 & UNS & 3207.2 & UNS & 609.8 & 57.6 \\
AltLoop & REI.id174.ep98455 & SAT & 0.0 & SAT & 0.9 & 0.8 \\
AltLoop & REI.id535.ep87226 & SAT & 0.0 & SAT & 226.7 & 226.6 \\
AltLoop & REI.id528.ep90341 & SAT & 45.5 & SAT & 25.7 & 25.5 \\
AltLoop & REI.id298.ep75768 & UNS & 728.3 & UNS & 515.5 & 39.4 \\
AltLoop & REI.id91.ep58172 & UNS & 211.9 & UNS & 147.2 & 53.6 \\
AltLoop & REI.id180.ep61293 & SAT & 63.5 & SAT & 8.8 & 8.7 \\
AltLoop & REI.id299.ep99426 & UNS & 538.1 & UNS & 332.6 & 71.6 \\
AltLoop & REI.id47.ep92041 & SAT & 0.0 & SAT & 21.7 & 21.6 \\
AltLoop & REI.id535.ep87193 & SAT & 32.6 & SAT & 165.5 & 165.4 \\
AltLoop & REI.id296.ep74854 & UNS & 233.2 & UNS & 486.1 & 54.2 \\
AltLoop & REI.id90.ep94747 & UNS & 2791.3 & UNS & 1337.5 & 49.3 \\
AltLoop & REI.id158.ep53499 & SAT & 301.9 & SAT & 145.6 & 57.7 \\
AltLoop & REI.id381.ep85470 & SAT & 1112.5 & SAT & 158.9 & 158.8 \\
AltLoop & REI.id399.ep85537 & UNS & 410.4 & UNS & 605.2 & 42.0 \\
AltLoop & REI.id319.ep89649 & SAT & 0.0 & SAT & 8.3 & 8.2 \\
AltLoop & REI.id321.ep92994 & UNS & 85.5 & UNS & 286.1 & 87.1 \\
AltLoop & REI.id240.ep98891 & UNS & 307.3 & UNS & 271.0 & 56.4 \\
AltLoop & REI.id81.ep89983 & UNS & 765.1 & UNS & 440.6 & 47.6 \\
AltLoop & REI.id491.ep70134 & SAT & 110.7 & SAT & 2.5 & 2.4 \\
AltLoop & REI.id296.ep76414 & UNS & 278.6 & UNS & 980.9 & 35.9 \\
AltLoop & REI.id518.ep70223 & UNS & 335.0 & UNS & 271.6 & 51.7 \\
AltLoop & REI.id91.ep54663 & UNS & 338.6 & UNS & 179.6 & 53.5 \\
AltLoop & REI.id240.ep91714 & UNS & 322.4 & UNS & 420.1 & 67.9 \\
AltLoop & REI.id352.ep69872 & UNS & 63.6 & UNS & 329.7 & 83.0 \\
AltLoop & REI.id343.ep63051 & UNS & 58.0 & UNS & 342.0 & 86.0 \\
AltLoop & REI.id196.ep41633 & SAT & 0.0 & SAT & 5.0 & 4.9 \\
AltLoop & REI.id319.ep84478 & SAT & 0.0 & SAT & 423.3 & 423.2 \\
AltLoop & REI.id196.ep99092 & SAT & 0.0 & SAT & 12.1 & 12.0 \\
AltLoop & REI.id68.ep98804 & SAT & 0.0 & SAT & 275.1 & 275.0 \\
AltLoop & REI.id196.ep86142 & SAT & 0.0 & SAT & 2.9 & 2.8 \\
AltLoop & REI.id81.ep97833 & UNS & 2134.4 & UNS & 473.1 & 68.7 \\
AltLoop & REI.id343.ep55239 & UNS & 29.8 & UNS & 167.5 & 167.3 \\
AltLoop & REI.id180.ep58329 & SAT & 32.3 & SAT & 2.3 & 2.1 \\
AltLoop & REI.id201.ep56910 & SAT & 89.3 & SAT & 15.0 & 14.9 \\
AltLoop & REI.id321.ep86834 & UNS & 79.8 & UNS & 335.2 & 84.6 \\
AltLoop & REI.id158.ep59378 & UNS & 122.7 & UNS & 120.1 & 46.4 \\
AltLoop & REI.id298.ep77187 & SAT & 35.1 & SAT & 119.2 & 69.8 \\
AltLoop & REI.id234.ep71035 & SAT & 42.0 & SAT & 4.0 & 3.9 \\
AltLoop & REI.id343.ep96195 & UNS & 402.6 & UNS & 277.6 & 68.2 \\
AltLoop & REI.id518.ep96115 & UNS & 336.6 & UNS & 636.9 & 66.6 \\
AltLoop & REI.id535.ep76916 & UNS & 220.6 & UNS & 410.4 & 44.2 \\
AltLoop & REI.id540.ep94690 & SAT & 0.0 & SAT & 4.7 & 4.6 \\
AltLoop & REI.id67.ep87611 & UNS & 205.1 & UNS & 352.7 & 71.3 \\
AltLoop & REI.id267.ep96426 & UNS & 0.1 & UNS & 57.2 & 57.0 \\
AltLoop & REI.id114.ep94581 & SAT & 175.0 & SAT & 0.9 & 0.7 \\
AltLoop & REI.id393.ep97221 & SAT & 0.0 & SAT & 636.0 & 73.7 \\
AltLoop & REI.id73.ep84023 & UNS & 730.4 & UNS & 258.8 & 47.0 \\
AltLoop & REI.id215.ep97355 & SAT & 0.0 & SAT & 14.6 & 14.5 \\
AltLoop & REI.id299.ep81175 & UNS & 268.8 & UNS & 401.4 & 63.4 \\
AltLoop & REI.id321.ep55647 & UNS & 31.2 & UNS & 397.4 & 397.3 \\
AltLoop & REI.id444.ep96160 & UNS & 510.6 & UNS & 527.3 & 34.4 \\
AltLoop & REI.id379.ep97133 & SAT & 173.6 & SAT & 150.0 & 49.7 \\
AltLoop & REI.id393.ep88789 & SAT & 0.0 & SAT & 349.7 & 45.5 \\
AltLoop & REI.id215.ep98767 & SAT & 0.0 & SAT & 13.4 & 13.2 \\
AltLoop & REI.id393.ep97199 & SAT & 0.0 & SAT & 277.7 & 74.1 \\
AltLoop & REI.id457.ep81404 & SAT & 0.0 & SAT & 5.2 & 5.0 \\
AltLoop & REI.id47.ep99600 & SAT & 62.3 & SAT & 408.2 & 408.0 \\
AltLoop & REI.id73.ep91504 & SAT & 0.0 & SAT & 53.5 & 53.3 \\
AltLoop & REI.id47.ep99688 & SAT & 50.2 & SAT & 13.8 & 13.7 \\
AltLoop & REI.id296.ep74969 & UNS & 227.1 & UNS & 342.6 & 36.5 \\
AltLoop & REI.id491.ep54571 & SAT & 249.5 & SAT & 153.6 & 153.5 \\
AltLoop & REI.id393.ep90550 & SAT & 182.0 & SAT & 65.0 & 64.9 \\
AltLoop & REI.id535.ep91323 & UNS & 284.6 & UNS & 333.1 & 61.4 \\
AltLoop & REI.id35.ep61788 & SAT & 7.4 & SAT & 79.7 & 79.5 \\
AltLoop & REI.id303.ep68520 & UNS & 1640.8 & UNS & 913.0 & 59.3 \\
AltLoop & REI.id67.ep82981 & UNS & 122.5 & UNS & 365.9 & 52.9 \\
AltLoop & REI.id76.ep98692 & UNS & 335.2 & UNS & 230.4 & 58.4 \\
AltLoop & REI.id299.ep69103 & UNS & 422.9 & UNS & 373.9 & 53.6 \\
AltLoop & REI.id174.ep66258 & SAT & 34.7 & SAT & 0.8 & 0.6 \\
AltLoop & REI.id76.ep99198 & UNS & 769.9 & UNS & 410.9 & 55.6 \\
AltLoop & REI.id540.ep85904 & SAT & 0.0 & SAT & 18.1 & 17.9 \\
AltLoop & REI.id67.ep84585 & UNS & 183.5 & UNS & 352.6 & 45.2 \\
NAP & cls0.id11 & SAT & 11898.8 & SAT & 76.0 & 75.8 \\
NAP & cls0.id5 & SAT & 0.4 & SAT & 3.9 & 3.7 \\
NAP & cls3.id169 & SAT & 5020.3 & SAT & 1395.3 & 966.8 \\
NAP & cls4.id173 & TO & 14400.0 & SAT & 6.3 & 6.2 \\
NAP & cls1.id85 & TO & 14400.0 & SAT & 178.5 & 178.3 \\
NAP & cls1.id81 & TO & 14400.0 & SAT & 377.1 & 376.9 \\
NAP & cls3.id127 & TO & 14400.0 & TO & 14400.0 & 1151.9 \\
NAP & cls1.id88 & TO & 14400.0 & SAT & 651.3 & 651.0 \\
NAP & cls3.id156 & SAT & 0.8 & SAT & 4.7 & 4.6 \\
NAP & cls3.id154 & TO & 14400.0 & SAT & 35.5 & 35.4 \\
NAP & cls3.id144 & TO & 14400.0 & SAT & 1.9 & 1.8 \\
NAP & cls4.id209 & TO & 14400.0 & SAT & 9.6 & 9.5 \\
NAP & cls0.id38 & TO & 14400.0 & SAT & 8240.0 & 882.3 \\
NAP & cls0.id71 & SAT & 5437.8 & SAT & 32.5 & 32.3 \\
NAP & cls1.id80 & SAT & 699.2 & SAT & 424.1 & 424.0 \\
NAP & cls0.id66 & SAT & 284.5 & SAT & 1182.7 & 997.0 \\
NAP & cls2.id95 & TO & 14400.0 & SAT & 214.7 & 214.5 \\
NAP & cls4.id187 & TO & 14400.0 & SAT & 281.3 & 281.2 \\
NAP & cls4.id207 & TO & 14400.0 & SAT & 233.7 & 233.6 \\
NAP & cls3.id166 & TO & 14400.0 & SAT & 512.6 & 512.5 \\
NAP & cls2.id96 & TO & 14400.0 & SAT & 7.5 & 7.4 \\
NAP & cls4.id204 & TO & 14400.0 & TO & 14400.0 & 1065.5 \\
NAP & cls2.id97 & TO & 14400.0 & SAT & 122.9 & 122.7 \\
NAP & cls2.id93 & TO & 14400.0 & SAT & 13214.1 & 1110.1 \\
NAP & cls4.id232 & TO & 14400.0 & TO & 14400.0 & 1076.5 \\
NAP & cls0.id57 & SAT & 8335.5 & SAT & 66.6 & 66.4 \\
NAP & cls0.id4 & SAT & 0.8 & SAT & 16.6 & 16.3 \\
NAP & cls4.id181 & SAT & 47.8 & SAT & 128.5 & 128.4 \\
NAP & cls3.id153 & TO & 14400.0 & SAT & 199.8 & 199.6 \\
NAP & cls0.id34 & SAT & 13902.2 & SAT & 138.6 & 138.5 \\
NAP & cls3.id148 & TO & 14400.0 & TO & 14400.0 & 988.2 \\
NAP & cls4.id225 & TO & 14400.0 & SAT & 97.2 & 97.0 \\
NAP & cls0.id31 & SAT & 4148.0 & SAT & 0.8 & 0.6 \\
NAP & cls0.id33 & TO & 14400.0 & SAT & 11.2 & 11.1 \\
NAP & cls1.id77 & SAT & 12369.8 & TO & 14400.0 & 967.2 \\
NAP & cls0.id72 & SAT & 9185.6 & SAT & 410.9 & 410.7 \\
NAP & cls3.id118 & SAT & 5.2 & SAT & 453.1 & 452.9 \\
NAP & cls0.id52 & TO & 14400.0 & SAT & 26.8 & 26.5 \\
NAP & cls4.id189 & TO & 14400.0 & SAT & 67.7 & 67.5 \\
NAP & cls3.id123 & SAT & 271.5 & SAT & 75.7 & 75.5 \\
NAP & cls4.id186 & TO & 14400.0 & SAT & 13340.0 & 1075.2 \\
NAP & cls4.id182 & SAT & 3624.3 & SAT & 7.9 & 7.8 \\
NAP & cls4.id219 & TO & 14400.0 & TO & 14400.0 & 999.0 \\
NAP & cls3.id116 & SAT & 11796.3 & SAT & 7.9 & 7.8 \\
NAP & cls3.id115 & SAT & 0.6 & SAT & 71.9 & 71.8 \\
NAP & cls2.id92 & TO & 14400.0 & SAT & 123.8 & 123.6 \\
NAP & cls0.id21 & SAT & 1.3 & SAT & 4.4 & 4.3 \\
NAP & cls3.id159 & TO & 14400.0 & SAT & 288.6 & 288.4 \\
NAP & cls0.id15 & SAT & 4021.3 & SAT & 74.2 & 74.0 \\
NAP & cls3.id161 & TO & 14400.0 & SAT & 293.8 & 293.7 \\
NAP & cls3.id146 & TO & 14400.0 & SAT & 10029.6 & 1031.5 \\
NAP & cls2.id99 & TO & 14400.0 & SAT & 262.7 & 262.5 \\
NAP & cls0.id20 & SAT & 5123.2 & SAT & 97.0 & 96.8 \\
NAP & cls2.id104 & TO & 14400.0 & SAT & 17.0 & 16.8 \\
NAP & cls3.id106 & SAT & 7596.1 & SAT & 6.5 & 6.4 \\
NAP & cls3.id109 & SAT & 2876.1 & SAT & 43.5 & 43.3 \\
NAP & cls2.id98 & SAT & 5349.1 & SAT & 736.1 & 735.9 \\
NAP & cls0.id24 & TO & 14400.0 & SAT & 15.3 & 15.2 \\
NAP & cls4.id195 & SAT & 0.5 & SAT & 29.6 & 29.5 \\
NAP & cls3.id119 & SAT & 510.2 & SAT & 447.3 & 447.2 \\
NAP & cls0.id1 & SAT & 0.4 & SAT & 3.1 & 3.0 \\
NAP & cls0.id55 & TO & 14400.0 & SAT & 1.2 & 1.0 \\
NAP & cls0.id63 & TO & 14400.0 & SAT & 120.7 & 120.6 \\
NAP & cls3.id155 & TO & 14400.0 & SAT & 1.8 & 1.7 \\
NAP & cls2.id103 & TO & 14400.0 & TO & 14400.0 & 1033.7 \\
NAP & cls0.id67 & TO & 14400.0 & SAT & 166.7 & 166.6 \\
NAP & cls0.id75 & TO & 14400.0 & SAT & 29.4 & 29.3 \\
NAP & cls4.id175 & TO & 14400.0 & SAT & 12.5 & 12.4 \\
NAP & cls0.id14 & SAT & 0.7 & SAT & 20.0 & 19.9 \\
NAP & cls3.id112 & SAT & 1552.2 & SAT & 454.5 & 454.4 \\
NAP & cls0.id30 & SAT & 71.3 & SAT & 46.0 & 45.9 \\
NAP & cls0.id39 & SAT & 539.1 & SAT & 91.0 & 90.9 \\
NAP & cls3.id133 & TO & 14400.0 & TO & 14400.0 & 1200.6 \\
NAP & cls4.id224 & TO & 14400.0 & SAT & 582.9 & 582.8 \\
NAP & cls4.id192 & TO & 14400.0 & SAT & 11.4 & 11.3 \\
NAP & cls0.id76 & TO & 14400.0 & SAT & 138.0 & 137.8 \\
NAP & cls2.id101 & SAT & 939.8 & TO & 14400.0 & 937.4 \\
NAP & cls3.id121 & TO & 14400.0 & SAT & 72.0 & 71.9 \\
NAP & cls1.id78 & SAT & 192.9 & TO & 14400.0 & 821.1 \\
NAP & cls3.id151 & TO & 14400.0 & SAT & 508.5 & 508.3 \\
NAP & cls0.id60 & TO & 14400.0 & SAT & 135.7 & 135.6 \\
NAP & cls3.id168 & SAT & 3735.4 & SAT & 187.4 & 187.3 \\
NAP & cls4.id191 & TO & 14400.0 & SAT & 90.1 & 89.9 \\
NAP & cls3.id160 & TO & 14400.0 & TO & 14400.0 & 1057.3 \\
NAP & cls3.id139 & TO & 14400.0 & SAT & 11114.9 & 1373.7 \\
NAP & cls0.id16 & SAT & 9178.0 & SAT & 35.9 & 35.8 \\
NAP & cls2.id91 & TO & 14400.0 & SAT & 340.1 & 339.9 \\
NAP & cls4.id229 & TO & 14400.0 & SAT & 11.8 & 11.6 \\
NAP & cls0.id74 & TO & 14400.0 & SAT & 16.8 & 16.6 \\
NAP & cls4.id185 & SAT & 580.5 & SAT & 258.2 & 258.1 \\
NAP & cls2.id94 & TO & 14400.0 & TO & 14400.0 & 1123.8 \\
NAP & cls3.id152 & TO & 14400.0 & SAT & 11854.0 & 1463.2 \\
NAP & cls2.id102 & TO & 14400.0 & SAT & 13.1 & 13.0 \\
NAP & cls3.id149 & SAT & 0.6 & SAT & 65.9 & 65.8 \\
NAP & cls4.id180 & SAT & 517.2 & SAT & 166.7 & 166.5 \\
NAP & cls4.id220 & TO & 14400.0 & SAT & 229.6 & 229.4 \\
NAP & cls3.id162 & TO & 14400.0 & TO & 14400.0 & 1381.4 \\
NAP & cls0.id61 & TO & 14400.0 & SAT & 166.2 & 166.1 \\
NAP & cls1.id79 & SAT & 6.3 & SAT & 1074.1 & 1073.0 \\
NAP & cls3.id142 & SAT & 1976.4 & SAT & 166.9 & 166.8 \\
NAP & cls3.id163 & TO & 14400.0 & SAT & 678.3 & 678.2 \\
NAP & cls3.id136 & TO & 14400.0 & SAT & 181.9 & 181.7 \\
NAP & cls3.id120 & TO & 14400.0 & SAT & 664.2 & 664.0 \\
NAP & cls4.id208 & SAT & 7703.1 & SAT & 10.8 & 10.7 \\
NAP & cls0.id19 & SAT & 11025.8 & SAT & 96.4 & 96.2 \\
NAP & cls3.id124 & SAT & 11145.3 & SAT & 311.3 & 311.2 \\
NAP & cls4.id203 & TO & 14400.0 & TO & 14400.0 & 1077.1 \\
NAP & cls0.id42 & TO & 14400.0 & SAT & 105.6 & 105.5 \\
NAP & cls0.id36 & SAT & 1213.8 & SAT & 117.1 & 117.0 \\
NAP & cls0.id68 & TO & 14400.0 & SAT & 1.1 & 1.0 \\
NAP & cls4.id174 & SAT & 10077.6 & SAT & 9.9 & 9.8 \\
NAP & cls3.id129 & SAT & 5631.8 & SAT & 5462.4 & 1257.5 \\
NAP & cls0.id62 & TO & 14400.0 & SAT & 109.8 & 109.7 \\
NAP & cls0.id56 & SAT & 5127.9 & SAT & 96.4 & 96.2 \\
NAP & cls0.id45 & TO & 14400.0 & SAT & 120.3 & 120.2 \\
NAP & cls0.id37 & TO & 14400.0 & SAT & 16.6 & 16.5 \\
NAP & cls0.id12 & TO & 14400.0 & SAT & 26.4 & 26.2 \\
NAP & cls4.id205 & TO & 14400.0 & SAT & 319.6 & 319.5 \\
NAP & cls4.id223 & SAT & 915.8 & SAT & 1576.1 & 981.9 \\
NAP & cls3.id137 & SAT & 8360.3 & SAT & 150.3 & 150.1 \\
NAP & cls4.id233 & TO & 14400.0 & SAT & 12.2 & 12.0 \\
NAP & cls3.id130 & TO & 14400.0 & SAT & 220.9 & 220.8 \\
NAP & cls4.id190 & TO & 14400.0 & SAT & 223.1 & 222.9 \\
NAP & cls0.id59 & SAT & 2830.0 & SAT & 471.8 & 471.7 \\
NAP & cls3.id122 & SAT & 13830.6 & SAT & 649.9 & 649.7 \\
NAP & cls4.id218 & TO & 14400.0 & SAT & 230.9 & 230.7 \\
NAP & cls3.id158 & TO & 14400.0 & TO & 14400.0 & 1298.3 \\
NAP & cls3.id111 & TO & 14400.0 & SAT & 2.2 & 2.1 \\
NAP & cls0.id2 & TO & 14400.0 & SAT & 10.4 & 10.3 \\
NAP & cls0.id40 & SAT & 4928.6 & SAT & 1.0 & 0.9 \\
NAP & cls0.id48 & SAT & 7429.5 & SAT & 324.5 & 324.3 \\
NAP & cls2.id100 & TO & 14400.0 & SAT & 15.0 & 14.9 \\
NAP & cls4.id206 & TO & 14400.0 & TO & 14400.0 & 880.6 \\
NAP & cls3.id145 & TO & 14400.0 & SAT & 200.3 & 200.2 \\
NAP & cls4.id227 & TO & 14400.0 & TO & 14400.0 & 991.3 \\
NAP & cls3.id157 & TO & 14400.0 & TO & 14400.0 & 1361.0 \\
NAP & cls1.id82 & TO & 14400.0 & SAT & 192.6 & 192.4 \\
NAP & cls3.id117 & SAT & 438.0 & SAT & 206.0 & 205.8 \\
NAP & cls4.id200 & TO & 14400.0 & SAT & 214.6 & 214.4 \\
NAP & cls3.id147 & SAT & 7052.2 & SAT & 0.9 & 0.8 \\
NAP & cls0.id64 & TO & 14400.0 & SAT & 100.6 & 100.5 \\
NAP & cls4.id188 & TO & 14400.0 & TO & 14400.0 & 1034.8 \\
NAP & cls3.id125 & SAT & 0.6 & SAT & 434.5 & 434.4 \\
NAP & cls0.id65 & TO & 14400.0 & SAT & 380.5 & 380.3 \\
NAP & cls4.id202 & TO & 14400.0 & SAT & 6.6 & 6.5 \\
NAP & cls2.id105 & TO & 14400.0 & SAT & 41.0 & 40.9 \\
NAP & cls4.id176 & TO & 14400.0 & SAT & 5.8 & 5.7 \\
NAP & cls4.id199 & TO & 14400.0 & SAT & 11.9 & 11.8 \\
NAP & cls4.id201 & TO & 14400.0 & SAT & 490.2 & 490.1 \\
NAP & cls4.id234 & TO & 14400.0 & TO & 14400.0 & 1417.9 \\
NAP & cls1.id87 & SAT & 4459.2 & SAT & 682.4 & 682.2 \\
NAP & cls0.id53 & TO & 14400.0 & SAT & 88.7 & 88.5 \\
NAP & cls4.id198 & SAT & 8242.8 & SAT & 3799.6 & 1176.4 \\
NAP & cls0.id32 & TO & 14400.0 & SAT & 75.3 & 75.1 \\
NAP & cls4.id184 & TO & 14400.0 & SAT & 9.3 & 9.1 \\
NAP & cls1.id86 & SAT & 3.6 & SAT & 608.6 & 608.4 \\
NAP & cls0.id54 & TO & 14400.0 & SAT & 231.7 & 231.5 \\
NAP & cls3.id135 & TO & 14400.0 & SAT & 10164.9 & 1302.3 \\
NAP & cls4.id178 & TO & 14400.0 & SAT & 256.0 & 255.8 \\
NAP & cls3.id131 & SAT & 11623.1 & SAT & 486.5 & 486.3 \\
NAP & cls3.id113 & SAT & 3436.4 & SAT & 300.8 & 300.7 \\
NAP & cls4.id183 & TO & 14400.0 & SAT & 6.9 & 6.8 \\
NAP & cls4.id177 & TO & 14400.0 & SAT & 6.0 & 5.9 \\
NAP & cls2.id90 & TO & 14400.0 & SAT & 126.7 & 126.6 \\
NAP & cls3.id141 & TO & 14400.0 & TO & 14400.0 & 1378.2 \\
NAP & cls3.id134 & SAT & 11226.8 & SAT & 187.9 & 187.8 \\
NAP & cls4.id226 & TO & 14400.0 & SAT & 218.0 & 217.9 \\
NAP & cls4.id221 & TO & 14400.0 & TO & 14400.0 & 1198.2 \\
NAP & cls3.id114 & TO & 14400.0 & SAT & 0.9 & 0.7 \\
NAP & cls0.id9 & SAT & 690.2 & SAT & 3.6 & 3.5 \\
NAP & cls0.id25 & SAT & 5140.2 & SAT & 27.1 & 26.9 \\
NAP & cls3.id110 & SAT & 4557.6 & SAT & 72.0 & 71.9 \\
NAP & cls0.id13 & TO & 14400.0 & SAT & 24.0 & 23.9 \\
NAP & cls0.id22 & TO & 14400.0 & SAT & 16.5 & 16.4 \\
NAP & cls0.id6 & TO & 14400.0 & SAT & 9.9 & 9.7 \\
NAP & cls3.id164 & SAT & 6979.9 & SAT & 246.9 & 246.7 \\
NAP & cls3.id143 & TO & 14400.0 & SAT & 705.5 & 705.4 \\
NAP & cls4.id179 & TO & 14400.0 & SAT & 6.8 & 6.6 \\
NAP & cls0.id51 & TO & 14400.0 & SAT & 119.7 & 119.6 \\
NAP & cls3.id165 & SAT & 10156.0 & SAT & 68.5 & 68.3 \\
NAP & cls3.id150 & TO & 14400.0 & TO & 14400.0 & 1114.4 \\
NAP & cls3.id171 & TO & 14400.0 & SAT & 179.3 & 179.2 \\
NAP & cls4.id194 & TO & 14400.0 & SAT & 10.7 & 10.6 \\
NAP & cls0.id3 & SAT & 0.3 & SAT & 11.7 & 11.5 \\
NAP & cls4.id193 & TO & 14400.0 & SAT & 226.8 & 226.6 \\
NAP & cls0.id26 & SAT & 794.9 & SAT & 25.8 & 25.6 \\
NAP & cls0.id8 & TO & 14400.0 & SAT & 14.0 & 13.9 \\
NAP & cls4.id197 & TO & 14400.0 & SAT & 302.6 & 302.4 \\
NAP & cls4.id196 & SAT & 2676.1 & SAT & 60.4 & 60.3 \\
NAP & cls0.id35 & SAT & 2305.6 & SAT & 74.1 & 74.0 \\
NAP & cls0.id69 & TO & 14400.0 & SAT & 8919.0 & 1011.7 \\
NAP & cls4.id216 & TO & 14400.0 & SAT & 44.4 & 44.2 \\
NAP & cls0.id44 & SAT & 18.6 & SAT & 1.0 & 0.9 \\
NAP & cls0.id27 & TO & 14400.0 & SAT & 82.0 & 81.9 \\
NAP & cls4.id215 & TO & 14400.0 & TO & 14400.0 & 1090.8 \\
NAP & cls3.id172 & TO & 14400.0 & SAT & 73.1 & 72.9 \\
NAP & cls0.id17 & SAT & 6964.8 & SAT & 33.4 & 33.3 \\
NAP & cls2.id89 & TO & 14400.0 & SAT & 122.2 & 122.1 \\
NAP & cls0.id10 & SAT & 2181.3 & SAT & 72.1 & 72.0 \\
NAP & cls3.id107 & SAT & 335.3 & SAT & 170.3 & 170.1 \\
NAP & cls0.id49 & TO & 14400.0 & SAT & 136.0 & 135.8 \\
NAP & cls4.id217 & TO & 14400.0 & SAT & 37.0 & 36.9 \\
NAP & cls3.id108 & SAT & 1432.7 & SAT & 12.7 & 12.5 \\
NAP & cls4.id222 & TO & 14400.0 & TO & 14400.0 & 923.4 \\
NAP & cls3.id126 & SAT & 478.9 & SAT & 1403.8 & 1097.3 \\
NAP & cls0.id18 & TO & 14400.0 & SAT & 15.7 & 15.6 \\
NAP & cls3.id140 & SAT & 3109.5 & SAT & 597.7 & 597.6 \\
NAP & cls4.id210 & TO & 14400.0 & SAT & 323.2 & 323.0 \\
NAP & cls1.id83 & TO & 14400.0 & SAT & 5523.6 & 1114.2 \\
NAP & cls3.id138 & TO & 14400.0 & SAT & 12538.7 & 1052.4 \\
NAP & cls4.id214 & TO & 14400.0 & SAT & 5.9 & 5.7 \\
NAP & cls0.id73 & TO & 14400.0 & SAT & 45.2 & 45.1 \\
NAP & cls4.id213 & SAT & 2359.6 & SAT & 11.2 & 11.0 \\
NAP & cls0.id47 & TO & 14400.0 & SAT & 147.6 & 147.4 \\
NAP & cls0.id7 & SAT & 0.4 & SAT & 38.7 & 38.6 \\
NAP & cls0.id28 & TO & 14400.0 & SAT & 357.5 & 357.3 \\
NAP & cls0.id58 & SAT & 3853.7 & SAT & 29.6 & 29.4 \\
NAP & cls4.id212 & TO & 14400.0 & SAT & 10.7 & 10.6 \\
NAP & cls0.id70 & TO & 14400.0 & SAT & 62.6 & 62.4 \\
NAP & cls4.id231 & TO & 14400.0 & SAT & 243.4 & 243.3 \\
NAP & cls0.id23 & TO & 14400.0 & SAT & 1.0 & 0.9 \\
NAP & cls3.id170 & TO & 14400.0 & SAT & 216.5 & 216.3 \\
NAP & cls4.id211 & TO & 14400.0 & SAT & 0.9 & 0.8 \\
NAP & cls3.id132 & SAT & 0.5 & SAT & 69.7 & 69.6 \\
NAP & cls3.id128 & SAT & 11560.2 & SAT & 192.3 & 192.2 \\
NAP & cls0.id29 & TO & 14400.0 & SAT & 87.2 & 87.1 \\
NAP & cls1.id84 & SAT & 497.8 & SAT & 76.0 & 75.8 \\
NAP & cls0.id46 & SAT & 12.1 & SAT & 42.8 & 42.7 \\
NAP & cls0.id41 & SAT & 3723.8 & SAT & 109.8 & 109.7 \\
NAP & cls4.id228 & TO & 14400.0 & SAT & 875.4 & 875.3 \\
NAP & cls0.id43 & TO & 14400.0 & SAT & 83.6 & 83.5 \\
NAP & cls3.id167 & TO & 14400.0 & TO & 14400.0 & 1094.2 \\
NAP & cls0.id50 & SAT & 3227.2 & SAT & 378.5 & 378.4 \\
NAP & cls4.id230 & TO & 14400.0 & SAT & 321.8 & 321.7 \\
NAP & cls4.id235 & TO & 14400.0 & SAT & 43.8 & 43.6 \\
\bottomrule
\end{xltabular}
\end{center}

\end{document}